\documentclass[11pt, a4paper, oneside, reqno]{amsart}

\usepackage{xcolor}

\definecolor{darkblue}{rgb}{0.0, 0.0, 0.45}
\definecolor{lightblue}{rgb}{0.94, 0.97, 1.0}  
\definecolor{lightblue2}{rgb}{0.68, 0.85, 0.9}
\definecolor{lightcyan}{rgb}{0.88, 1.0, 1.0}
\definecolor{palepink}{rgb}{0.98, 0.85, 0.87}
\definecolor{Mahogany}{rgb}{0.75, 0.25, 0.0} 
\definecolor{ForestGreen}{rgb}{0.13, 0.55, 0.13}  

\usepackage[colorlinks	= true,
raiselinks	= true,
linkcolor	= darkblue, 
citecolor	= Mahogany,  
urlcolor	= ForestGreen,
pdfauthor	= {Hossein Gorji},
pdftitle	= {},
pdfkeywords	= {},
pdfsubject	= {},
plainpages	= false]{hyperref}

\usepackage{amsmath, amsthm, amssymb, amsfonts}
\usepackage{mathtools, mathrsfs}
\usepackage{enumerate, enumitem}
\usepackage{dsfont}
\usepackage[amssymb, thickqspace]{SIunits}
\usepackage{fancyhdr,mdframed,nicefrac}
\usepackage{epsfig}
\usepackage{graphicx}
\usepackage{float}
\usepackage{caption}
\usepackage{subcaption}
\usepackage{courier}
\usepackage{multirow}
\usepackage{bigstrut}

\allowdisplaybreaks
\date{\today}
\addtolength{\voffset}{0cm} 
\addtolength{\textheight}{1cm} 
\addtolength{\hoffset}{-2cm}
\addtolength{\textwidth}{4cm}

\setlength{\parskip}{1.5mm}
\linespread{1.2}


\makeatletter
\def\@settitle{\begin{center}%
		\baselineskip14\p@\relax
		\normalfont\LARGE\scshape\bfseries
		\@title
	\end{center}%
}

\def\@setauthors{%
  \begingroup
  \def\thanks{\protect\thanks@warning}%
  \trivlist
  \centering\footnotesize \@topsep30\p@\relax
  \advance\@topsep by -\baselineskip
  \item\relax
  \author@andify\authors
  \def\\{\protect\linebreak}%
  \authors%
  \ifx\@empty\contribs
  \else
    ,\penalty-3 \space \@setcontribs
    \@closetoccontribs
  \fi
  \endtrivlist
  \endgroup
}

\makeatother
\makeatletter

\def\subsection{\@startsection{subsection}{2}%
	\z@{.5\linespacing\@plus.7\linespacing}{.5\linespacing}%
	{\normalfont\large\bfseries}}

\def\subsubsection{\@startsection{subsubsection}{3}%
	\z@{.5\linespacing\@plus.7\linespacing}{.5\linespacing}%
	{\normalfont\itshape}}

\usepackage{multirow}
\usepackage{bigstrut}
\usepackage{wrapfig}
\usepackage{caption}

\newtheorem{theorem}{Theorem}[section]

\newtheorem{corollary}[theorem]{Corollary}
\newtheorem{proposition}[theorem]{Proposition}



\usepackage{graphicx}
\usepackage{dcolumn}
\usepackage{bm}

\usepackage[utf8]{inputenc}
\usepackage[T1]{fontenc}
\usepackage{etoolbox}
\usepackage{xcolor}

\usepackage{booktabs}
\usepackage{multirow} 
\usepackage{mathtools}
\usepackage{amsmath, amsthm, amssymb, amsfonts}
\usepackage[english]{babel}
\usepackage{comment}
\usepackage{algorithm}
\usepackage{algpseudocode}
\usepackage{caption}
\usepackage{subcaption}
\usepackage{ragged2e}
\usepackage{hyperref} 
\usepackage{placeins}
\captionsetup[figure]{justification=justified, singlelinecheck=true, width=\textwidth}
\captionsetup[subfigure]{justification=centering} 

\usepackage[utf8]{inputenc} 
\usepackage[T1]{fontenc}    
\usepackage{hyperref}       
\usepackage{url}            
\usepackage{booktabs}       
\usepackage{amsfonts}       
\usepackage{nicefrac}       
\usepackage{microtype}      
\usepackage{amsmath}

\usepackage{mathrsfs}  
\usepackage{bm}  
\usepackage{comment}
\usepackage{mathtools}
\usepackage{cleveref}
\usepackage{placeins}
\usepackage{amsopn}

\usepackage{algorithm}
\usepackage{algpseudocode}


\newtheorem*{informaltheorem}{Informal Theorem}

\title[Pseudo-Markovian Alternative to Mori-Zwanzig]{Surrogate Trajectories Along Probability Flows: \\ Pseudo-Markovian Alternative to Mori–Zwanzig}
\author{Noé Stauffer, Hossein Gorji and Ivan Lunati}
\thanks{
Emails: \href{mailto:noe.stauffer@empa.ch}{\texttt{noe.stauffer@empa.ch}}, 
\href{mailto:mohammadhossein.gorji@empa.ch}{\texttt{mohammadhossein.gorji@empa.ch}}, 
\href{mailto:ivan.lunati@empa.ch}{\texttt{ivan.lunati@empa.ch}}.
Noé Stauffer: Laboratory for Computational Engineering, Empa, Dübendorf, Switzerland and CSQI Chair, École Polytechnique Fédérale de Lausanne, Lausanne, Switzerland. Hossein Gorji: Laboratory for Computational Engineering, Empa, Dübendorf, Switzerland. Ivan Lunati: Laboratory for Computational Engineering, Empa, Dübendorf, Switzerland. }

\begin{document}
\maketitle

\begin{abstract}
Model reduction techniques have emerged as a powerful paradigm across different fronts of scientific computing. Despite their success, the provided tools and methodologies remain limited if high-dimensional dynamical systems subject to initial uncertainty and/or stochastic noise are encountered; in particular if rare events are of interest. We address this open challenge by borrowing ideas from Mori-Zwanzig formalism and Chorin's optimal prediction method. The novelty of our work lies on employing time-dependent optimal projection of the dynamic on a desired set of resolved variables. We show several theoretical and numerical properties of our model reduction approach. In particular, we show that the devised surrogate trajectories are consistent with the probability flow of the full-order system. Furthermore, we identify the measure underlying the projection through polynomial chaos expansion technique. This allows us to efficiently compute the projection even for trajectories that are initiated on low probability events. Moreover, we investigate the introduced model-reduction error of the surrogate trajectories on a standard setup, characterizing the convergence behaviour of the scheme. Several numerical results highlight the computational advantages of the proposed scheme in comparison to Monte-Carlo and optimal prediction method. Through this framework, we demonstrate that by tracking the measure along with the consistent projection of the dynamic we are able to access accurate estimates of different statistics  including observables conditional on a given initial configuration.
\end{abstract}
\section{Introduction}
\subsection{Background}
\noindent Model reduction techniques aim to reduce the computational burden associated with solving dynamical systems with many degrees of freedom. 
They reduce the computational complexity of full-order numerical solutions by approximating them in a lower dimensional subspace while retaining adequate accuracy. These methods differ in how the reduced space and its basis are constructed, and may be equipped with corresponding error estimates. Many of these techniques were built upon and extend the foundational works on Proper Orthogonal Decompositions (POD) \cite{POD, rathinam,rowley}, Krylov-based methods \cite{BAI20029,SCHMID_2010} and (certified) Reduced Basis Methods (RBM) \cite{quarteroni,hesthaven2015model}. \\ \ \\
These approaches rely predominantly on linear projection operators, while extensions to nonlinear projections have beome an active area of research, driven by advances in data-driven and machine-learning methodologies \cite{qian2019transform,qian2022reduced,willcox2024distributed,lee2020model}, with notable recent progresses reported in \cite{schwerdtner2024greedy,otto2024role,buchfink2024model}. Most classical frameworks focus on systems of Ordinary Differential Equations (ODEs) resulting from the discretization of Partial Differential Equations (PDEs), particularly when solutions are sought for varying parameters and/or initial conditions rather than for fixed values.  \\ \ \\ 
Despite the significant progress, conventional model reduction techniques are not, in their original form, well suited for stochastic settings, as parameters and initial conditions are typically not treated as random variables. The body of work on model reduction for Stochastic Differential Equations (SDEs) is even more limited than for ODEs, due to the infinite-dimensional stochasticity of Brownian motion. \\ \ \\ 
In general, three main strategies exist to compute the probability flow of stochastic systems, described either by the Liouville equation \cite{Evans_Morriss_2008} for ODEs with uncertain initial condition, or by the Fokker-Planck (FP) equation \cite{risken1989fokker} for SDEs. 
The first strategy employs Partial Differential Equation (PDE) solvers to obtain full-order numerical solution to these equations, but it is limited to low-dimensional systems due to the curse-of-dimensionality. 
The second relies on Monte-Carlo (MC) sampling \cite{kloeden2011stochastic,rogers1996diffusions} to alleviate the curse-of-dimensionality; while convergence is often dimension-independent, MC methods suffer from prohibitively slow convergence when rare events are of interest \cite{budhiraja2019analysis}. The third strategy uses reduction methods originating from statistical physics \cite{gallavotti2013statistical,ellis2012entropy}, which adopt a statistical perspective by considering ensembles of micro-states. In projection operator techniques \cite{grabert1982proj,_ttinger_2005,Vrugt_2020}, the infinite-dimensional dynamics of the Liouville equation is projected onto a Hilbert space spanned by a set of relevant observables, for which moment transport equations and statistical properties are derived under certain closure assumptions. Even with a suitable closure, the resulting moment transport equations cannot provide statistics conditional on specific initial configurations. In other words, they do not yield trajectories for the observables.  \\ \ \\
In a series of studies \cite{chorin2000MZ,CHORIN2009optimal_prediction_memory}, Chorin and co-workers  bridged statistical physics and model reduction by devising a framework for ODE systems with random initial conditions, offering a path to construct trajectories for the observables. The approach is built on the Mori-Zwanzig (MZ) formalism \cite{Evans_Morriss_2008,gouasmi2017memory,Givon2005orthogdyn} originating from statistical physics \cite{Evans_Morriss_2008}.   
In this framework, the reduced space is specified by the choice of variables of interest, referred to as "resolved variables". The evolution of these variables is described by projecting the full dynamics onto the reduced space they span. The resulting reduced model is formally expressed, though not in closed form, by the generalized Langevin equation (GLE). 
Defining the conditional expectation as the projection operator yields first-order optimal prediction, which can be further extended to incorporate the memory kernel \cite{chorin2000MZ,CHORIN2009optimal_prediction_memory}.  
Under specific assumptions on the unclosed terms, this framework allows access to the statistics of trajectories for the resolved variables, conditional on their initial conditions.
Building on this approach and methods for its practical implementation, various applications have been explored \cite{WANG2020109402,LIN2021109864,jeong2024diffgledifferentiablecoarsegraineddynamics,Parish}.
However, the bottleneck of the formalism lies in the computation of the memory integral, whose complexity is similar to that of the original high-dimensional dynamics and is generally intractable \cite{gouasmi2017memory}, often necessitating crude approximations.  \\ \ \\ 
The absence of trajectories in moment based model reductions, the inefficiency of Monte-Carlo methods for rare events, and the computational intractability of the MZ memory term constitute central shortcomings of current reduction techniques. To address these limitations, we propose a reduction framework, applicable to both ODEs and SDEs, based on a time-dependent projection operator defined by the conditional expectation with respect to the current configuration of the resolved variables. This choice eliminates the memory integral while ensuring that trajectories of the reduced system remain consistent with the marginal probability flow of the full dynamics \cite{gyongy1986mimicking, brunick2013mimicking}. Our method combines the interpretability of projection operator techniques with the practicality of trajectory-based approaches, providing a concrete workflow for applying MZ-inspired reductions to high-dimensional stochastic systems.
\subsection{Our contribution and organization}
\noindent This paper directly addresses the following desiderata, which current model reduction techniques fail to satisfy simultaneously.
\begin{enumerate}
\item {\bf Trajectory construction.} Most projection-based reductions of stochastic systems provide only moment equations and lack trajectories for observables, yet constructing trajectories is essential for accessing statistics conditional on initial states.
\item {\bf Computational tractability.} Although MZ formalism offers a principled reduction framework, its memory integral is generally intractable, preventing practical applications to high-dimensional systems.
\item {\bf Rare-event flexibility.} Convergence of Monte-Carlo estimators is prohibitively slow when conditioning on low-probability initial states, limiting their usefulness for rare-event analysis.
\item {\bf Extension to SDEs.}  Many reduction techniques are designed for deterministic ODEs and their extension to SDEs with drift–diffusion dynamics is not straightforward.
\end{enumerate}
We overcome these limitations by introducing a model reduction framework applicable to both ODEs and SDEs, built upon a time-dependent projection operator defined by conditional expectations with respect to the current probability measure. Our contributions can be summarized as follows.
\begin{enumerate}
\item {\bf Surrogate trajectories.} We construct trajectories for resolved variables that are exact in law, ensuring consistency with the marginal probability flow of the full dynamics. 
\item {\bf Optimal projection without memory.} Our conditional expectation operator is optimal in $L^2$ and eliminates the memory term of the MZ formalism, thereby rendering the reduction computationally tractable.
\item {\bf Accessible rare events.} By combining Polynomial Chaos (PC) expansions with conditional expectations, our framework efficiently computes reduced dynamics even when the resolved variables lie in regions of vanishing probability.
\item {\bf General applicability.} Our framework extends from deterministic ODEs with uncertain initial data to SDEs and, from subsets of state variables to general observables. Once the probability flow of the full system is estimated, surrogate trajectories for arbitrary resolved variables can be constructed at negligible online cost.
\end{enumerate}
The remainder of the paper is organized  as follows.
\begin{enumerate}
\item Section \ref{sec:notation and main problem} introduces the problem setting, notation, and formulates the main research question.
\item Section \ref{sec:main idea} presents the central idea underlying our reduction technique. 
\item Section \ref{sec:optimal proba consitstent reduced dynamics} develops and justifies the theoretical results and their connection to the MZ formalism.
\item  Section \ref{sec:numerics} presents the numerical framework, including the use of PC expansions to solve the probability flow in the offline phase and the computation of conditional expectations in the online phase.
\item Section \ref{sec:numerical experiments} reports numerical experiments validating our approach, empirically demonstrating  the convergence behaviour, and illustrating efficiency across different regimes, including rare events. 
\item Section \ref{sec:conclusions}, summarizes the main findings, discusses the limitations, and outlines possible directions for future work. 
\end{enumerate}
\section{Preliminaries}\label{sec:notation and main problem}
\subsection{Notation}
\noindent Throughout the paper, we adopt the following conventions. Coordinates in the state space are denoted by lowercase letters, e.g., $x$, while random variables are written in uppercase, e.g., $X$, unless stated otherwise. A random variable $X$ with law (probability distribution) $\mu$ is an element of $L^2(\Omega,\mathcal{A},\mathbb{P})$, the Hilbert space of square-integrable random variables on the probability space $(\Omega,\mathcal{A},\mathbb{P})$. A stochastic process is a family $\{X_t\}_{t\in\mathcal{T}}$ of random variables indexed by time, whose marginal laws $\{\mu_t\}_{t\in\mathcal{T}}$ define the probability flow. When $\mu_t$ is absolutely continuous with respect to the Lebesgue measure, we denote its probability density by $\rho_t$, so that $d\mu_t(x)=\rho_t(x)\,dx$. We denote by $L^2(\mathbb{R}^d,\mu)$ the space of measurable functions on $\mathbb{R}^d$ that are square-integrable with respect to $\mu$. \\ \ \\
As polynomial bases will be employed later to represent random variables (e.g., through PC expansions) we introduce the notation of the multi-index  $a=(a_1,\ldots,a_n)\in\mathbb{N}^n$, which has length $|a|=\sum_{i=1}^n a_i$ and defines monomials $x^a=\prod_{i=1}^n x_i^{a_i}$. For an integer $r\ge0$, we denote by $\mathcal{I}_r^n=\{a\in\mathbb{N}^n:|a|\le r\}$ the set of all multi-indices of degree at most $r$, so that a polynomial of degree $\le r$ can be written as $\sum_{a\in \mathcal{I}_r^n} c_a x^a$ with coefficients $c_a\in\mathbb{R}$. \\ \ \\
Given a random variable $X$ and a sub-$\sigma$-algebra $\mathcal{G}\subseteq\mathcal{A}$, we denote by $\mathbb{E}[X|\mathcal{G}]$ the conditional expectation, which represents the orthogonal projection of $X$ onto the subspace of $\mathcal{G}$-measurable random variables. When conditioning on another random variable $Y$, we write $\mathbb{E}[X|Y]$ as shorthand for $\mathbb{E}[X|\sigma(Y)]$, where $\sigma(Y)$ is the smallest $\sigma$-algebra with respect to which $Y$ is measurable. For a state vector $X\in\Gamma\subseteq\mathbb{R}^N$ with law $\mu$, and a measurable map $A:\Gamma\to\mathbb{R}^m$ defining the resolved variables 
$\hat X=A(X)$,
the law of $\hat X$ is the pushforward $\hat{\mu}=A_\#\mu$. The conditional expectation $\mathbb{E}[g(X)|\hat X]$ is then the orthogonal projection of $g\in L^2(\Gamma,\mu)$ onto $L^2(\mathbb{R}^m,\hat\mu)$. We denote this projection by $P[g](\hat X)$ and its complement by $Q[g](\hat X)$. We further introduce a {time-dependent projection operator} $P_t$, defined as the conditional expectation with respect to the current probability law $\mu_t$, i.e. 
\begin{eqnarray}
P_t g &=& \mathbb{E}_{\mu_t}[\,g(X_t)\mid \hat{X}_t\,]. \nonumber
\end{eqnarray}

\noindent For dynamical systems, we consider both deterministic and stochastic cases.
Let $(\Omega, \mathcal{F},(\mathcal{F}_t)_{t\geq 0},P)$ be a filtered probability space, $\Gamma \subset\mathbb{R}^N$ the phase space and the state vector $X\in\Gamma\subseteq\mathbb{R}^N$.
Consider the deterministic dynamics,
\begin{eqnarray}\label{eq:ode assumption}
    dX_t = b(X_t,t)dt, \qquad    X_0\sim \mu_0,
\end{eqnarray}
with flow map $X_t=\Phi(X_0,t)$ and generator (Liouville operator) $L=\sum_{i=1}^N b_i \,\partial_{x_i}$. For such systems, we denote by $e^{tL}$ the semigroup of operators generated by $L$, so that $u(t)=e^{tL}u(0)$ solves the corresponding evolution equation.
A stochastic system is modeled by the $\Gamma$-valued stochastic process $\{X_t\}_{t\geq 0}, \ X_t\in L^2(\Omega;\mathbb{R}^N)$, governed by the Itô SDE,
\begin{eqnarray}\label{eq:sde assumption}
    dX_t = b(X_t,t)dt + \sigma(X_t,t)dW_t,\qquad    X_0\sim \mu_0,
\end{eqnarray}
with the drift vector $b:\Gamma\times\mathbb{R}\rightarrow\mathbb{R}^N$, and where the noise term arises from the $q$-dimensional Wiener process $W_t$ with independent components and the $N\times q$ matrix of noise coefficients $\sigma:\Gamma\times\mathbb{R}\rightarrow \mathbb{R}^{N\times q}$. In this stochastic case, the associated generator is
\begin{eqnarray}
    L=\sum_i b_i \,\partial_{x_i} + \sum_{i,j} D_{ij}\,\partial_{x_i}\partial_{x_j}, \qquad D=\tfrac12 \sigma\sigma^\top .
\end{eqnarray}
We also consider the corresponding probability flows,
\begin{equation}\label{eq:mut}
    \partial_t\mu_t=\mathcal{L}\mu_t,
\end{equation}
where $\mathcal{L}=L^*$ consists of the Kolmogorov forward operator $\mathcal{L}=-\sum_{i=1}^N\partial_{x_i}b_i$ or $\mathcal{L}=-\sum_{i=1}^N\partial_{x_i}b_i+\sum_{i=1}^N\partial_{x_i}\partial_{x_j}D_{ij}$, respectively, and $L^*$ denotes the adjoint of the generator $L$ of either process.
We define the resolved variables $\hat{X}:=A(X)=\{A_i(X)\}_{i=\{1,\dots,m\}}\in\hat{\Gamma}, \ A_i:\Gamma\to\hat{\Gamma}_i\subseteq\mathbb{R}$, where $\hat{\Gamma}=\hat{\Gamma}_1\times\dots\times\hat{\Gamma}_m$.
\subsection{Main problem}
\noindent Consider the high-dimensional deterministic system with uncertain initial condition
\begin{eqnarray} \label{eq:ode1}
\dot{X}_t &=& b(X_t,t), \ \  \qquad X(0)=X_0 \sim \mu_0, \quad X\in\Gamma\subseteq\mathbb{R}^N.
\end{eqnarray}
We are interested in the evolution of a set of resolved variables $\,\hat{X}\in\hat{\Gamma}\subseteq\mathbb{R}^m$ ($m\ll N$), with the full state decomposed as $X=(\hat{X},\hat{X}^{\perp})$. The central question is: { How can one construct surrogate trajectories for $\mathbb{E}[\hat{X}_t|\hat{X}_0]$ that are exact in law, i.e. consistent with the marginal probability flow $\hat{\mu}_t$ of the resolved variables?}
Classical approaches face following limitations.
\begin{itemize}
\item The MZ formalism provides an exact reduction, but introduces an intractable memory integral that requires knowledge of the full system.
\item Monte Carlo sampling is exact in law but converges slowly, and becomes impractical when conditioning on rare events. 
\item Solving the Liouville or Fokker–Planck equations yields the probability flow, but does not provide trajectories or multi-time statistics.
\end{itemize}
As a motivating example, consider Hamiltonian molecular dynamics: although the full system may involve thousands of degrees of freedom, one is often interested only in a few relevant or collective coordinates. Similarly, in network dynamics, attention may focus on the evolution of a handful of hub nodes rather than describing all peripheral components. In both cases, reduced models that remain consistent in law with the marginal statistics of the selected variables are crucial for efficient prediction and analysis. \\ \ \\
Our goal is to design a framework that overcomes these obstacles and extends naturally from deterministic ODEs to stochastic drift–diffusion SDEs
\begin{eqnarray}\label{eq:sde1}
dX_t &=& b(X_t,t)\,dt + \sigma(X_t,t)\,dW_t, \qquad X_0 \sim \mu_0,
\end{eqnarray}
while preserving consistency in law for the surrogate dynamics of $\hat{X}_t$.
\section{Main idea}\label{sec:main idea}
\noindent  
To construct the reduced dynamics, the key ingredient is to use the {time-dependent projection operator}
\begin{eqnarray}
\label{eq:pro-time-dependent}
(P_t g)(\hat{X}_t) &=& \mathbb{E}_{\mu_t}[\,g(X)\mid \hat{X}=\hat{X}_t\,], \qquad g\in L^2(\Gamma,\mu_t),
\end{eqnarray}
defined as the conditional expectation with respect to the current probability measure $\mu_t$ and the current state of $\hat{X}_t$. This operator provides an optimal $L^2$-projection at each time and evolves adaptively along the probability flow, as illustrated in Figure \ref{fig:idea}. \\ \ \\
Applying $P_t$ to the drift of the resolved variables, $\hat{b}(X,t)$, yields the reduced dynamics
\begin{eqnarray}
\frac{d}{dt}\hat{X}_t&=& \mathbb{E}_{\mu_t}[\,\hat{b}(X,t)\mid \hat{X}=\hat{X}_t\,], \qquad \hat{X}_0\sim \hat{\mu}_0.
\end{eqnarray}
The resulting surrogate trajectories evolve consistently with the marginal law $\hat{\mu}_t$ of $\hat{X}_t$, and are therefore exact in distribution.  This property is established formally in Section \ref{sec:optimal proba consitstent reduced dynamics}. \\ \ \\
In contrast to the MZ formalism, our approach eliminates the intractable memory integral by continuously updating both the probability measure and the reference configuration, thereby ensuring that the random force remains orthogonal to the resolved space by construction and that the reduced dynamics are closed without additional assumptions (see Section \ref{sec:MZ}).
\begin{figure}
\includegraphics[width=\columnwidth]{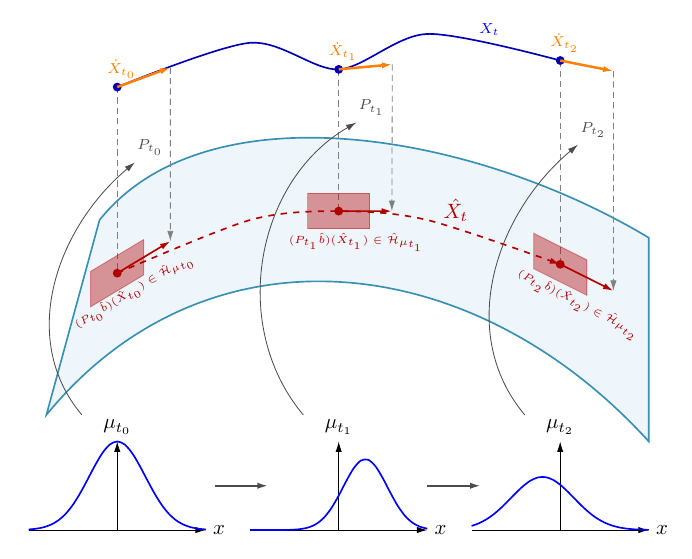}%
\caption{\label{fig:idea} \justifying Surrogate trajectory of the resolved variables $\hat{X}$ along the probability flow of the full-order probability measure $\mu_t$.}%
\end{figure}
\\ \ \\ In practice, the probability law $\mu_t$ is not known explicitly, which raises the central challenge of computing the conditional expectations that define the reduced dynamics. Intuitively, it is often easier to approximate the evolution of the probability measure than that of individual trajectories: while sample paths may fluctuate irregularly, the associated measures typically evolve smoothly in time. To implement the framework, we approximate the probability flow $\mu_t$ using a Polynomial Chaos (PC) expansion of the state
\begin{eqnarray}
X_t &\approx& \sum_{\alpha} c_{\alpha,t} H_\alpha(\xi),
\end{eqnarray}
where, e.g. $\xi$ is a standard Gaussian random variable and $\{H_\alpha\}$ Hermite polynomials. The PC coefficients $c_{\alpha,t}$ evolve deterministically, resulting in an efficient representation of $\mu_t$. Conditional expectations are then computed by polynomial regression on the resolved variables, avoiding costly Monte Carlo sampling in low-probability regions. \\ \ \\
In summary, our method combines the following.
\begin{enumerate}
\item  A time-dependent optimal projection ensuring consistency in law.
\item Efficient computation of the probability flow via PC expansion. 
\item Affordable estimation of conditional expectation through polynomial regression. 
\end{enumerate}
The resulting workflow constitutes a trajectory-based reduction framework that retains the interpretability of projection methods while remaining  tractable for both ODEs and SDEs. Before establishing the theoretical results of this framework in Section \ref{sec:optimal proba consitstent reduced dynamics}, we present a detailed comparison with first-order optimal prediction and the MZ formalism.
\section{Motivation and connection to the Mori–Zwanzig formalism}\label{sec:MZ}
\subsection{Pseudo-Markovian decomposition}
\noindent The starting point of our framework is the following key observation (see Theorem~\ref{thm:pseudo-mark} for its formal presentation). 
\begin{informaltheorem}[Pseudo-Markovian decomposition]
Consider a high-dimensional system $\dot{X}_t=b(X_t)$ with probability law $\mu_t$, and decompose the state  into resolved, $\hat{X}$, and unresolved variables, $\hat{X}^{\perp}$.  
If the drift $b$ is projected onto functions of the resolved variables using the conditional expectation with respect to the current measure $\mu_t$, then the resolved dynamics can be written as
\[
\dot{\hat{X}}_t \;=\; \mathbb{E}_{\mu_t}[\,{b}(X)\mid \hat{X}_t\,] \;+\; \zeta_t,
\]
where $\zeta_t$ is a noise term orthogonal to the resolved space.  
\end{informaltheorem}
\noindent In other words, the reduced dynamics decomposes into a projected drift plus an orthogonal fluctuation. This decomposition is {pseudo-Markovian}: it evolves forward in time using only the current state of the resolved variables, yet it remains exact, because the time-dependent noise captures the effect of unresolved degrees of freedom.   The formal version, with assumptions and convergence guarantees, is presented in Section \ref{sec:optimal proba consitstent reduced dynamics}. 
\subsection{Link to the Mori–Zwanzig formalism}
\noindent The above decomposition is closely related to the MZ framework. In classical MZ \cite{chorin2000MZ,CHORIN2009optimal_prediction_memory,gouasmi2017memory,Givon2005orthogdyn,LIN2021109864}, the evolution of the resolved variables $\hat{X}$ is expressed through a generalized Langevin equation (GLE)
\begin{equation}\label{eq:MZ}
\frac{d}{dt} e^{tL}\hat{X} = e^{tL}Pb(\hat{X}) 
+ \int_0^t e^{(t-s)L}PL e^{sQL} Q b(\hat{X})\,ds
+ e^{tQL}Q b(\hat{X}), \qquad \textrm{ with   } Q=1-P,
\end{equation}
which comprises a Markovian term, a memory integral, and a random force orthogonal to the resolved space. While exact, the memory integral requires solving orthogonal dynamics and is intractable in practice. 
\subsection{First-order optimal prediction}\label{sec:FOP}  
\noindent Chorin and co-workers \cite{chorin2000MZ} proposed a practical closure by projecting with respect to the \emph{initial} distribution $\mu_0$, i.e. $P g=\mathbb{E}_{\mu_0}[g\mid \hat{X}_0]$, and neglecting the memory integral. This yields explicit ODEs for $\mathbb{E}[\hat{X}_t\mid \hat{X}_0]$, often called the first-order optimal prediction. While effective over short times, this approximation relies on the assumption that conditional expectations at time $t$ can be replaced by those at $t=0$, which limits its accuracy  at later times. 
\subsection{Our construct}  
\noindent Time-dependent projections have also been considered by Grabert \cite{grabert1982proj} and other authors in the physics literature \cite{_ttinger_2005,Vrugt_2020}, but in those settings the projection remains tied to the initial configuration, causing the orthogonality of the noise to be lost.  \\ \ \\ 
In contrast, our conditional-expectation projection $P_t$ is defined with respect to the {current law} $\mu_t$ and the {current resolved configuration} $\hat{X}_t$. This ensures both a vanishing memory integral  and an orthogonal noise term at all times, by virtue of the optimality of conditional expectation.  The resulting reduced dynamics,
\begin{eqnarray}\label{eq:our construct}
\dot{\hat{X}}_t = \mathbb{E}_{\mu_t}[\,{b}(X)\mid \hat{X}_t\,] +\zeta_t, 
\qquad \mathbb{E}_{\mu_t}[\,\zeta_t\mid \hat{X}_t\,]=0,
\end{eqnarray}
is therefore memory-free, exact in distribution, and consistent with the marginal probability law of the resolved variables. 
\\ \ \\
 This provides the conceptual foundation for our probability-consistent reduced dynamics, which can be viewed as a probability-adapted, pseudo-Markovian realization of the MZ formalism, in which time dependence replaces the memory integral. 
 Intuitively, our method acts as a continuous restart of the MZ formalism, constantly updating the initial reference configuration and leaving the memory integral negligible (growing as $O(\Delta t^2)$).
 Since the noise term remains unclosed (in both MZ and Pseudo-Markovian decomposition) and requires separate modeling assumption, we adopt the simple choice $\zeta_t=0$.  As shown in Section \ref{sec:optimal proba consitstent reduced dynamics}, although this choice yields approximate trajectories, it still preserves consistency in marginal law, as first emphasized by Gyöngy \cite{gyongy1986mimicking} for a one-dimensional Itô process, and further generalized by Brunick and Shreve \cite{brunick2013mimicking}. In other words, the dynamics of the resolved variables are described by surrogate trajectories $\hat{X}_t$ that remain exact in distribution with respect to the marginal law $\hat{\mu}_t$.
\section{Theoretical results}\label{sec:optimal proba consitstent reduced dynamics}
\subsection{Assumptions}\label{sec:assumptions}
\noindent The theoretical results of this section are subject to the following list of assumptions on the regularity of the drift and noise coefficients as well as on the probability flow $\{\mu_t\}_{0\leq t\leq T}$ Eq.~(\ref{eq:mut}) associated to Eqs.~(\ref{eq:ode assumption})-(\ref{eq:sde assumption}). 
For convenience we consider autonomous dynamical systems but the results can be generalized to non-autonomous dynamical systems provided sufficient regularity and controlled growth. 
For each theoretical result we mention which assumptions from the list below are needed.
\begin{enumerate}
    \item The vector field $b:\Gamma \to\mathbb{R}^N$ is globally Lipschitz on $\Gamma$, and continuous in time. 
    \item The noise coefficient $\sigma:\Gamma\to\mathbb{R}^{N\times N}$ is globally Lipschitz on $\Gamma$, and continuous in time. 
    \item The initial probability measure has finite second moment and absolutely continuous with respect to the Lebesgue measure i.e. $\mu_0\in\mathcal{P}^r_2(\Gamma)$.
    \item The conditional measures $\{\mu_{t|\hat{X}_t=x}\}$ exist and $\forall t\geq 0$ there exists $K_t\geq 0$ such that,
    \begin{eqnarray}
    \mathcal{W}_1(\mu_{t|\hat{X}_t=y},\mu_{0|\hat{X}_t=z})\leq K_t|y-z|, \quad \forall \ y,z \in \hat{\Gamma},
    \end{eqnarray}
    where $\mathcal{W}_1$ is the 1-Wassertein distance. 
\end{enumerate}
Note that the constraint on stability of the conditional measure results in the following. 
\begin{proposition}[Lipschitzness of Conditional Expectation]\label{prop:lipschitz mckean-vlasov}
Given the assumption (4), for any globally Lipschitz function $f:\Gamma\to\mathbb{R}$, there exists $C_t\geq 0, \ \forall t\geq0$ such that,  
\begin{eqnarray}
        \big|\mathbb{E}_{\mu_t}[f(X)\mid \hat{X}=y] - \mathbb{E}_{\mu_t}[f(X)\mid \hat{X}=z]\big|\leq C_t|y-z|,\: \forall \ y,z\in\hat{\Gamma}.
\end{eqnarray}
\end{proposition}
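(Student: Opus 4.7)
The plan is to reduce the statement to a standard Kantorovich–Rubinstein duality argument. The conditional expectations are, by definition, integrals against the conditional measures:
\begin{equation*}
\mathbb{E}_{\mu_t}[f(X)\mid \hat{X}=y] \;=\; \int_\Gamma f(x)\,d\mu_{t\mid \hat{X}_t=y}(x),
\end{equation*}
and the existence of these regular conditional laws is exactly what assumption (4) guarantees. The difference of the two conditional expectations is then a difference of integrals of the \emph{same} Lipschitz function $f$ against two probability measures indexed by $y$ and $z$.

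The key inequality I will invoke is the dual representation of the $1$-Wasserstein distance: for any two probability measures $\nu_1,\nu_2$ with finite first moment on $\Gamma$ and any Lipschitz $f$ with Lipschitz constant $L_f$,
\begin{equation*}
\left| \int f\,d\nu_1 - \int f\,d\nu_2 \right| \;\le\; L_f \, \mathcal{W}_1(\nu_1,\nu_2).
\end{equation*}
This follows directly from Kantorovich–Rubinstein once one notes that $f/L_f$ has Lipschitz constant $1$. I would invoke it here with $\nu_1=\mu_{t\mid \hat{X}_t=y}$ and $\nu_2=\mu_{t\mid \hat{X}_t=z}$ (reading assumption~(4) with the natural correction $\mu_{t\mid \hat{X}_t=z}$ on the right-hand side, since otherwise the inequality is not symmetric in $y,z$ and cannot yield the claim).

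Combining the two pieces, I get
\begin{equation*}
\bigl|\mathbb{E}_{\mu_t}[f(X)\mid \hat{X}=y] - \mathbb{E}_{\mu_t}[f(X)\mid \hat{X}=z]\bigr| \;\le\; L_f \, \mathcal{W}_1(\mu_{t\mid \hat{X}_t=y},\mu_{t\mid \hat{X}_t=z}) \;\le\; L_f K_t \,|y-z|,
\end{equation*}
so the desired constant is $C_t := L_f K_t$. The finiteness of the integrals (so that Kantorovich–Rubinstein applies) is ensured by the finite second moment of $\mu_0$ in assumption (3), which propagates in time to $\mu_t$ under the global Lipschitz bounds on $b$ and $\sigma$ (assumptions (1)–(2)).

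The only genuine subtlety is the measure-theoretic justification that $\mu_{t\mid \hat{X}_t=\cdot}$ is a bona fide regular conditional probability so that the integral representation of the conditional expectation is valid pointwise in $y$, rather than only $\hat{\mu}_t$-almost everywhere. This is precisely what assumption (4) buys us: the Lipschitz dependence in Wasserstein distance forces the map $y\mapsto \mu_{t\mid \hat{X}_t=y}$ to be continuous, hence well defined as a pointwise selection, which is what allows the inequality to be stated for arbitrary $y,z\in\hat{\Gamma}$ rather than up to null sets. I would therefore open the proof with a one-line remark on this point, then execute the two-line chain above to conclude.
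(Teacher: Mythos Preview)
Your proposal is correct and follows exactly the same route as the paper: rewrite the conditional expectations as integrals against the conditional measures, apply Kantorovich--Rubinstein duality to bound the difference by $L_f\,\mathcal{W}_1$, and then invoke assumption~(4) to get $L_fK_t|y-z|$. Your observation that assumption~(4) should read $\mu_{t\mid\hat X_t=z}$ rather than $\mu_{0\mid\hat X_t=z}$ is well spotted; the paper's own proof indeed uses the measure at time $t$ on both sides.
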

\begin{proof}
    See Appendix~\ref{app:condexp}
\end{proof}
\subsection{Pseudo-Markovian dynamics}
\subsubsection{Exact trajectory}
\begin{theorem}[Trajectory Consistency for ODE]\label{thm:pseudo-mark}
    Consider the dynamical system Eq.~(\ref{eq:ode assumption}) and the choice of resolved variables $\hat{X}:=X_i$ for some $i\in \{1,...,N\}$. Let $\hat{b}:=b_i$ and $\hat{X}^0=X_i(t_0)$. Given assumptions (1), (3), we have the following.
    \begin{enumerate}
        \item Under suitable choice of $\zeta^n$ and for small $\Delta t>0$, the numerical scheme 
        \begin{eqnarray}\label{eq:scheme proposition}
            \hat{X}^{n+1} = \hat{X}^{n} +  \mathbb{E}_{\mu_t}[\hat{b}(X)\mid\hat{X}^n]\Delta t  + \zeta^n\Delta t, 
        \end{eqnarray}
        converges in the following sense: $\mathbb{E}[|\hat{X}^{n}-X_i(t_n)|^2]=O(\Delta t^2)$, where $X_i(t_n)$ is the exact solution at time $t_n$.
        \item The term $\zeta^n$ can be identified as a noise term in the following sense,
        \begin{eqnarray}
        \label{eq:noise-iden}
            \mathbb{E}_{\mu_t}[\zeta^n] = 0, \ 
            \mathbb{E}_{\mu_t}[\zeta^n\mid\hat{X}^n] = 0, \ \ \ \textrm{and} \ \ 
            \mathbb{E}_{\mu_t}\left[\zeta^n\:\mathbb{E}_{\mu_t}[\hat{b}(X)\mid\hat{X}^n]\right] = 0,
        \end{eqnarray}
        $\forall \:t\geq 0$.
    \end{enumerate}
\end{theorem}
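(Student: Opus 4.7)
The plan is to choose $\zeta^n$ precisely so that it captures the projection residual at step $n$, and then to verify the two claims in turn.

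First I would fix the starting point $\hat{X}^{0}=X_i(t_0)$ and define the candidate
\begin{eqnarray}
\zeta^n \;\Let\; b_i(X(t_n)) \;-\; \mathbb{E}_{\mu_{t_n}}\!\big[\hat{b}(X)\mid \hat{X}^n\big],\nonumber
\end{eqnarray}
where $X(t)$ denotes the exact trajectory of the full ODE~\eqref{eq:ode assumption}. The idea is to identify $\hat{X}^n$ with $X_i(t_n)$ at the previous step, so that $\zeta^n$ measures exactly the discrepancy between the true $i$-th component of the drift and its optimal projection onto $\sigma(\hat{X}^n)$. With this choice, the scheme~\eqref{eq:scheme proposition} becomes $\hat X^{n+1}=\hat X^n+b_i(X(t_n))\Delta t$, which is precisely the explicit Euler update of the true coordinate $X_i$.

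For the convergence claim, I would then perform a Taylor expansion of $t\mapsto X_i(t)$ around $t_n$. Using assumption (1), the vector field $b$ is globally Lipschitz, hence $C^1$-regular along the flow, and $X_i(t_{n+1})=X_i(t_n)+b_i(X(t_n))\Delta t + R^n$ with $|R^n|\leq C\Delta t^2$ almost surely (the constant depending only on the Lipschitz constant of $b$ and the second moment of $\mu_0$). Combining with the scheme then gives $|\hat X^{n+1}-X_i(t_{n+1})|=|R^n|=O(\Delta t^2)$, from which $\mathbb{E}[|\hat X^n - X_i(t_n)|^2]=O(\Delta t^2)$ follows in the one-step/local sense claimed in the statement. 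Assumption~(3) guarantees that the second moment of $\mu_t$ is finite for all $t$, so the expectation is well defined.

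For the identification~\eqref{eq:noise-iden} as a noise term, the three properties follow from the basic geometry of conditional expectation in $L^2(\Omega,\mathcal{F},\mathbb{P})$. Since $\hat X^n=X_i(t_n)$ and the law of $X(t_n)$ is $\mu_{t_n}$, we have $\mathbb{E}_{\mu_{t_n}}[b_i(X)\mid \hat X^n]=\mathbb{E}_{\mu_{t_n}}[\hat{b}(X)\mid \hat X^n]$, and hence
\begin{eqnarray}
\mathbb{E}_{\mu_{t_n}}\!\big[\zeta^n\mid \hat X^n\big]\;=\;\mathbb{E}_{\mu_{t_n}}[b_i(X)\mid \hat X^n]-\mathbb{E}_{\mu_{t_n}}[\hat b(X)\mid \hat X^n]\;=\;0. \nonumber
\end{eqnarray}
The zero-mean property follows by applying the tower property, and the third orthogonality relation follows immediately from the second, since $\mathbb{E}_{\mu_{t_n}}[\hat b(X)\mid \hat X^n]$ is $\sigma(\hat X^n)$-measurable and may be pulled out of the inner conditional expectation.

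The main conceptual obstacle is that $\zeta^n$, as defined, is not computable from $\hat X^n$ alone, since it requires the unresolved components of $X(t_n)$; the theorem is therefore not a constructive numerical scheme, but rather a structural identification asserting that any exact one-step Euler update of $X_i$ can be decomposed into a projected drift plus a residual that \emph{behaves like a centered orthogonal noise} with respect to $\sigma(\hat X^n)$ under $\mu_{t_n}$. The only technical subtlety is ensuring that the Taylor remainder is controlled uniformly in $\omega$, which is guaranteed by the global Lipschitz assumption~(1) and the finiteness of the second moment of $\mu_0$ from assumption~(3); higher-order smoothness of $b$ is not needed because the statement concerns $O(\Delta t^2)$ rather than $o(\Delta t^2)$.
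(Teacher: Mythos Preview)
Your overall strategy matches the paper's: define $\zeta^n$ as the projection residual, collapse the scheme to an explicit Euler-type update of $X_i$, and read the noise identities off the $L^2$-orthogonality of conditional expectation. The paper, however, makes one different choice that matters. It sets $\zeta^n := \hat b(X^n) - \mathbb{E}_{\mu_t}[\hat b(X)\mid \hat X^n]$ using the \emph{Euler iterate} $X^n$ of the \emph{full} system (with $X^0=X_0$), not the exact solution $X(t_n)$. This buys an exact identity $\hat X^n = X_i^n$ for all $n$ by a one-line induction, after which the $O(\Delta t^2)$ bound follows directly from the standard global Euler rate $|X^n - X(t_n)| = O(\Delta t)$.

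Your construction instead gives $\hat X^{n+1} = \hat X^n + b_i(X(t_n))\Delta t$, a left-Riemann sum of the exact integrand. This still converges at the right rate for Part~1, but the assertion ``$\hat X^n = X_i(t_n)$'' you invoke for Part~2 fails for every $n\ge 1$: the Riemann sum is not the integral. Since the projector in the scheme is the \emph{function} $\hat x\mapsto \mathbb{E}_{\mu_{t_n}}[\hat b(X)\mid \hat X=\hat x]$ evaluated at $\hat X^n$, the cancellation $\mathbb{E}_{\mu_{t_n}}[\zeta^n\mid \hat X^n]=0$ requires that $\hat X^n$ actually be the $i$-th coordinate of a state distributed (at least approximately) as $\mu_{t_n}$. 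The paper's Euler-based choice keeps $\hat X^n$ exactly equal to the $i$-th coordinate of a full iterate, so the definition of $\zeta^n$ directly yields the second identity; your version would need to promote an approximation to an equality. A minor additional point: your ``almost sure'' bound on the Taylor remainder with a constant depending on the second moment of $\mu_0$ is not right as stated, since the remainder scales with $|X(t)|$ pathwise and is therefore random; the bound only holds after taking expectation.
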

\begin{proof}
    See Appendix.~\ref{proof:pseudomark}.
\end{proof}
\begin{corollary}[Trajectory Consistency for SDE] \label{corol:pseudo-mark-sde} Similar result would hold if $X_t$ is solution of the SDE Eq.~(\ref{eq:sde assumption}) fulfilling assumptions (1)-(3) .
\begin{enumerate}
\item For suitable choices of noise terms $\zeta^n$ and $\eta^n$, the scheme 
 \begin{eqnarray}
            \hat{X}^{n+1} = \hat{X}^{n} +  \mathbb{E}_{\mu_t}[\hat{b}(X)\mid\hat{X}^n]\Delta t+\mathbb{E}_{\mu_t}[\hat{\sigma}(X)|\hat{X}^n] \sqrt{\Delta t}\xi^n+\left(\zeta^n\Delta t+\eta^n \sqrt{\Delta t}\right), 
        \end{eqnarray}
        with $\xi^n$ as the standard normal random number, converges in the following sense: given identical Brownian path for $\hat{X}$ and $X$, $\mathbb{E}[|\hat{X}^n-X_i(t_n)|^2]=O(\sqrt{\Delta t})$. 
\item The terms $\zeta^n$ and $\eta^n$ can be identified as noise terms in the sense of fulfilling Eq.~\eqref{eq:noise-iden}.
\end{enumerate}        
\end{corollary}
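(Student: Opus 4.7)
The plan is to mirror the proof of Theorem~\ref{thm:pseudo-mark} by confronting the Itô expansion of $X_i$ over $[t_n,t_{n+1}]$ with the one-step increment of the proposed scheme, and identifying the pair $(\zeta^n,\eta^n)$ as the resulting residual. Writing the exact increment under assumptions (1)--(3) as
\begin{equation*}
X_i(t_{n+1}) - X_i(t_n) \;=\; \int_{t_n}^{t_{n+1}} b_i(X_s,s)\,ds \;+\; \sum_j \int_{t_n}^{t_{n+1}} \sigma_{ij}(X_s,s)\,dW^j_s,
\end{equation*}
and invoking the standard strong Euler-Maruyama expansion together with the global Lipschitz bounds, one replaces the integrands by their values at $t_n$ up to a remainder $R_n$ that is controlled in $L^2$. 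Pairing the Brownian increment with $\sqrt{\Delta t}\,\xi^n$ (which by hypothesis is driven by the same path), this motivates the identifications
\begin{equation*}
\zeta^n\Delta t \;\Let\; \bigl(b_i(X_{t_n}) - \mathbb{E}_{\mu_t}[\hat{b}(X)\mid\hat{X}^n]\bigr)\Delta t + R_n^{b},
\end{equation*}
\begin{equation*}
\eta^n\sqrt{\Delta t} \;\Let\; \bigl(\sigma_{ij}(X_{t_n}) - \mathbb{E}_{\mu_t}[\hat{\sigma}(X)\mid\hat{X}^n]\bigr)\bigl(W^j_{t_{n+1}}-W^j_{t_n}\bigr) + R_n^{\sigma},
\end{equation*}
with $R_n^{b},R_n^{\sigma}$ absorbing the higher-order Itô remainders.

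Orthogonality \eqref{eq:noise-iden} for $\zeta^n$ then follows exactly as in the ODE case: its leading part is the $L^2(\mu_t)$-residual of $b_i(X_{t_n})$ against its projection onto $\sigma(\hat{X}^n)$, which is by definition orthogonal to every $\hat{X}^n$-measurable quantity, in particular to $\mathbb{E}_{\mu_t}[\hat{b}\mid\hat{X}^n]$. For $\eta^n$ the key extra ingredient is that the Brownian increment $W^j_{t_{n+1}}-W^j_{t_n}$ is independent of $\mathcal{F}_{t_n}$, hence of $\hat{X}^n$ and of both conditional expectations appearing in the scheme. Combining this with the tower property applied to $\sigma_{ij}(X_{t_n}) - \mathbb{E}_{\mu_t}[\hat{\sigma}\mid\hat{X}^n]$ yields $\mathbb{E}_{\mu_t}[\eta^n]=0$, $\mathbb{E}_{\mu_t}[\eta^n\mid\hat{X}^n]=0$, and $L^2$-orthogonality of $\eta^n$ with $\mathbb{E}_{\mu_t}[\hat{b}\mid\hat{X}^n]$ and $\mathbb{E}_{\mu_t}[\hat{\sigma}\mid\hat{X}^n]$ by the same argument, establishing the analogue of \eqref{eq:noise-iden} for both noise contributions.

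For the convergence rate, Itô isometry bounds the $L^2$-size of the one-step noise surplus $\eta^n\sqrt{\Delta t}$ by $\mathrm{Var}\bigl[\hat{\sigma}(X)\mid\hat{X}^n\bigr]\,\Delta t$, while $\zeta^n\Delta t$ contributes at the ODE-type order. Propagating these per-step contributions through the Lipschitz bounds of $b,\sigma$ and of the conditional expectation (Proposition~\ref{prop:lipschitz mckean-vlasov}), and closing the recursion by a discrete Grönwall argument, yields the advertised $\mathbb{E}[|\hat{X}^n - X_i(t_n)|^2]=O(\sqrt{\Delta t})$. The main obstacle compared with the ODE case is precisely $\eta^n$: because the scheme replaces the pathwise diffusion $\sigma_{ij}(X_{t_n})$ by its conditional average, pathwise information about the unresolved variables is irretrievably lost and the rate is genuinely limited by $\mathrm{Var}[\hat{\sigma}\mid\hat{X}^n]$ rather than by a higher Taylor order. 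This is exactly the trajectory-level manifestation of the Gyöngy-Brunick mimicking principle invoked in Section~\ref{sec:MZ}: marginals are matched in law but individual paths are not, which is the strongest that can be asked of a reduction that has collapsed the unresolved degrees of freedom into a conditional average.
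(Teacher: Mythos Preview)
The paper takes a simpler route than you. It fixes $\zeta^n := \hat b(X^n)-\mathbb{E}_{\mu_t}[\hat b(X)\mid\hat X^n]$ and $\eta^n := \bigl(\hat\sigma(X^n)-\mathbb{E}_{\mu_t}[\hat\sigma(X)\mid\hat X^n]\bigr)\xi^n$ with \emph{no} It\^o remainders, where $X^n$ is the full Euler--Maruyama iterate. Substituted back, the pseudo-Markovian update is exactly the $i$-th component of Euler--Maruyama; an induction from $\hat X^0=X_i^0$ gives $\hat X^n=X_i^n$ a.s.\ for every $n$, and the rate follows directly from the known strong order of Euler--Maruyama. No Gr\"onwall recursion and no use of Proposition~\ref{prop:lipschitz mckean-vlasov} (which would require assumption~(4), not among the hypotheses) are needed. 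The identities \eqref{eq:noise-iden} are then immediate from the definition of conditional expectation and, for $\eta^n$, the independence of $\xi^n$ from $\mathcal F_{t_n}$.

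Your route instead compares with the continuous process $X_i(t_n)$ and parks the It\^o remainders $R_n^b,R_n^\sigma$ inside $\zeta^n,\eta^n$. This leaves a genuine gap in part~(2): the drift remainder $R_n^b=\int_{t_n}^{t_{n+1}}\bigl(b_i(X_s)-b_i(X_{t_n})\bigr)\,ds$ generically has nonzero conditional mean given $\hat X^n$, so your $\zeta^n$ fails $\mathbb{E}_{\mu_t}[\zeta^n\mid\hat X^n]=0$; saying the ``leading part'' is orthogonal does not establish \eqref{eq:noise-iden} as stated. Your attribution of the convergence rate to $\mathrm{Var}[\hat\sigma\mid\hat X^n]$ is also off: with the correct $\eta^n$ the conditional averaging of $\sigma$ is exactly compensated and the scheme \emph{is} Euler--Maruyama, so the bottleneck is purely the time discretization. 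The Gy\"ongy--Brunick picture you invoke applies to the surrogate dynamics of Proposition~\ref{prop:stochastic}, where the noise correction is dropped --- not to this corollary, where it is retained.
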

\begin{proof}
    See Appendix.~\ref{proof:pseudomark sde}.
\end{proof}
\noindent \textbf{Remark:} these results can be extended to arbitrary $\mathcal{F}$-measurable smooth observables $A=(A_i)_{1\leq i\leq m}$, $A:\Gamma\to\hat{\Gamma}\in\mathbb{R}^m$ provided $LA$ and $\sum_{i,j=1}^ND_{ij}\frac{\partial A_k \partial A_l}{\partial x_j \partial x_i}$ are Lipschitz.

\subsubsection{Surrogate trajectory}
\begin{proposition}[Marginal Preservation: Deterministic Case]\label{prop:deterministic}
Let $X\in\Gamma\subseteq\mathbb{R}^N$ evolve according to the deterministic dynamics, Eq.~(\ref{eq:ode assumption}), with uncertain initial condition. Assume $b\in \mathcal{H}_t=L^2(\Gamma,\mu_t), \forall t \geq 0$ and let $\hat{b}:=(b_1,\dots,b_m)$. Let surrogate trajectories for the resolved variables $\hat{X}=(X_1,\dots,X_m)\in\mathbb{R}^m, \ m<N$ satisfy the reduced dynamics,
\begin{eqnarray}\label{eq:surrogate dynamics proposition}
    d\hat{X}_t = \mathbb{E}_{\mu_t}[\hat{b}(X)\mid\hat{X} = \hat{X}_t] \ dt, \ \qquad
    \hat{X}_0 \sim \hat{\mu}_0(d\hat{X}),
\end{eqnarray}
where we defined the initial marginal law $\hat{\mu}_0(\hat{X})=\int_{\hat{\Gamma}^\perp}d\mu_0(X)$, $\hat{X}\in\hat{\Gamma}$ and $\Gamma=\hat{\Gamma}\times\hat{\Gamma}^\perp$.
Then, given assumptions (1), (3)-(4), the surrogate trajectories preserve the marginal law $\hat{X}_t\sim\hat{\mu}_t, \forall t\geq 0$. 
Moreover, at each instant $t$, $\mathbb{E}_{\mu_t}[\hat{b}(X)\mid\hat{X} = \hat{X}_t]$ is the optimal projection of the vector field $b$ in $\mathcal{H}_t=L^2(\mathbb{R}^N,\mu_t)$, where $\mu_t$ is the probability measure of the full-order system at time $t$. 
\end{proposition}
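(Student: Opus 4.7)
The plan is to treat the two claims separately. The optimality assertion is essentially a restatement, at each fixed $t$, of the classical characterization of conditional expectation as the orthogonal projection in $L^{2}$: since $\hat{\mu}_{t}=A_{\#}\mu_{t}$ and $L^{2}(\mathbb{R}^{m},\hat{\mu}_{t})$ is isometrically identified with the closed subspace of $\sigma(\hat{X})$-measurable elements of $\mathcal{H}_{t}=L^{2}(\Gamma,\mu_{t})$, the map $\hat{b}(X)\mapsto \mathbb{E}_{\mu_{t}}[\hat{b}(X)\mid\hat{X}]$ is exactly this orthogonal projector. I would simply invoke the Hilbert-space projection theorem, noting that $b\in\mathcal{H}_{t}$ is guaranteed by hypothesis. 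The bulk of the proof is devoted to the marginal-preservation statement.

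For marginal preservation, my plan is a deterministic mimicking argument in the spirit of Gyöngy and Brunick--Shreve. First, I would establish well-posedness of the surrogate ODE~\eqref{eq:surrogate dynamics proposition}. The drift $\hat{b}^{\mathrm{c}}(\hat{x},t)\Let \mathbb{E}_{\mu_{t}}[\hat{b}(X)\mid\hat{X}=\hat{x}]$ is Lipschitz in $\hat{x}$ uniformly on compact time intervals by Proposition~\ref{prop:lipschitz mckean-vlasov} (applied componentwise to $\hat{b}$, which is Lipschitz by assumption~(1)), and continuous in $t$ because $\mu_{t}$ depends continuously on $t$ in the weak sense. Classical Picard--Lindel\"of then produces a unique global flow $\hat{\Phi}_{t}$ for the surrogate equation, and the law $\nu_{t}$ of $\hat{X}_{t}=\hat{\Phi}_{t}(\hat{X}_{0})$ is well-defined with $\nu_{0}=\hat{\mu}_{0}$.

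Next, I would show that both $\nu_{t}$ and $\hat{\mu}_{t}$ solve the same linear continuity equation with the same initial condition. Working with the weak Liouville equation for $\mu_{t}$ and testing against an arbitrary $\phi\in C^{\infty}_{c}(\mathbb{R}^{m})$ depending only on the resolved coordinates $\hat{x}$, the contributions of the unresolved drift components drop out because $\partial_{x_{i}}\phi=0$ for $i>m$, leaving
\begin{equation*}
\frac{d}{dt}\int\phi\,d\hat{\mu}_{t}
=\sum_{i=1}^{m}\int (\partial_{\hat{x}_{i}}\phi)\,b_{i}\,d\mu_{t}
=\sum_{i=1}^{m}\int (\partial_{\hat{x}_{i}}\phi)(\hat{x})\,\hat{b}^{\mathrm{c}}_{i}(\hat{x},t)\,d\hat{\mu}_{t}(\hat{x}),
\end{equation*}
where the second equality is the tower property applied to $\mathbb{E}_{\mu_{t}}[b_{i}\mid\hat{X}]$. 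Hence $\hat{\mu}_{t}$ is a weak solution of $\partial_{t}\eta_{t}+\nabla\!\cdot(\hat{b}^{\mathrm{c}}\eta_{t})=0$ with $\eta_{0}=\hat{\mu}_{0}$. On the other hand, the pushforward $\nu_{t}$ of $\hat{\mu}_{0}$ along the surrogate flow satisfies the same weak continuity equation by the standard characteristics identity. Uniqueness of measure-valued solutions to a linear continuity equation with a Lipschitz, bounded-on-compacts drift (DiPerna--Lions / Ambrosio) yields $\nu_{t}=\hat{\mu}_{t}$ for all $t\geq 0$.

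The main technical obstacle I anticipate is the justification of this uniqueness step and the growth/integrability estimates that feed into it. Because $\Gamma$ may be unbounded and $b$ is only globally Lipschitz (hence linear growth), one needs propagation of second moments of $\mu_{t}$ to ensure that $\hat{b}^{\mathrm{c}}\in L^{1}_{\mathrm{loc}}(d\hat{\mu}_{t}dt)$ with the correct growth; assumption~(3) together with Gr\"onwall on $\mathbb{E}[|X_{t}|^{2}]$ delivers this. A secondary subtlety is verifying that assumption~(4) indeed transfers to the Lipschitz regularity of $\hat{b}^{\mathrm{c}}$ uniformly on $[0,T]$, which amounts to checking that the constant $C_{t}$ in Proposition~\ref{prop:lipschitz mckean-vlasov} is locally bounded in $t$; this follows from continuity of $K_{t}$ implied by the continuous dependence of $\mu_{t}$ on $t$. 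Once these points are handled, the conclusion $\hat{X}_{t}\sim\hat{\mu}_{t}$ follows immediately.
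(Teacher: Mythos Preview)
Your approach is essentially the same as the paper's: both arguments show that the marginal $\hat{\mu}_{t}$ and the law $\nu_{t}$ of the surrogate dynamics are weak solutions of the same continuity equation $\partial_{t}\eta+\nabla\!\cdot(\hat{b}^{\mathrm{c}}\eta)=0$ by testing against $\phi\in C^{\infty}_{c}$ depending only on $\hat{x}$ and applying the tower property, and both invoke the Hilbert-space projection theorem for optimality. Your version is in fact more complete than the paper's, since you explicitly address well-posedness of the surrogate ODE via Proposition~\ref{prop:lipschitz mckean-vlasov} and justify uniqueness of the continuity equation through DiPerna--Lions/Ambrosio, whereas the paper simply asserts equality in law once both densities are shown to be weak solutions.
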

\begin{proof}
    See Appendix.~\ref{app:proof x ode}.
\end{proof}
\begin{proposition}[Marginal Preservation: Stochastic Case]\label{prop:stochastic}
Consider the $\Gamma$-valued stochastic process $\{X_t\}_{t\geq 0}, \ X_t\in L^2(\Omega;\mathbb{R}^N)$, governed by the Itô SDE Eq.~(\ref{eq:sde assumption}). 
Suppose that $b,\sigma \in\mathcal{H}_t=L^2(\mathbb{R}^N,\mu_t),\forall t \geq 0$. Define the resolved variables $\hat{X}:=(X_1,\dots,X_m)\in\mathbb{R}^m, \ m<N$ and let $\hat{b}:=(b_1,\dots,b_m)$.
Then, given assumptions (1)-(4), the surrogate trajectories satisfying the reduced dynamics,
\begin{eqnarray}\label{eq:optimal prob consistent sde}
    d\hat{X}_t = \mathbb{E}_{\mu_t}[\hat{b}(X)\mid\hat{X}=\hat{X}_t]dt + \hat{\sigma}(\hat{X}_t)d\hat{W}_t, \quad
    \hat{X}_0 \sim \hat{\mu}_0,
\end{eqnarray}
for the resolved variables $\hat{X}$ preserve the marginal law $\hat{X}_t\sim\hat{\mu}_t, \forall t\geq 0$.
The matrix $\hat{\sigma}\in\mathbb{R}^{m \times q}$ satisfies $\frac{1}{2}\hat{\sigma}\hat{\sigma}^T = \mathbb{E}_{\mu_t}[\hat{D}\mid\hat{X}]$, where $\hat{D}\in\mathbb{R}^{m\times m}, \: \hat{D}_{ij}=D_{ij}, \: i,j\in\{1,\dots,m\}$ and $D\in\mathbb{R}^{N\times N}$ is the diffusion matrix related to the probability flow of $\mu_t$.
$\hat{W_t}$ is a $q$-dimensional Wiener process with independent components.
Moreover, the drift coefficient $b$ and the diffusion matrix $D$ are optimally projected at each instant $t\geq 0$.
\end{proposition}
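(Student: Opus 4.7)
The plan is to adapt the Gyöngy–Brunick–Shreve mimicking argument to the present setting, extending the deterministic proof of Proposition~\ref{prop:deterministic} to the drift–diffusion case. The strategy is to show that the marginal law of the full process and the law of the surrogate process solve the same Kolmogorov forward equation with the same initial datum, and then to invoke uniqueness to identify them.

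First, I would write the Fokker–Planck equation for the full-order measure, $\partial_t \mu_t = -\sum_{i=1}^N \partial_{x_i}(b_i \mu_t) + \sum_{i,j=1}^N \partial_{x_i}\partial_{x_j}(D_{ij}\mu_t)$, and integrate out the unresolved coordinates $\hat{X}^\perp$ over $\hat{\Gamma}^\perp$. Under assumption (3) the boundary terms in $\hat{x}^\perp$ vanish (by integrability of $\mu_t$ and its derivatives at infinity), so all unresolved-direction divergences drop and one is left with
\begin{equation*}
\partial_t \hat{\mu}_t(\hat{x}) = -\sum_{i=1}^m \partial_{\hat{x}_i}\!\Bigl(\mathbb{E}_{\mu_t}[b_i(X)\mid \hat{X}=\hat{x}]\,\hat{\mu}_t(\hat{x})\Bigr) + \sum_{i,j=1}^m \partial_{\hat{x}_i}\partial_{\hat{x}_j}\!\Bigl(\mathbb{E}_{\mu_t}[D_{ij}(X)\mid \hat{X}=\hat{x}]\,\hat{\mu}_t(\hat{x})\Bigr),
\end{equation*}
where I have used the defining identity $\int_{\hat{\Gamma}^\perp} f(x)\mu_t(x)\,d\hat{x}^\perp = \mathbb{E}_{\mu_t}[f(X)\mid \hat{X}=\hat{x}]\,\hat{\mu}_t(\hat{x})$. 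This is precisely the forward equation of an SDE on $\hat{\Gamma}$ with drift $\mathbb{E}_{\mu_t}[\hat{b}\mid \hat{X}]$ and diffusion matrix $\mathbb{E}_{\mu_t}[\hat{D}\mid \hat{X}]$, which matches $\tfrac{1}{2}\hat{\sigma}\hat{\sigma}^T$ by the definition of $\hat{\sigma}$ in the statement.

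Next I would argue that the surrogate SDE \eqref{eq:optimal prob consistent sde} is well-posed. Assumptions (1)–(2) give Lipschitz regularity of $b$ and $\sigma$, and assumption (4) together with Proposition~\ref{prop:lipschitz mckean-vlasov} transfers this regularity through the conditional expectation: both the conditional drift and the conditional diffusion are Lipschitz in $\hat{x}$, with time-dependent but locally bounded constants $C_t$. A square-root map of a continuous symmetric positive semidefinite matrix field is measurable, and under assumption (2) the resulting $\hat{\sigma}(\hat{X}_t)$ retains enough regularity for the classical Itô existence–uniqueness theory (or at least weak existence and uniqueness in law) to apply on $[0,T]$. Consequently the surrogate process has a well-defined law on path space, and its one-dimensional marginals satisfy the same Kolmogorov forward equation derived above.

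Finally I would conclude by a uniqueness argument: the forward equation with Lipschitz coefficients and finite-second-moment initial data $\hat{\mu}_0$ (assumption (3)) admits a unique solution in the class of weakly continuous probability measure-valued curves. Since both $\mathrm{Law}(\hat{X}_t)$ and $\hat{\mu}_t$ solve it with the same initial condition, they coincide for all $t\geq 0$. Optimality of the projection is then a direct consequence of the defining variational property of conditional expectation: $\mathbb{E}_{\mu_t}[\,\cdot\mid \hat{X}\,]$ is the orthogonal projection in $\mathcal{H}_t=L^2(\mathbb{R}^N,\mu_t)$ onto $L^2(\mathbb{R}^m,\hat{\mu}_t)$, applied componentwise to $\hat{b}$ and to the entries of $\hat{D}$. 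The main obstacle I anticipate is the rigorous justification of the vanishing boundary terms and the well-posedness of the surrogate SDE when $\hat{\sigma}$ is only obtained through a matrix square root of a conditional expectation; handling this cleanly may require restricting to weak solutions or invoking the Brunick–Shreve construction directly, which is already flagged in the narrative preceding the proposition.
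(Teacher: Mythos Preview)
Your proposal is correct and follows essentially the same Gy\"ongy-type mimicking argument as the paper: both show that the marginal $\hat{\mu}_t$ of the full process and the law of the surrogate process satisfy the same reduced Kolmogorov forward equation with drift $\mathbb{E}_{\mu_t}[\hat{b}\mid\hat{X}]$ and diffusion $\mathbb{E}_{\mu_t}[\hat{D}\mid\hat{X}]$, and then identify them. The only presentational difference is that the paper works in the weak (test-function) formulation via It\^o's lemma on $g\in C_0^\infty(\hat{\Gamma})$, which avoids your explicit boundary-term discussion, while you are somewhat more careful than the paper in spelling out the well-posedness of the surrogate SDE and the uniqueness argument for the forward equation, both of which the paper leaves implicit.
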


\begin{proof}
See Appendix \ref{app:surrogate sde}.
\end{proof}

\noindent \textbf{Remark:} these results can be extended to arbitrary $\mathcal{F}$-measurable smooth observables $A=(A_i)_{1\leq i\leq m}$, $A:\Gamma\to\hat{\Gamma}\in\mathbb{R}^m$ of finite second moment, i.e. $A_t\in\mathcal{H}_t$.
\section{Numerical scheme}
\label{sec:numerics}
\noindent We now describe the numerical procedure used to implement the proposed pseudo-Markovian reduction. We restrict the presentation to the case where the resolved variables  $\hat{X}$ are a subset of the state components; extension to more general observables is straightforward (see Section \ref{sec:optimal proba consitstent reduced dynamics}). The scheme consists of three main components: (i) the approximation of the evolving probability flow $\mu_t$, 
(ii) the evaluation of conditional expectations, and (iii) the time integration of the reduced dynamics. 
At each step, we emphasize which parts of the scheme rely on heuristic approximations and which are supported by well-established theoretical results. 
\subsection{Approximation of the probability flow}
\noindent The probability flow $\{\mu_t\}$ is in general not available in closed form. To approximate the evolving law $\mu_t$ of the state, we employ a PC expansion which targets the probability distribution rather than the individual state trajectories. 
Specifically, we represent a random variable $Y_t$ with distribution $\mu_t$ as a finite expansion,
\begin{equation}\label{eq:PC expansion}
Y_t \;\approx\; M_t(\xi) \;=\; \sum_{\alpha\in\mathcal{I}_p^d} c_{\alpha}(t) H_\alpha(\xi),
\end{equation}
where $\xi\in\mathbb{R}^d$ denotes a vector of i.i.d. standard Gaussian variables, $\{H_\alpha\}$ are multivariate Hermite polynomials orthonormal in $L^2(\mathbb{R}^d,\mathcal{N}(0,I))$, and $c_\alpha(t)\in\mathbb{R}^N$ are deterministic coefficient vectors.  This choice of Gaussian inputs and Hermite polynomials is made for convenience; all subsequent steps extend directly to other families of Askey polynomials orthogonal with respect to different reference measures.  
The evolving probability law $\mu_t$ is thus approximated by the law of the random map $M_t(\xi)$.  
\subsubsection{Evolution of the coefficients: ODE case}
\noindent If $Y_t$ satisfies the deterministic system
\[
\dot{Y}_t = b(Y_t,t), \qquad Y_0\sim \mu_0,
\]
then substituting the PC representation \eqref{eq:PC expansion} into the dynamics and applying Galerkin projection yields the ODE system that determines the PC coefficients
\begin{equation}\label{eq:coeff evolution ODE}
\dot{c}_{\alpha,i}(t) \;=\; \mathbb{E}\left[\, b_i(M_t(\xi),t) \, H_\alpha(\xi)\,\right],
\qquad i=1,\dots,N,
\end{equation}
where the expectation is taken with respect to the standard Gaussian measure on $\xi$.  
The initial coefficients $\{c_\alpha(0)\}$ are determined by the expansion of the initial condition $Y_0$.
\subsubsection{Evolution of the coefficients: SDE case}
\noindent For the case where $X_t$ solves the Itô SDE
\begin{equation}\label{eq:ito}
dX_t = b(X_t,t)\,dt + \sigma(X_t,t)\,dW_t, \qquad X_0 \sim \mu_0,
\end{equation}
the law $\mu_t$ can, under suitable assumptions \cite{gorjichaos}, be equivalently generated by the McKean–Vlasov ODE
\begin{equation}
dY_t = b(Y_t,t)\,dt - \tfrac{1}{2}\,\sigma^2(Y_t,t) 
\nabla_x \log\!\big(\sigma^2(Y_t,t)\,\rho_Y(y,t)\big)\big|_{y=Y_t}\,dt,
\end{equation}
where $\rho_Y$ is the density of measure induced by $Y_t$.
Inserting the PC expansion \eqref{eq:PC expansion} into this dynamics gives the evolution equations for the coefficients
\begin{align}\label{eq:coeff evolution SDE}
dc_{\alpha,i}(t) &= \mathbb{E}\left[\, b_i(M_t(\xi),t)\, H_\alpha(\xi)\,\right]\,dt \\
&\quad+ \tfrac{1}{2}\,\mathbb{E}\left[\, 
\sigma^2(M_t(\xi),t) 
\sum_{l=1}^N \frac{\partial M_{t,l}^{-1}}{\partial \xi_i} 
\frac{\partial H_\alpha(\xi)}{\partial \xi_l}
\,\right]\,dt, \nonumber
\end{align}
for $i=1,\dots,N$.  
Note that ${\partial M_{t,l}^{-1}}/{\partial \xi_i}$ denotes the $(l,i)$-entry of the inverse Jacobian of the random map $M_t$.  
\subsubsection{Initial conditions}
\noindent If the initial law $\mu_0$ is Gaussian, say $\mu_0=\mathcal{N}(m,\Sigma)$, the coefficients at $t=0$ can be initialized as
\[
c_0 = m, 
\qquad c_j = L_j,\; j=1,\dots,N,
\]
where $L\in\mathbb{R}^{N\times N}$ satisfies $\tfrac{1}{2}L^\top L = \Sigma$.  
For a non-Gaussian initial law $\mu_0$, the coefficients must be computed by projection of the initial random variable $X_0$ onto the PC basis for each multi-index $\alpha$, 
\[
c_\alpha(0) = \mathbb{E}_{\mu_0}\big[\, Y_0 H_\alpha(\xi)\,\big],
\]
where $\xi$ are the canonical random inputs associated with the chosen PC basis (e.g. Gaussian or uniform).  
In practice, the expectations ca be either (i) evaluated analytically, if the density of $Y_0$ is known and integrals are tractable, or (ii) approximated numerically via Monte Carlo or quadrature sampling from $\mu_0$.  
If $\mu_0$ is not compatible with the chosen basis (e.g. non-Gaussian $\mu_0$ with Hermite chaos), a measure transformation or a different PC family (e.g. Askey scheme polynomials orthogonal to $\mu_0$) may be preferable.  
\subsubsection{Remarks}
\begin{itemize}
\item \textbf{Well-established results:} PC expansions converge spectrally for sufficiently smooth solutions, and Gaussian initial conditions admit initial coefficients in closed form.  
\item \textbf{Heuristic choice:} The PC basis is truncated at degree $p$, balancing accuracy and computational cost.  
\item \textbf{Complexity:} A full PC expansion of the $N$-dimensional law requires $\binom{N+p}{p}$ coefficients.  
In our implementation, all coefficients are updated, leading to combinatorial scaling in $N$. 
However, the PC coefficients can be updated entirely in an {offline phase}.  
Once the probability flow $\{\mu_t\}$ is represented in terms of its PC coefficients, conditional expectations of arbitrary observables (and therefore surrogate trajectories for arbitrary choices of resolved variables) can be evaluated in an {online phase} without recomputing the probability flow.  
This separation is a major advantage over the MZ formalism, where the memory integral must be recomputed for each new set of resolved variables.   
Finally, since reduced dynamics only requires conditional expectations with respect to the resolved variables, it is in principle possible to design reduced PC expansions whose cost scales with $m$ (the resolved dimension) rather than $N$ (the full dimension). Developing such reduced-complexity schemes is left for future work. 
\end{itemize}
\subsection{Computation of conditional expectations}
\noindent The reduced dynamics requires evaluating conditional expectations of the form
\[
\mathbb{E}_{\mu_t}\big[\, b_j(X) \,\big|\, \hat{X}_t \,\big], 
\qquad j=1,\dots,m.
\]
Since these conditional expectations depend only on the resolved variables $\hat{X}_t$, we approximate them by projection onto a polynomial basis in $\hat{X}_t$.  
\subsubsection{Weak sense formulation}  
\noindent Instead of computing the conditional expectation pointwise, we approximate it in a \emph{weak sense}, i.e. for a chosen polynomial basis $\{\psi_\alpha(\hat{X})\}_\alpha$, $\mathbb{E}_{\mu_t}[\, b_j(X)\mid \hat{X}_t\,]\approx h_{\{\psi_\alpha\}}(\hat{X}_t)$ and we require
\begin{eqnarray}
\mathbb{E}_{\mu_t}\!\left[\, b_j(X)\,\psi_\beta(\hat{X}_t)\,\right] 
= \mathbb{E}_{\mu_t}\!\left[\, \mathbb{E}_{\mu_t}[\,b_j(X)\mid \hat{X}_t\,]\,\psi_\beta(\hat{X}_t)\,\right] \overset{!}{=} \mathbb{E}_{\mu_t}\left[\, h_{\{\psi_\alpha\}}(\hat{X}_t)\,\psi_\beta(\hat{X}_t)\,\right],
\end{eqnarray}
for all test functions $\psi_\beta$ in the basis. 
\subsubsection{Practical implementation.}  
\noindent We approximate $\mathbb{E}_{\mu_t}[\, b_j(X)\mid \hat{X}_t\,]$ by a polynomial expansion
\begin{eqnarray}
\label{eq:estimate_CE}
\mathbb{E}_{\mu_t}[\, b_j(X)\mid \hat{X}_t\,] &\approx& \sum_{\beta\in\mathcal{J}_r^m} a_{j,\beta}(t)\,\psi_\beta(\hat{X}_t),
\end{eqnarray}
where $\mathcal{J}_r^m$ is the set of multi-indices up to degree $r$.  
The coefficients $\{a_{j,\beta}(t)\}$ are determined by solving the linear system
\begin{eqnarray}\label{eq:linear system CE}
\sum_{\gamma\in\mathcal{J}_r^m} G_{\beta,\gamma}(t)\,a_{j,\gamma}(t)
= \mathbb{E}_{\mu_t}\!\left[\, b_j(X)\,\psi_\beta(\hat{X}_t)\,\right],
\qquad 
G_{\beta,\gamma}(t)=\mathbb{E}_{\mu_t}[\,\psi_\beta(\hat{X}_t)\psi_\gamma(\hat{X}_t)\,].
\end{eqnarray}
The Gram matrix $G(t)$ is symmetric and positive definite by construction, ensuring well-posedness.  
\subsubsection{Remarks}  
\begin{itemize}
\item  \textbf{Well-established result:} Since the conditional expectation is the $L^2$-(optimal) projection onto functions of $\hat{X}$, the polynomial expansion in $\hat{X}$ is equivalent to performing a PC expansion with respect to the marginal law $\hat{\mu}_t$. As the polynomial degree $r\to\infty$,
PC expansions converge in $L^2$ and therefore the approximation \eqref{eq:estimate_CE} converges in $L^2(\hat{\mu}_t)$ to the exact conditional expectation. 
\item  \textbf{Heuristic choice:} In practice, we truncate the basis at degree $r$, and expectations are approximated using quadrature rule.  
\item  \textbf{Complexity:} The methods scales with the number of basis functions $\binom{m+r}{r}$, i.e. polynomially in the degree $r$ and combinatorially in the resolved dimension $m$. Since dependence is only on $m$and not on the unresolved dimension $(N-m)$, the online trajectory integration remains computationally efficient. 
\end{itemize}
\subsection{Time integration}
\noindent We adopt an explicit Euler scheme for simplicity:
\begin{eqnarray}\label{eq:Reduced Dynamics Discretized}
\hat{X}_{t+\Delta t} = \hat{X}_t + \Delta t \, \mathbb{E}_{\mu_t}[\,\hat{b}(X)\mid \hat{X}_t\,].
\end{eqnarray}
Higher-order schemes (Runge–Kutta, implicit methods) can be employed if higher convergence rate accuracy is required. Here, we choose explicit Euler because the main source of numerical error arises from the truncation of the PC and the polynomial regression approximations rather than from the time discretization.  

\subsection{Complexity}
\noindent To clarify the computational structure of the proposed reduction method, we summarize in 
Table~\ref{tab:complexity} the dominant contributions to both the offline and online phases.  
The offline stage consists of estimating the time-dependent probability flow by evolving the 
PC coefficients of the full system. This step requires manipulating the 
entire high-dimensional PC basis and therefore inherits the combinatorial cost 
$\binom{N+P}{P}$ associated with total-order $P$ polynomials in $N$ variables. 
Although this scaling is expensive, it is incurred only once, and the resulting representation 
of the evolving probability law can be reused for any choice of resolved variables, observables, 
 and their initial conditions. \\ \ \\
Once the probability flow is available, the online phase becomes significantly cheaper. 
Evaluating the conditional expectations that define the time-dependent projection operator 
depends only on the resolved dimension $m$ and the degree $r$ of the regression basis.  
In the “fixed-mode’’ setting, where PC coefficients and projection bases are precomputed and 
compressed, the cost of propagating surrogate trajectories becomes essentially 
$O(N m)$, reflecting the fact that the high-dimensional structure of the distribution has 
already been encoded in the PC representation.  
\begin{table}[t]
\centering
\caption{Computational complexity of the proposed reduction method}
\vspace{0.15cm}
\begin{tabular}{llll}
\toprule
\textbf{Phase} & \textbf{Task} & \textbf{General Complexity} & \textbf{Fixed-Mode} \\
\midrule
\multirow{2}{*}{Offline} 
 & PC coefficient update 
 & $O\!\left( N^{2}\binom{N+P}{P}^{2} + \binom{N+3P}{3P} \right)$
 & $O(N^{2})$ \\[2mm]
 & Memory for PC expansion 
 & $O\!\left( \binom{N+P}{P}^{2} + \binom{N+3P}{3P} \right)$
 & $O(N)$ \\[1mm]
\midrule
\multirow{2}{*}{Online}
 & Conditional expectations 
 & $O\!\left( N \binom{m+r}{r}^{2}\binom{N+r}{r} 
   + m \binom{m+r}{r}\binom{N+r}{r} 
   + \binom{N+3P}{3P} \right)$
 & $O(N\,m)$ \\[2mm]
 & Memory for projection matrices 
 & $O\!\left( \binom{m+r}{r}\binom{N+r}{r}
   + m \binom{N+3P}{3P} \right)$
 & $O(m)$ \\
\bottomrule
\end{tabular}
\label{tab:complexity}
\end{table}
Highlighted by Table \ref{tab:complexity}, once the evolving probability law has been learned, the reduced dynamics act on a low-dimensional 
space and avoid any dependence on unresolved degrees of freedom.  
As a result, surrogate trajectories can be evaluated efficiently and repeatedly without recomputing the underlying high-dimensional 
dynamics. 
\section{Numerical experiments}\label{sec:numerical experiments}
\noindent In this section, we present the results of several numerical experiments that demonstrate the effectiveness of the proposed Pseudo-Markovian scheme. The numerical experiments illustrate how the proposed pseudo-Markovian reduction performs across settings of different complexity. We begin with linear–Gaussian systems, where closed-form solutions allow a direct verification of marginal preservation and conditional accuracy . We then consider nonlinear dynamics with non-Gaussian initial measures, testing the method’s ability to track probability flows in a multimodal setting. Finally, we examine SDE setting, demonstrating the performance of the method in dynamics driven by White noise. Together, these examples verify accuracy, probability consistency, and efficiency of the method across representative scenarios. \\ \ \\
We focus on systems with polynomial drift terms $b(X)$ and constant noise coefficient matrices $\sigma$.
The overall solution strategy is summarized in Algorithm \ref{alg:solution algo}:
\begin{enumerate}
    \item \textit{Offline phase}. 
    The probability flow $\{\mu_t\}$ is first approximated using a PC expansion Eq:~(\ref{eq:PC expansion}) by solving Eq.~(\ref{eq:coeff evolution ODE}) or Eq.~(\ref{eq:coeff evolution SDE}), depending on whether the underlying dynamics is deterministic or stochastic.
    
    \item \textit{Online phase}. For any chosen set of resolved variables $A(X)$, we integrate the reduced dynamics Eq.~(\ref{eq:Reduced Dynamics Discretized}) using aforward Euler (ODE), or Euler-Maruyama (SDE) scheme to approximate $\mathbb{E}[\hat{X}_t\mid\hat{X_0}]$. 
    The projected drift term is represented by the polynomial expansion Eq.~(\ref{eq:estimate_CE}), with coefficient obtained solving the linear system Eq.~(\ref{eq:linear system CE}). These coefficients depend explicitly on the instantaneous value of the PC coefficients precomputed in the offline phase.
\end{enumerate}
\noindent\textbf{Remark.}
Note that in the experiment with nonlinear dynamics Eq.~(\ref{eq:nonlinear dynamics}), the drift term does not satisfy the global-Lipschitzness assumption of Section~\ref{sec:assumptions}. However, there exists a global Lyapunov function with negative time-derivative ensuring that $b(X)$ remains Lipschitz along trajectories (they remain in a compact set).
There is moreover a single asymptotically stable fixed point. We expect the possibility to extend our results to Lyapunov stable dynamical systems, as supported by the numerical experiment.
\subsection{Linear dynamics with Gaussian initial probability measure}\label{sec:linear ode gaussian}
\noindent Consider the linear dynamics for $X\in\mathbb{R}^N$ with uncertain initial condition $X_0$,
\begin{eqnarray}\label{eq:linear system ODE}
    \frac{d}{dt}X = BX,\quad
    X_0 \sim \mathcal{N}(m_0,\Sigma_0),
\end{eqnarray}
where $B \in \mathbb{R}^{N\times N}$ is a constant matrix. 
We set $N=10$, $m=1$, and choose the resolved variables as $\hat{X}=X_1$.
The components of the initial mean $m_0\in\mathbb{R}^{10}$ are set to $m_{0,i}=1, \forall i = \{1,\dots,10\}$ and the initial covariance matrix $\Sigma_0\in\mathbb{R}^{10 \times 10}$ is given in Appendix \ref{app:numerical details}.
The components of the matrix $A$ are such that $B_{ii}=-1$ and $B_{ij}=B_{ji}=0.1, \forall i,j= \{1,\dots,10\}, i\neq j$.
\\ \ \\
In Fig.~\ref{fig:linear gaussian m=1 N=10}, the evolution of the surrogate trajectory for the resolved variable $\hat{X}=X_1$ is compared against: (i) the closed-form solution for $\mathbb{E}[\hat{X}_t\mid\hat{X}_0]$; (ii) an MC estimate with $10^5$ initial samples; and (iii) the First Order Optimal Prediction (Sec.~\ref{sec:FOP}).
The approximation error for both the surrogate trajectory and the MC estimate after 1000 time steps and for $B_{ij}=0.01$ is plotted as a function of the initial marginal probability density $\rho(\hat{X},t=0)$ with emphasis on the low probability regime. 
A similar plot is shown for the evolution of the total expectation $\mathbb{E}[\hat{X}_t]$, where the evolution of the 0-th PC coefficient is compared with the closed-form solution and the MC estimate with $10^5$ samples. 
We also show the convergence of several statistical moments after 1000 time steps, obtained either from the PC evolution or MC estimates, as a function of the number of samples. Finally, Fig.~\ref{fig:error modes} illustrates the convergence of surrogate trajectories converge to $\mathbb{E}[\hat{X}_t\mid\hat{X}_0]$ as the number $m$ of resolved variables increases, and similarly for the two-time joint density $\rho_{|\hat{X}_0}(\hat{X_t})$.
The long-time asymptotic convergence of the surrogate trajectory is also shown. 
\subsection{Error estimate}\label{sec:error estimate}
\noindent We now analyze the modeling error in $\mathbb{E}[\hat{X}_t\: | \: \hat{X}_0]$, introduced by the pseudo-Markovian approximation, Eq.~(\ref{eq:surrogate dynamics proposition}). We focus on the deterministic setting, while the stochastic case is discussed in Appendix \ref{app:error sde}. 
In the special case of a linear drift and of a Gaussian initial probability measure, the modeling error admits an explicit expression for the time evolution.
To derive it, we first write the exact evolution $\hat{X}^{exact}$ of the resolved components $\hat{X}=(X_1,\dots,X_m)\in\hat{\Gamma}\subset\mathbb{R}^m$,
\begin{eqnarray}\label{eq:exact resolved}
    \frac{d}{dt}\hat{X}^{exact}_t = \mathbb{E}_{\mu_t}[\hat{b}\mid \hat{X}^{exact}_t] + F_t,
\end{eqnarray}
with initial condition $\hat{X}^{exact}_0 = \hat{X}_0$ and
where the random force $F_t \in \mathbb{R}^m$ denotes the complement of the optimal projection that is orthogonal to the resolved drift $\hat{b}\in\mathbb{R}^m$ with respect to the conditional expectation operator  $\mathbb{E}_{\mu_t}[\:(\cdot)\mid\hat{X}_t\:]$.
\\ \ \\
The unresolved variables $\hat{X}_0^{\perp ,exact} \in\mathbb{R}^{N-m}$ are sampled from $\mu_{0|\hat{X}_0}$, the initial probability measure conditioned on $\hat{X}_0$.  
In this deterministic setting, a surrogate trajectory $\hat{X}^S_t$ is defined as the solution of Eq.~(\ref{eq:surrogate dynamics proposition}) with fixed initial condition $\hat{X}^{S}_0 = \hat{X}_0$.
It provides an approximation of the conditional expected trajectory $\mathbb{E}[\hat{X_t}\mid\hat{X}_0]$, whose exact evolution can be derived by taking the conditional expectation of Eq.~(\ref{eq:exact resolved}) with respect to $\hat{X}_0$,
\begin{eqnarray}
    \frac{d}{dt}\mathbb{E}[\hat{X}^{exact}_t\mid \hat{X}_0] = \mathbb{E}\left[\mathbb{E}_{\mu_t}[b\mid\hat{X}^{exact}_t] \:\big|\: \hat{X}_0 \right] + \mathbb{E}[F_t\mid\hat{X}_0],\quad
    \mathbb{E}[\hat{X}_0^{exact}\mid\hat{X}_0] = \hat{X}_0.
\end{eqnarray}
Defining the modeling error as $e(t)=\hat{X}^S_t-\mathbb{E}[\hat{X}^{exact}_t\mid\hat{X}_0]$, its evolution satisfies
\begin{eqnarray}
    \frac{d}{dt}e = \mathbb{E}_{\mu_t}[b\mid\hat{X}^S_t] - \mathbb{E}\left[\mathbb{E}_{\mu_t}[b\mid\hat{X}^{exact}_t] \mid \hat{X}_0 \right] - \mathbb{E}[F_t\mid\hat{X}_0],\quad
    e(0)=0.
\end{eqnarray}
For a linear system governed by Eq.~(\ref{eq:linear system ODE})
with Gaussian initial probability density $\mathcal{N}(m_0,\Sigma_0)$, the conditional expectation can be expressed explicitly, and we obtain
\begin{align}
    \frac{\mathrm{d}}{\mathrm{d}t} e(t) 
    &= Q(t)\, e(t) - \overline{F}(t), \label{eq:evolution_error} \\
    \overline{F}(t) 
    &:= \mathbb{E} [ F_t \mid \hat{X}_0 ], \label{eq:mean_force} \\
    Q(t) 
    &:= B_{\mathrm{res},\,\mathrm{res}} 
    + B_{\mathrm{res},\,\mathrm{unres}} 
    \left[ e^{Bt} \Sigma_0 e^{B^\top t} \right]_{\mathrm{unres},\,\mathrm{res}} 
    \left[ e^{Bt} \Sigma_0 e^{B^\top t} \right]^{-1}_{\mathrm{res},\,\mathrm{res}}, \label{eq:Q_def}
\end{align}
where, $m_0 \in \mathbb{R}^N$ and $\Sigma_0 \in \mathbb{R}^{N\times N}$ are the initial mean and covariance matrix, respectively.
The sub-matrix $B_{\mathrm{res},\,\mathrm{res}} \in \mathbb{R}^{m \times m}$ is obtained by selecting the rows and columns of $B$ corresponding to the resolved variables, whereas the sub-matrix $B_{\mathrm{res},\,\mathrm{unres}} \in\mathbb{R}^{m\times (N-m)}$ is obtained by selecting the rows corresponding to the resolved variables and the columns corresponding to the unresolved variables. The same subscript convention applies to the other sub-matrices appearing in Eq.~(\ref{eq:Q_def}). The matrix exponential, $e^{Bt}=\sum_{k=0}^\infty \frac{t^kA^k}{k!}\in \mathbb{R}^{N \times N}$ is assumed to yield a nonsingular sub-matrix $\left[ e^{Bt} \Sigma_0 e^{B^\top t} \right]_{\mathrm{res},\,\mathrm{res}}$  on $[0,T]$. Under these assumptions, the unique solution of Eq.~(\ref{eq:evolution_error}) is
\begin{eqnarray}\label{eq:error analytical}
    e(t) = -\int_0^t \Phi(t,\tau) \overline{F}(\tau)  d\tau, \ \ t\in[0,T],
\end{eqnarray}
where we defined the time-ordered exponential,
\begin{eqnarray}
    \Phi(t,\tau) = \mathcal{T}_{exp}\left(  \int_\tau^t Q(s)ds \right).
\end{eqnarray}
Figure \ref{fig:error modes} illustrates that, as the number $m$ of resolved variables increases, the surrogate trajectory $\hat{X}^S_{1,t}$ converges exponentially to $\mathbb{E}[\hat{X}_{1,t}^{exact}\mid\hat{X}_{1,0}]$, where $\hat{X}^S_{1,t}$, $\hat{X}_{1,t}^{exact}$ and $\hat{X}_{1,0}$ denote the first components of $\hat{X}_t^S$, $\hat{X}^{exact}_t$, and $\hat{X}_0$, respectively. 
When all variables of the original system are resolved, $m=N$, the surrogate trajectories are exact by construction.

\begin{figure}[htbp]
    \centering
    \begin{subfigure}[b]{0.32\textwidth}
        \includegraphics[width=\textwidth]{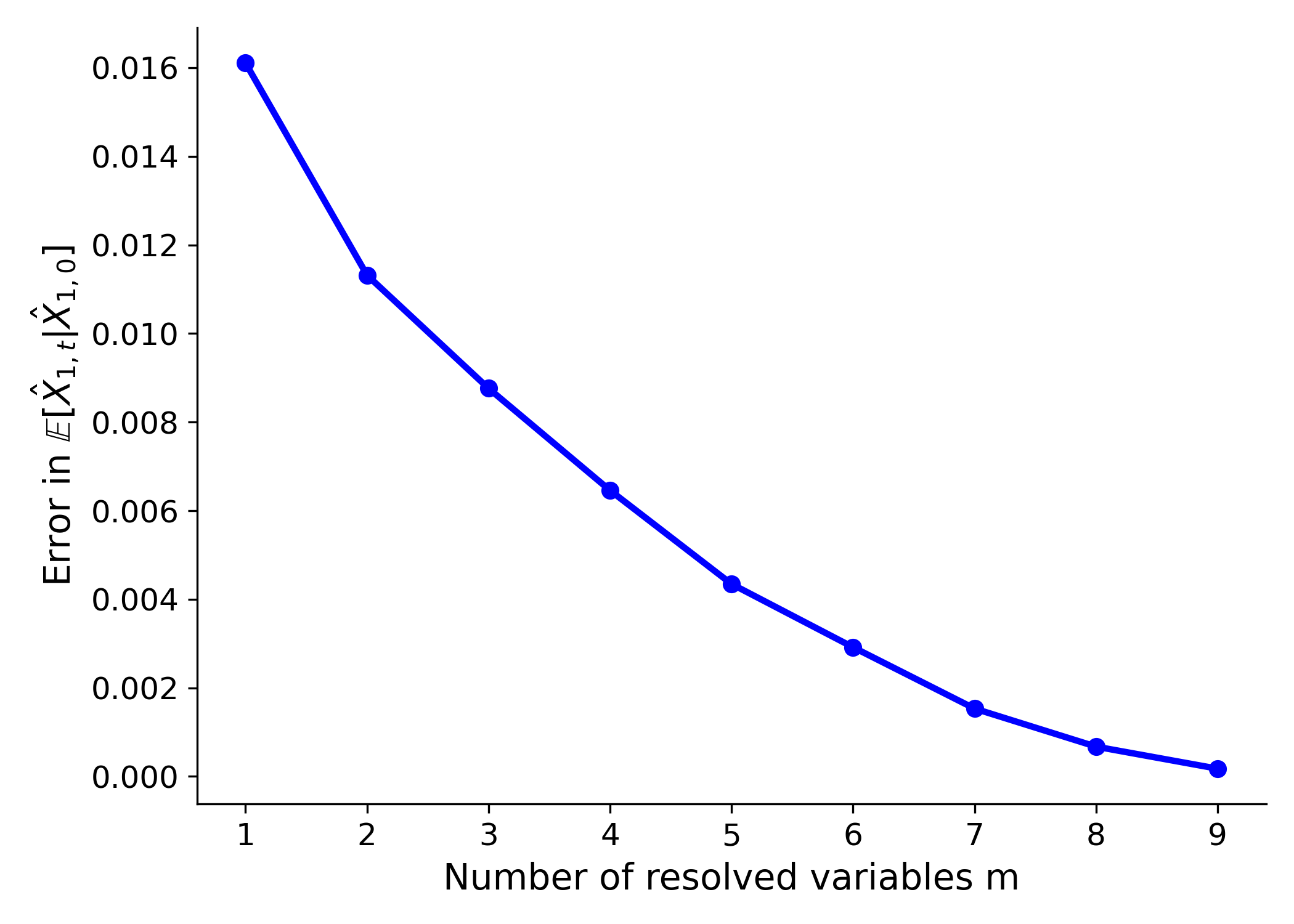}
        \caption{}
    \end{subfigure}
    \begin{subfigure}[b]{0.32\textwidth}
        \includegraphics[width=\textwidth]{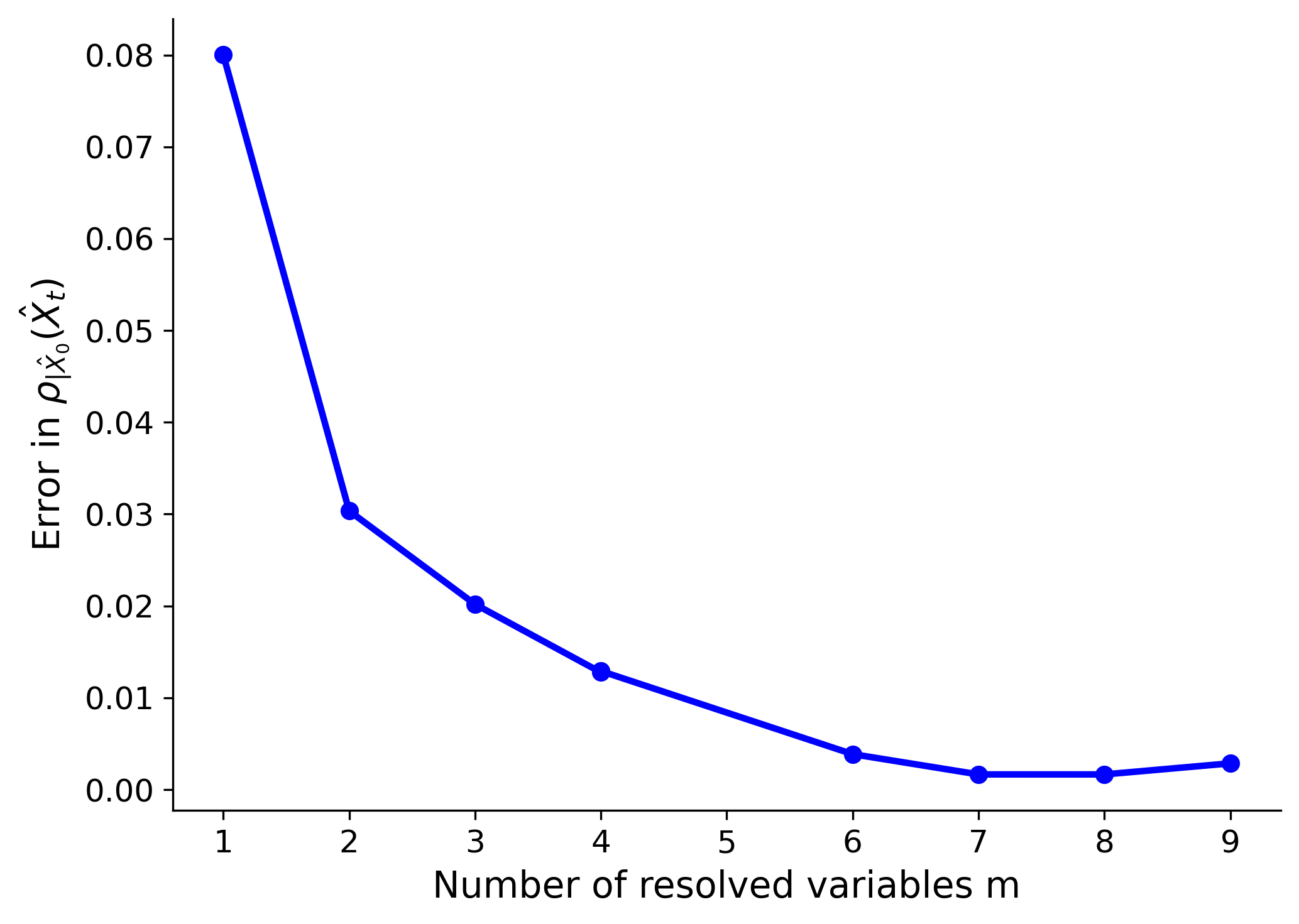}
        \caption{}
    \end{subfigure}
    \begin{subfigure}[b]{0.32\textwidth}
        \includegraphics[width=\textwidth]{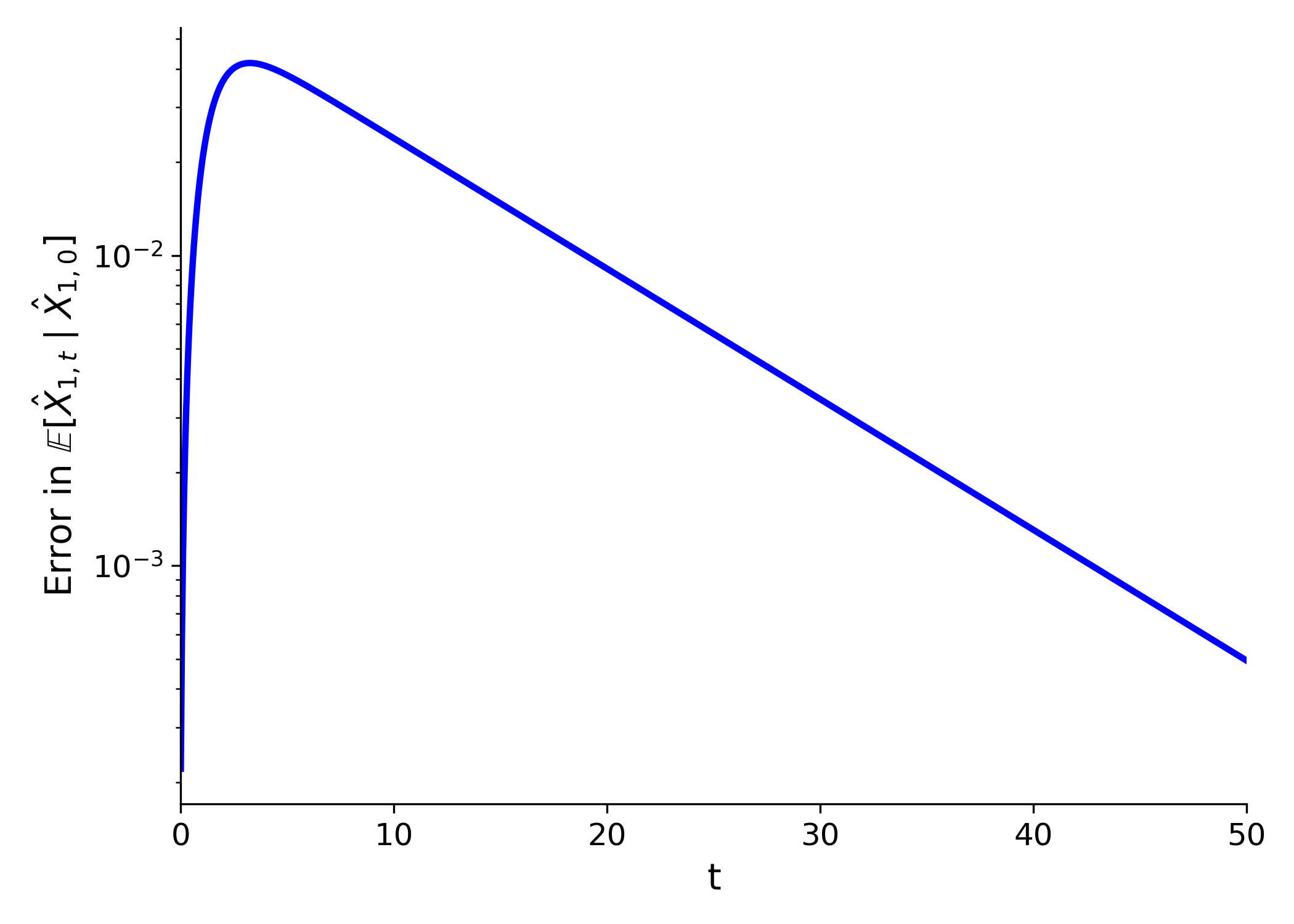}
        \caption{}
    \end{subfigure}
    
    \caption{\justifying 10D linear dynamics. (a) $L^2$ error in $\mathbb{E}[\hat{X}_{1,T}\mid\hat{X}_{1,0}]$ as a function of the number of resolved variables $m$. The error was computed using the analytical expression Eq.~(\ref{eq:error analytical}).
    (b) $W_2$ error in the two-time probability density $\rho_{|\hat{X}_0}(\hat{X}_T)$ when $m=1$ and $\hat{X}=X_1$.
    (c) $L^2$ error in $\mathbb{E}[\hat{X}_T\mid\hat{X}_0]$ as a function of time when $m=1$ and $\hat{X}=X_1$.}
    \label{fig:error modes}
\end{figure} 

\subsubsection{Short times limit}
\noindent For short times, Eq.~(\ref{eq:error analytical}) reduces to (see Appendix \ref{app:error}),
\begin{eqnarray}\label{eq:error short}
    e(t) =  -\frac{t^2}{2}\overline{F}'(0) + O(t^3),
\end{eqnarray}
where $\overline{F}'(0)$ is the time derivative of $\overline{F}$ at initial time. In Appendix \ref{app:error}, we derive an explicit expression for a two dimensional system.
Equation~(\ref{eq:error short}) is particularly useful to assess whether additional resolved variables are required to reach a prescribed accuracy over a short time horizon. 

\subsubsection{Stochastic dynamics}
\noindent In Appendix \ref{app:error sde}, we show that for a dynamical system governed by a stochastic evolution law with linear drift and a constant diagonal noise coefficient matrix, the error evolution follows an expression similar to Eq.~(\ref{eq:error analytical}), but with a different definition of $Q(t)$, Eq.~(\ref{eq:Q_def}).

\subsection{Nonlinear dynamics with bimodal initial probability measure}
\noindent We consider the 10-dimensional non-linear deterministic dynamical system for $X\in\mathbb{R}^{10}$,
\begin{eqnarray}\label{eq:nonlinear dynamics}
    \frac{d}{dt}X_i = a_iX_i + \sum_{j=1}^{10}K_{ij}X_j^3,
    \quad i=\{1,\dots,10\}, \quad
    X_0 \sim \mu_0,
\end{eqnarray}
with $a_i=-0.1, \ K_{ij}=-0.1 \ \mathrm{ when }\ j=i $, and $K_{ij}=0.01 \ \mathrm{ when }\ j\neq i, \ $ $i,j=\{1,\dots,10\}$.
We assume $m=1$ and $\hat{X}=X_1$ for resolved variable. 
Let $\nu$ be a probability measure over $X_1 \times X_2$ and $\pi$ a probability measure over $X_3\times\dots\times X_{10}$.
The initial probability measure $\mu_0$ is constructed as the product measure $\mu_0=\nu \otimes \pi$, where $\nu$ is a mixture of Gaussian distributions,
\begin{eqnarray}\label{eq:bimodal measure}
    \nu = \alpha\mathcal{N}(m_{L},\Sigma_{L}) + (1-\alpha)\mathcal{N}(m_{R},\Sigma_{R}),
\end{eqnarray}
where $\alpha=0.2$, $m_L=(-0.3,-0.3)^T$, $m_R=(0.5,0.5)^T$, and $\Sigma_{L,11}=2.5, \: \Sigma_{L,22}=2,\: \Sigma_{L,12}=\Sigma_{L,21}=0.4$, and $\Sigma_{R,11}=0.6, \: \Sigma_{R,22}=0.4,\: \Sigma_{R,12}=\Sigma_{R,21}=0.1$. 
The measure $\pi$ is a multivariate Gaussian. 
Since $\mu_0$ is non-Gaussian, fitting the initial coefficients of the PC expansion is nontrivial.
Therefore, we devise a data-driven approach based on optimal transport to optimally compute the corresponding coefficients (see \ref{app:bimodal} for more details).
The PC expansion includes multivariate Hermite polynomials up to total order 3, and the conditional expectation of the resolved component of the drift is modeled as a $x_1$-polynomial of degree 3. 
Fig.~\ref{fig:non linear ode bimodal} shows the resulting approximation for the marginal probability density $\rho(x_1)$ and $\rho(x_2)$. The remaining marginals are recovered exactly as they consist of Gaussian densities.
\\ \ \\
Fig.~\ref{fig:10Dnonlinear} show the evolution of surrogate trajectories, total expectation, and snapshots of the marginal probability densities $\rho(x_1,t)$, $\rho(x_2,t)$ and $\rho(x_3,t)$. 
\\ \ \\
In Fig.~\ref{fig:low proba ode}, we show the approximation error of both
our surrogate trajectory and the MC estimate after 100 time steps as a function of the initial marginal probability density $\rho(\hat{X}, t=0)$. In this case, the full order initial joint probability measure $\mu_0$ is set to a multivariate Gaussian distribution. The error is computed relative to a reference MC simulation with $10^6$ samples. We focus on a low probability regime.    

\begin{figure*}[htbp]
    \centering
    \includegraphics[width=\textwidth]{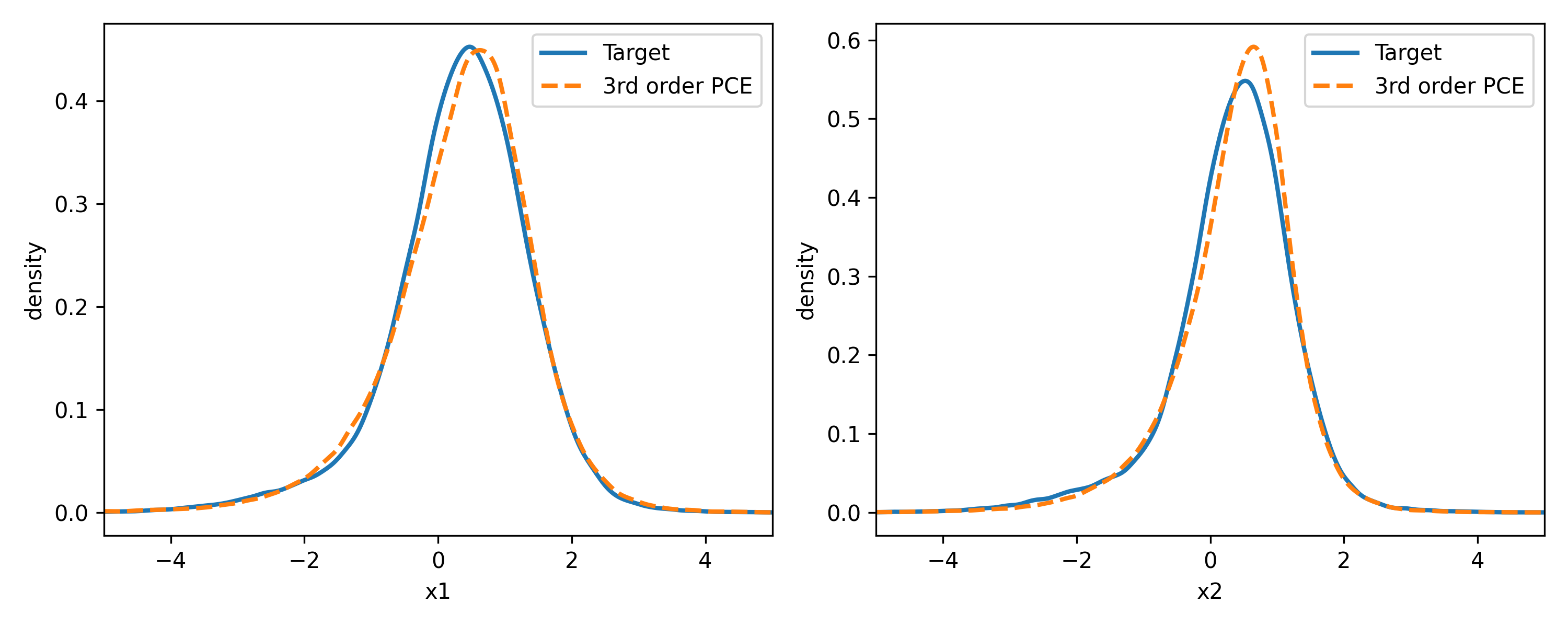}
    \caption{\justifying Initial probability density approximated with 3rd order PCE. Marginals $\rho(x_1)$ (left) and $\rho(x_2)$ (right).}
    \label{fig:non linear ode bimodal}
\end{figure*}

\subsection{Linear stochastic dynamics}\label{sec:linear sde}
\noindent We consider the stochastic process $\{X_t\}_{t\leq T}$ for the $\mathbb{R}^{10}$-valued random variable $X_t$ with linear drift $b(X)=BX\in\mathbb{R}^{10}, \ B\in\mathbb{R}^{10\times 10}$, and constant noise coefficient matrix $\sigma=\sigma_1I\in\mathbb{R}^{10\times 10}$,
\begin{eqnarray}
    dX_t = BX_t + \sigma dW_t
    ,\quad
    X_0\sim \mathcal{N}(m_0,\Sigma_0),
\end{eqnarray}
where $dW_t$ is a 10 dimensional Wiener process. 
The parameters of the initial Gaussian probability measure are $m_{0,i}=1, \forall i = \{1,\dots,10\}$, and $\Sigma_0$ is given in Appendix \ref{app:numerical details}.
We set $m=1$ and define the resolved variable as $\hat{X}=X_1$. The diagonal elements of $B$ are set to $B_{ii}=-1, \ i=\{1,\dots,10\}$.  
\\ \ \\
The results are presented in Figure \ref{fig:10D linear sde}. For different values of $\sigma_1$ and the off-diagonal elements of $B$, we compare the paths $\mathbb{E}[\hat{X}_t\mid\hat{X}_0, \ \{W_t\}_{t\leq T}]$ with fixed realizations of the Wiener process, obtained both from our surrogate trajectories and MC estimates. The approximation error in $\mathbb{E}[\hat{X}_t\mid\hat{X}_0]$ after 1000 time steps is shown as a function of the initial marginal probability density $\rho(\hat{X}, t=0)$, with an emphasis on the low-probability regime. 
The convergence of several total order statistics after 1000 time steps is displayed as a function of the number of samples, computed btoh from MC estimates and from the PC coefficients. 
Finally, the evolution of the joint probability density $\rho(x_1,x_2)$, computed from the PC, is compared to its closed form solution.
\subsection{Discussion}
\noindent In addition to preserving the marginal distribution of the resolved variables (Fig.~\ref{fig:linear gaussian m=1 N=10}), our numerical experiments indicate that the surrogate trajectories provide a meaningful approximation of $\mathbb{E}[\hat{X}_t\mid\hat{X}_0]$, even in nonlinear or non-Gaussian settings.
In the stochastic case, the reduced dynamics approximates the conditional evolution along individual realizations of the Wiener process, $\mathbb{E}[\hat{X}_t\mid\hat{X}_0, \ \{W_{\tau\leq t}\}]$.
The approximation is particularly accurate when the coupling between the resolved and unresolved variables is weak, as expected in the presence of a clear time-scale separation between the resolved and unresolved degrees of freedom. This behavior can be attributed to the use of instantaneous optimal projections.\\ \ \\
Furthermore, the empirical result shown in Fig.~(\ref{fig:error modes}) illustrates exponential convergence  with respect to the number of resolved variables for both $\mathbb{E}[\hat{X}_t\mid\hat{X}_0]$ and $\rho_{|\hat{X}_0}(\hat{X}_t)$. In more general settings, similarly fast convergence is expected due to the time-dependent optimal orthogonal projections. In addition, the total-order statistics are recovered accurately and deterministically from the PC coefficients, without resorting to sampling. \\ \ \\
Finally, the above results illustrate the ability of our method to handle rare events. With respect to standard MC methods for estimating statistics conditioned on rare events, the results highlight two advantages of our method for a fixed initial number of MC samples.
First, the accuracy of our estimates deteriorates more slowly as the probability of the conditioning event decreases. Second, our method does not rely on sampling, unlike MC, which becomes extremely inefficient in the rare-event regime, as reflected by the characteristic $-1/2$ slope in the logarithmic error plot. 


\begin{figure*}[htbp]
    \centering
    \begin{subfigure}[b]{0.49\textwidth}
        \includegraphics[width=\textwidth]{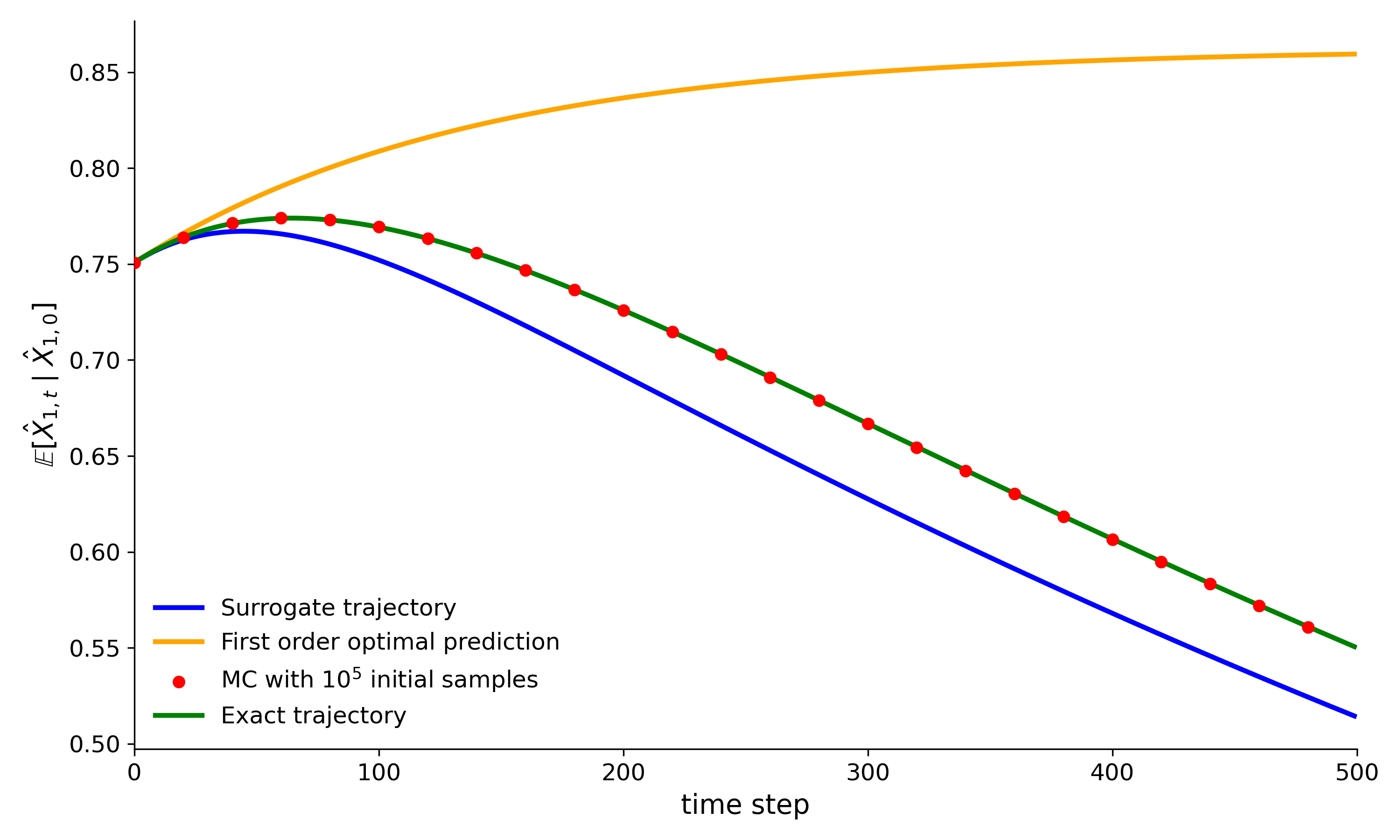}
        \caption{}
    \end{subfigure}
    \hfill
    \begin{subfigure}[b]{0.49\textwidth}
        \includegraphics[width=\textwidth]{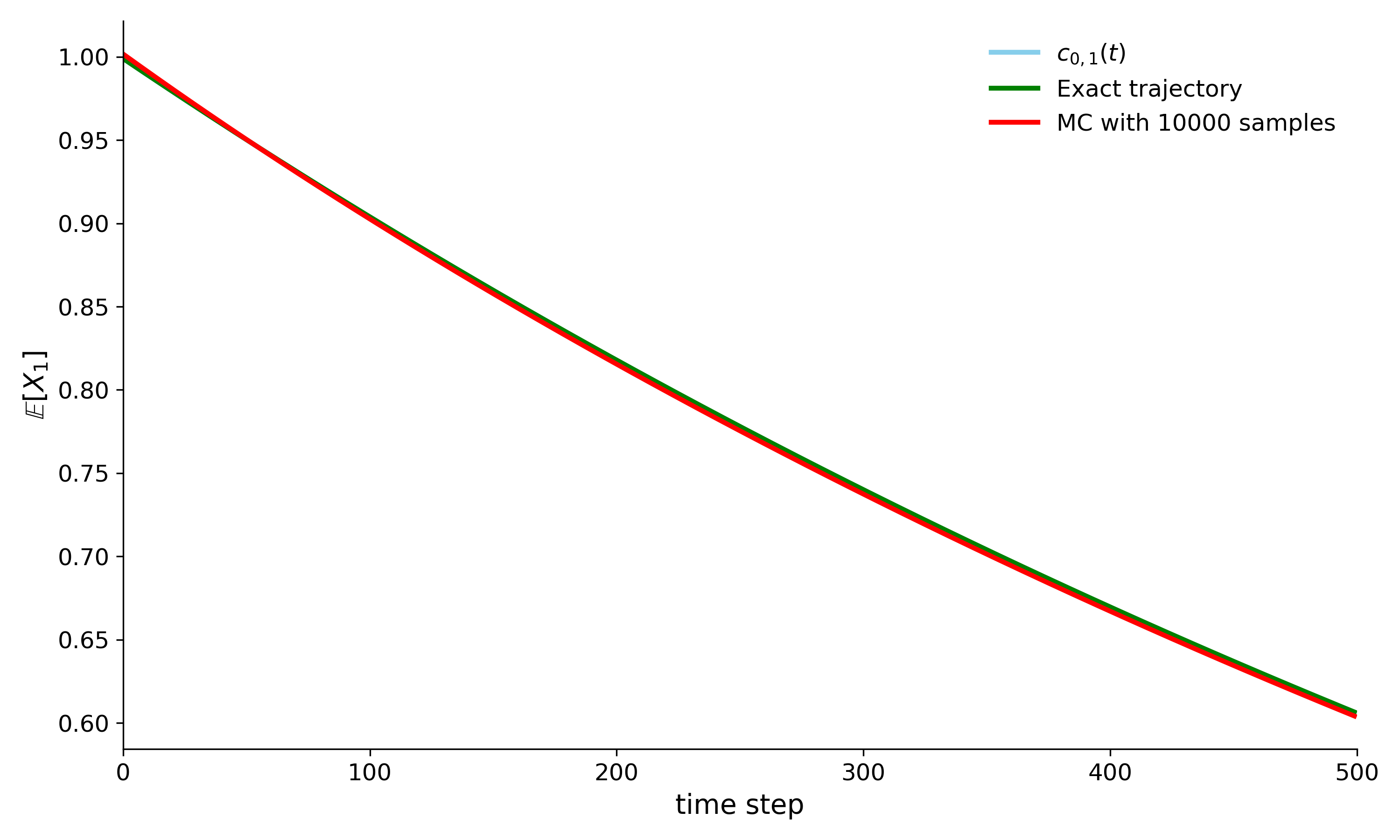}
        \caption{}
    \end{subfigure}
    \begin{subfigure}[b]{0.49\textwidth}
        \includegraphics[width=\textwidth]{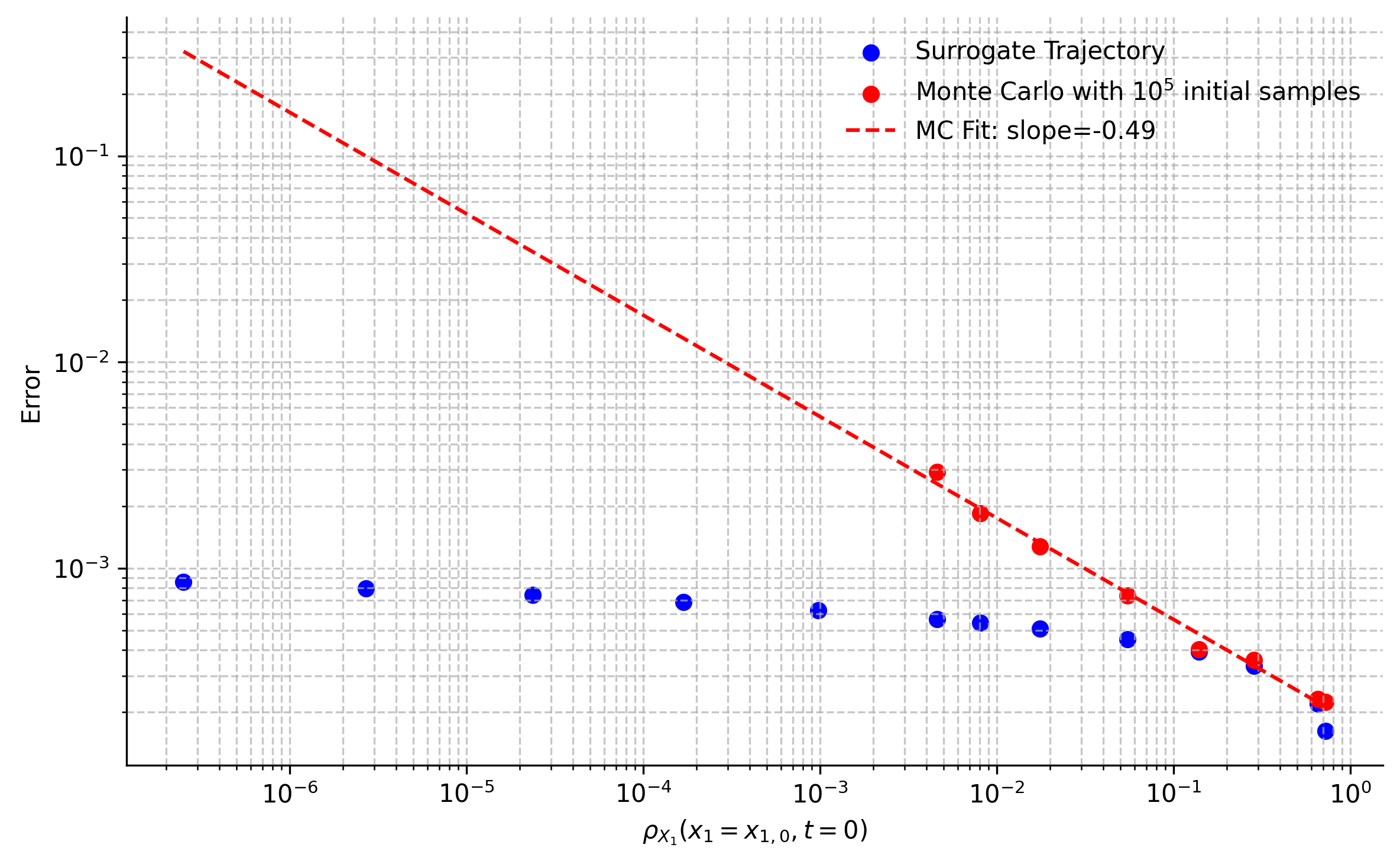}
        \caption{}
    \end{subfigure}
    \hfill
    \begin{subfigure}[b]{0.49\textwidth}
        \includegraphics[width=\textwidth]{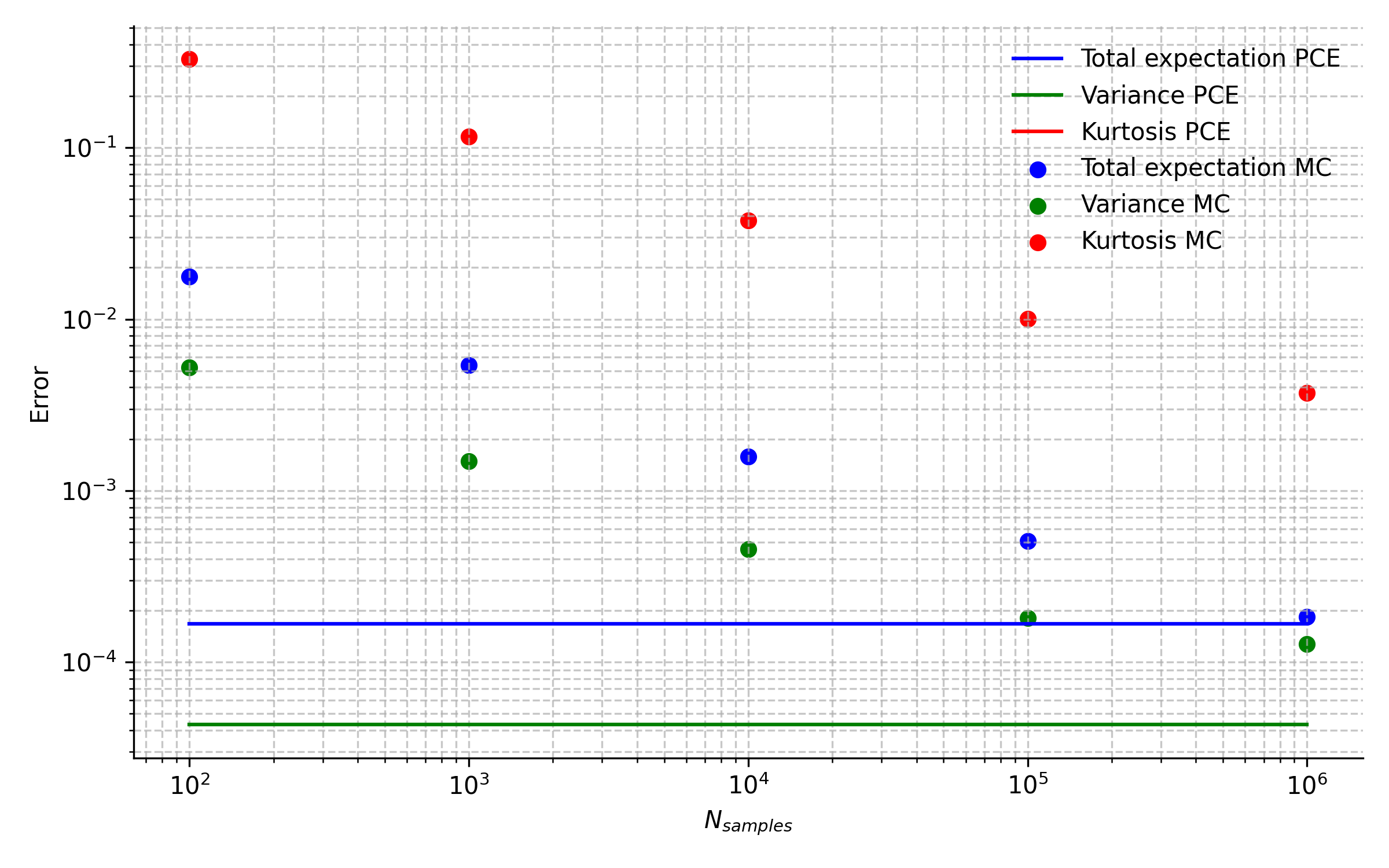}
        \caption{}
    \end{subfigure}
    
    \caption{\justifying 10D linear dynamic with initial Gaussian probability measure. The resolved variables is $\hat{X}=X_1$. The final time corresponding to time step 100 is denoted by $T$.
    (a) Evolution of $\mathbb{E}[\hat{X}_t|\hat{X}_0]$. (b) Evolution of $\mathbb{E}[\hat{X}_t]$.
    (c) Error in $\mathbb{E}[X_{1,T} \mid X_{1,0}]$ with respect to the initial marginal probability density $\rho(\hat{X}_0)$.
    (d) Mean, variance and kurtosis of $\hat{X}_{T}$  calculated with MC (dots) and PC (lines) as a function of the number of samples. Note that the Kurtosis obtained with PC is not visible because the committed error lies below $10^{-4}$.} 
    \label{fig:linear gaussian m=1 N=10}
\end{figure*}

\begin{figure*}[htbp]
    \centering
    \begin{subfigure}[b]{0.65\textwidth}
        \includegraphics[width=\textwidth]{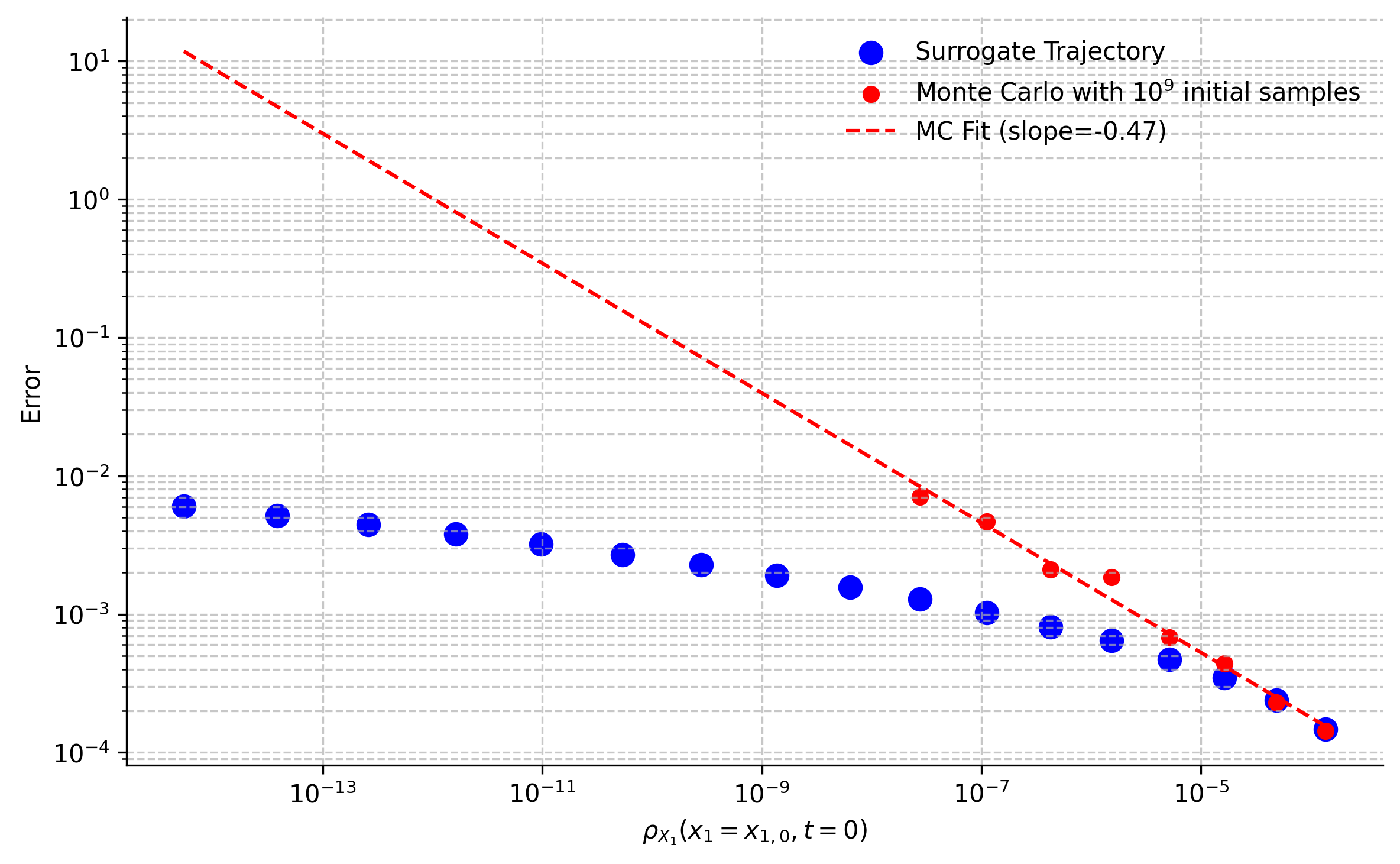}
        \caption{}
    \end{subfigure}

    \caption{\justifying 10D nonlinear system with initial Gaussian probability measure. Error in $\mathbb{E}[\hat{X}_T\mid\hat{X}_0]$ with respect to the initial marginal probability density $\rho(\hat{X}_0)$.
    }
    \label{fig:low proba ode}
\end{figure*}


\begin{figure*}[htbp]
    \centering
    \begin{subfigure}[b]{0.49\textwidth}
        \includegraphics[width=\textwidth]{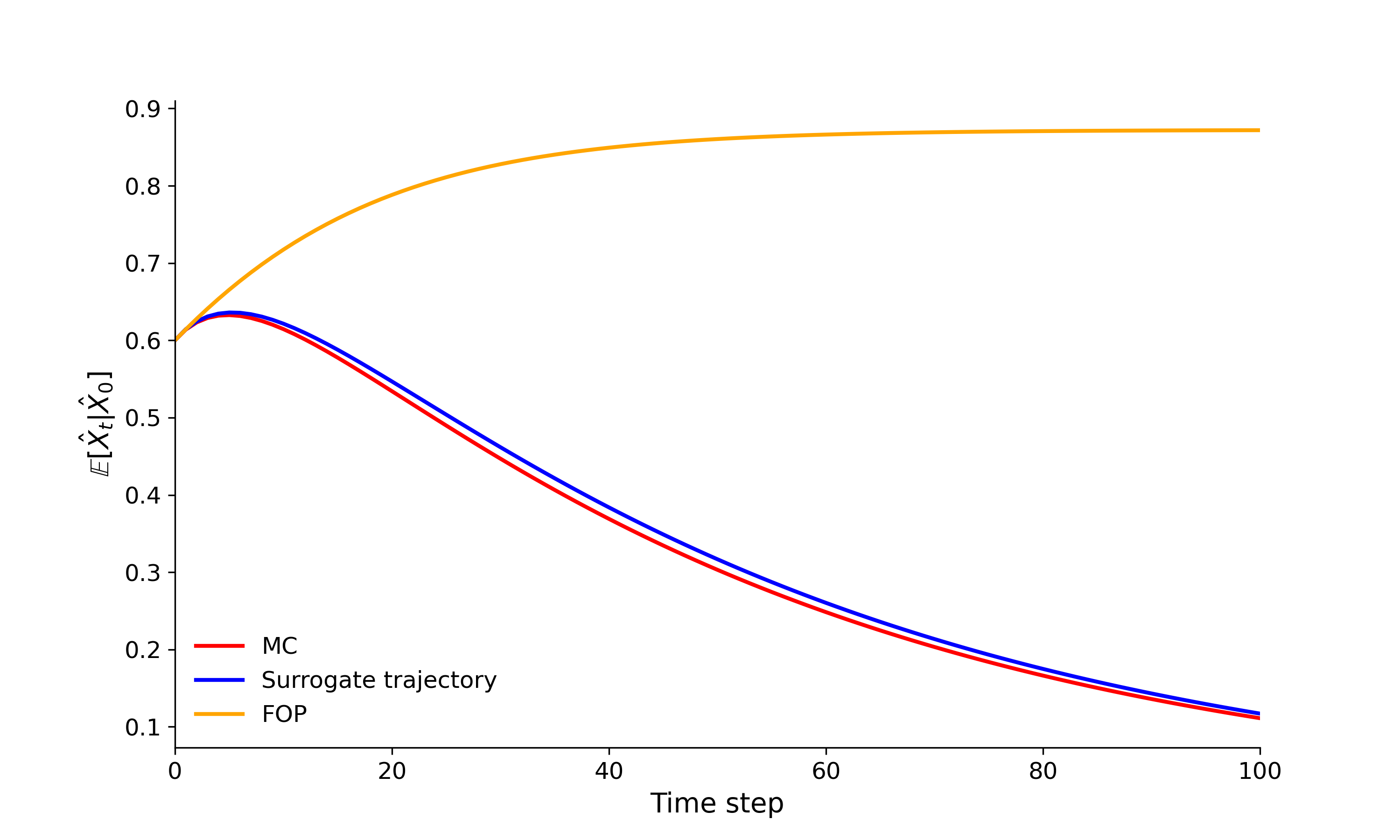}
        \caption{}
    \end{subfigure}
    \hspace{0.01cm}
    \begin{subfigure}[b]{0.49\textwidth}
        \includegraphics[width=\textwidth]{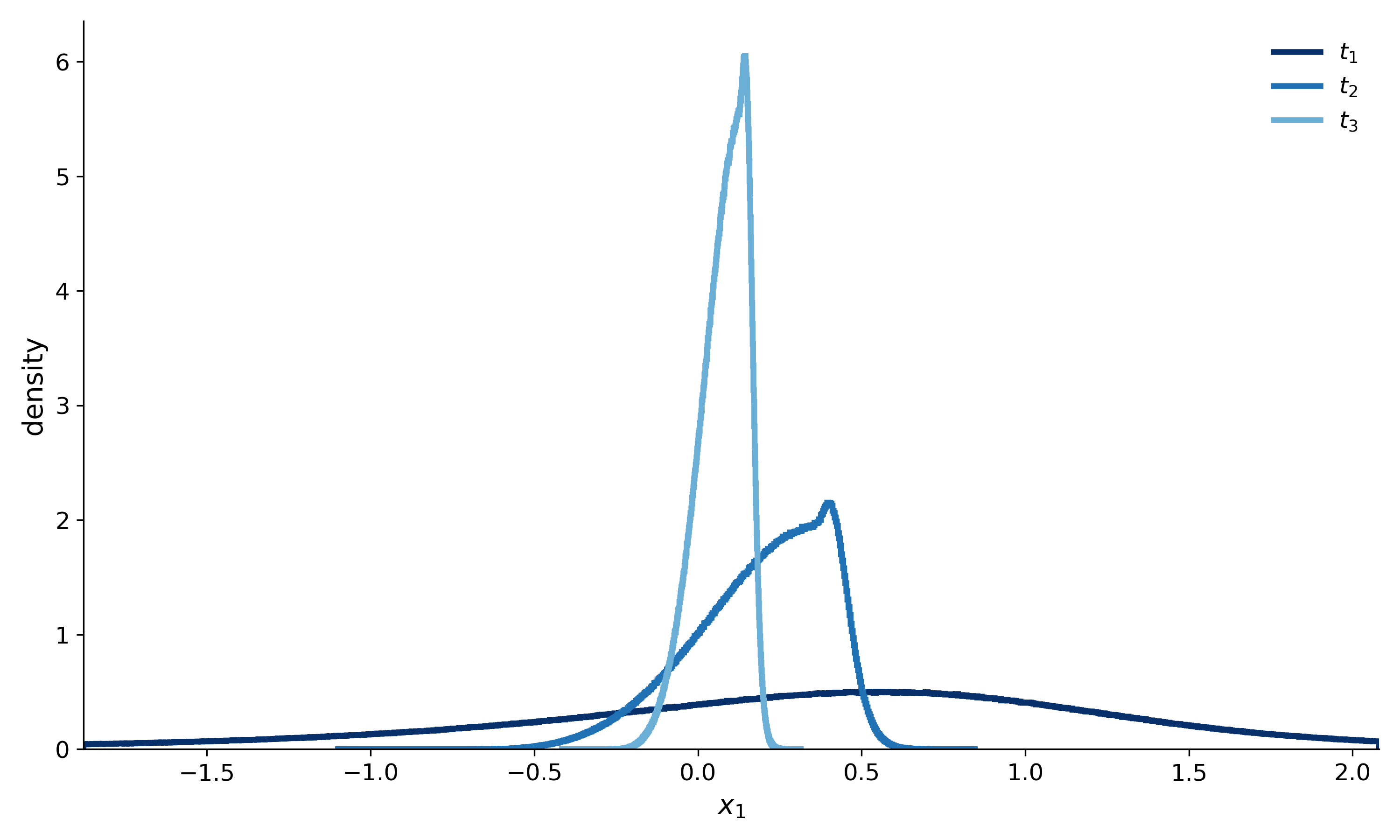}
       \caption{}
    \end{subfigure}
    \vspace{0.5cm}
    \begin{subfigure}[b]{0.49\textwidth}
        \includegraphics[width=\textwidth]{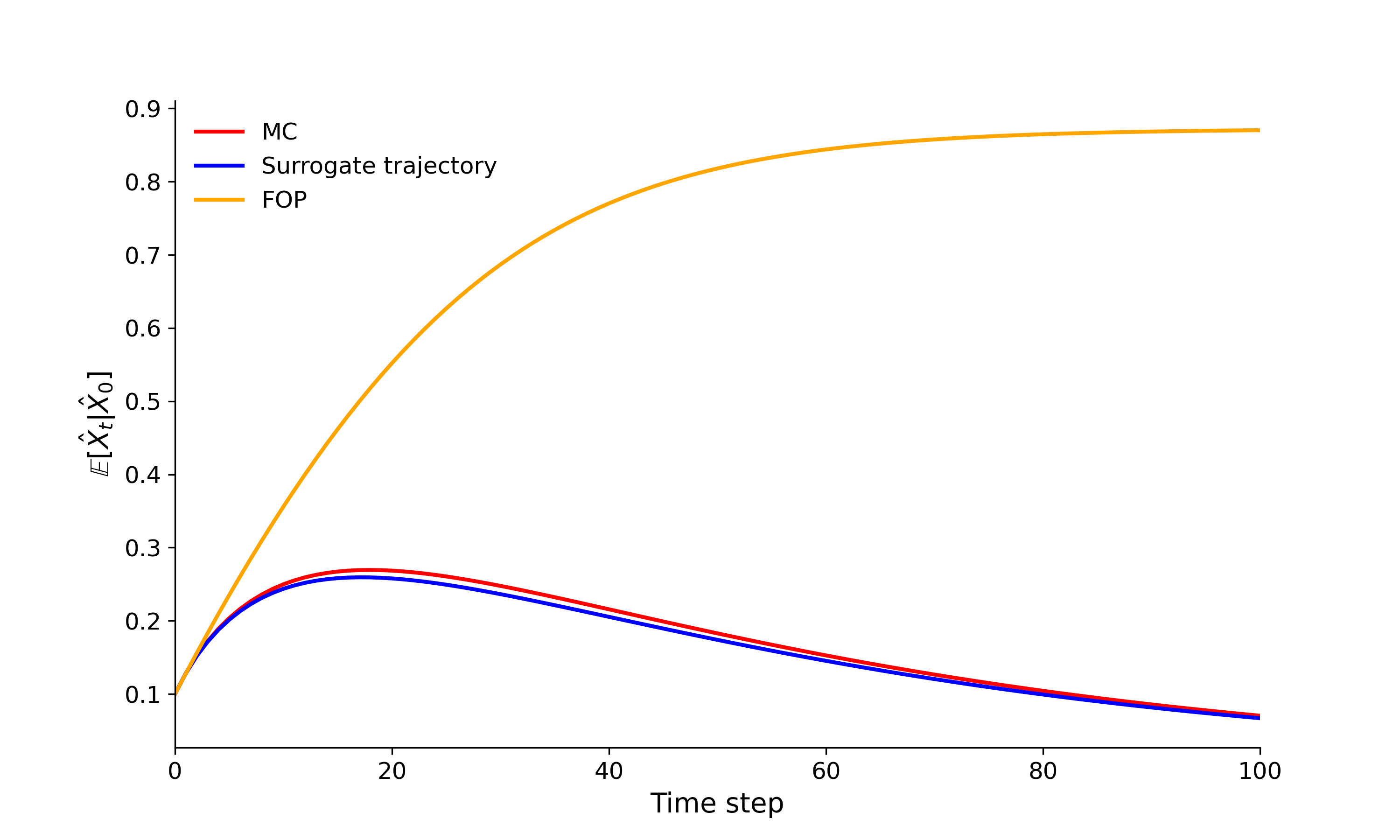}
        \caption{}
    \end{subfigure}
    \hspace{0.01cm}
    \begin{subfigure}[b]{0.49\textwidth}
        \includegraphics[width=\textwidth]{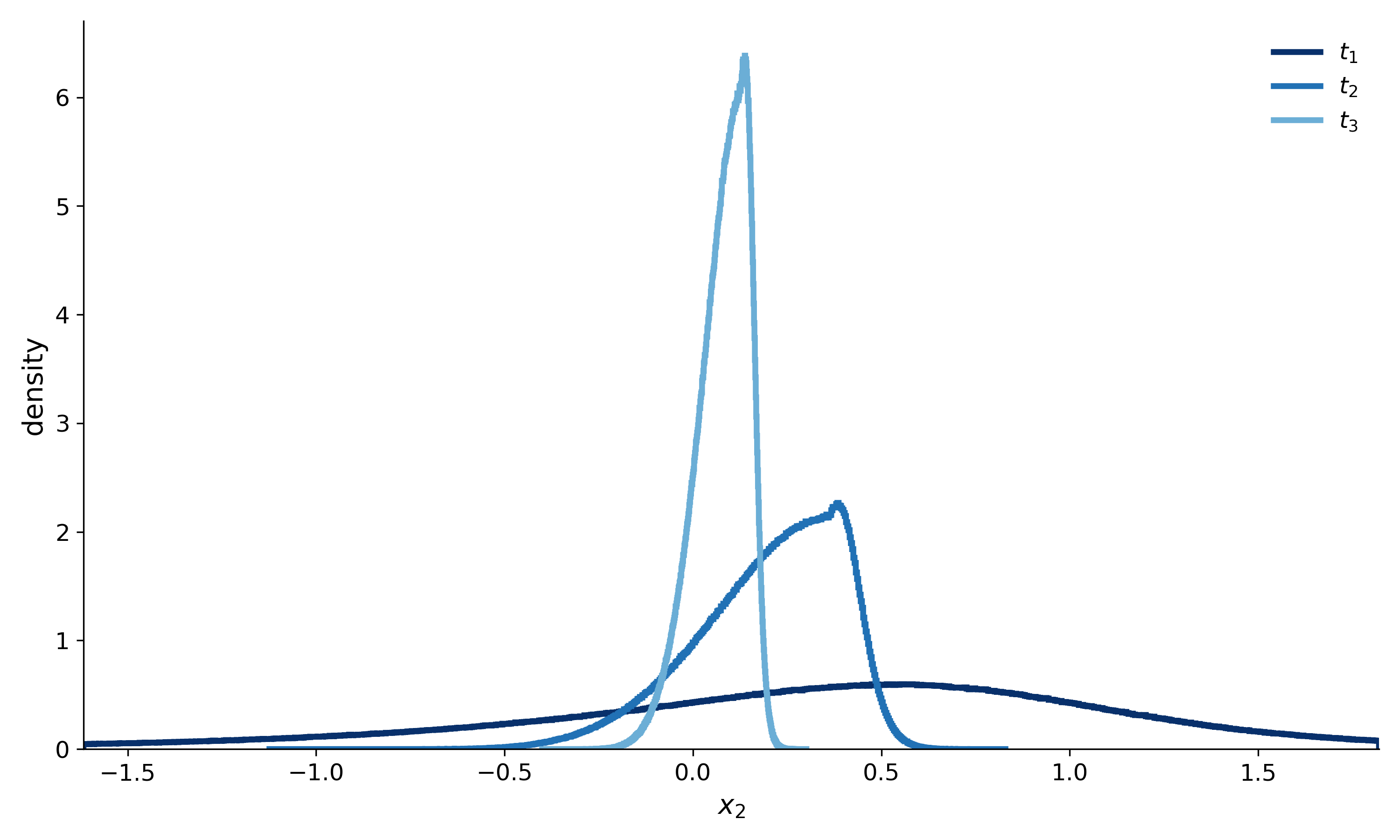}
        \caption{}
    \end{subfigure}
    \vspace{0.5cm}
    \begin{subfigure}[b]{0.49\textwidth}
        \includegraphics[width=\textwidth]{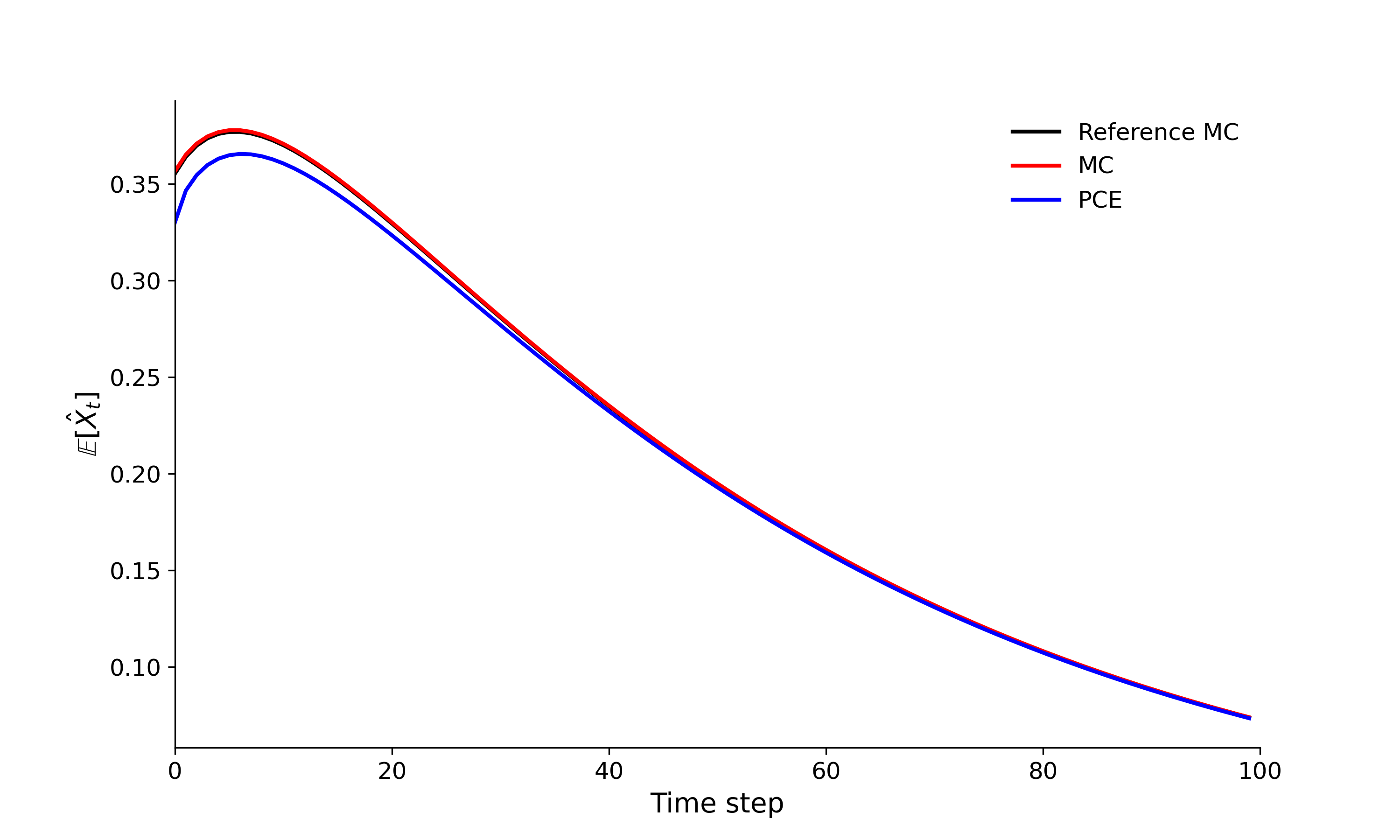}
        \caption{}
    \end{subfigure}
    \hspace{0.01cm}
    \begin{subfigure}[b]{0.49\textwidth}
        \includegraphics[width=\textwidth]{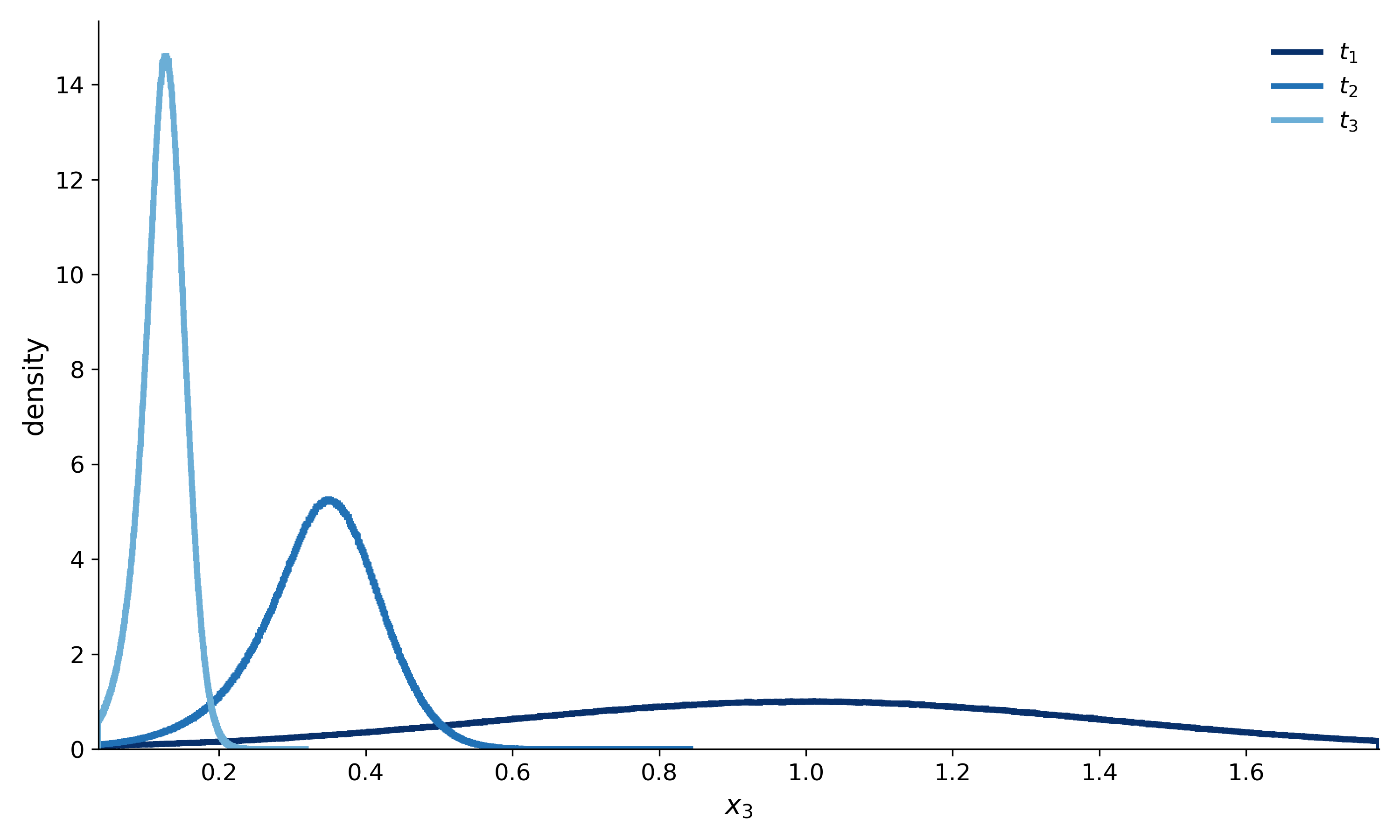}
        \caption{}
    \end{subfigure}

    \caption{\justifying 10D nonlinear dynamics with non-Gaussian initial measure.
    (a)-(b) Evolution of $\mathbb{E}[\hat{X}_t\mid\hat{X}_0]$.
    (c) Evolution of $\mathbb{E}[\hat{X}_t]$.
    (d)-(f) Evolution of the marginal probability densities $\rho(x_1,t)$, $\rho(x_2,t)$ and $\rho(x_3,t)$.
    }
    \label{fig:10Dnonlinear}
\end{figure*}

\begin{figure*}[htbp]
    \centering
    \begin{subfigure}[b]{0.88\textwidth}
        \includegraphics[width=\textwidth]{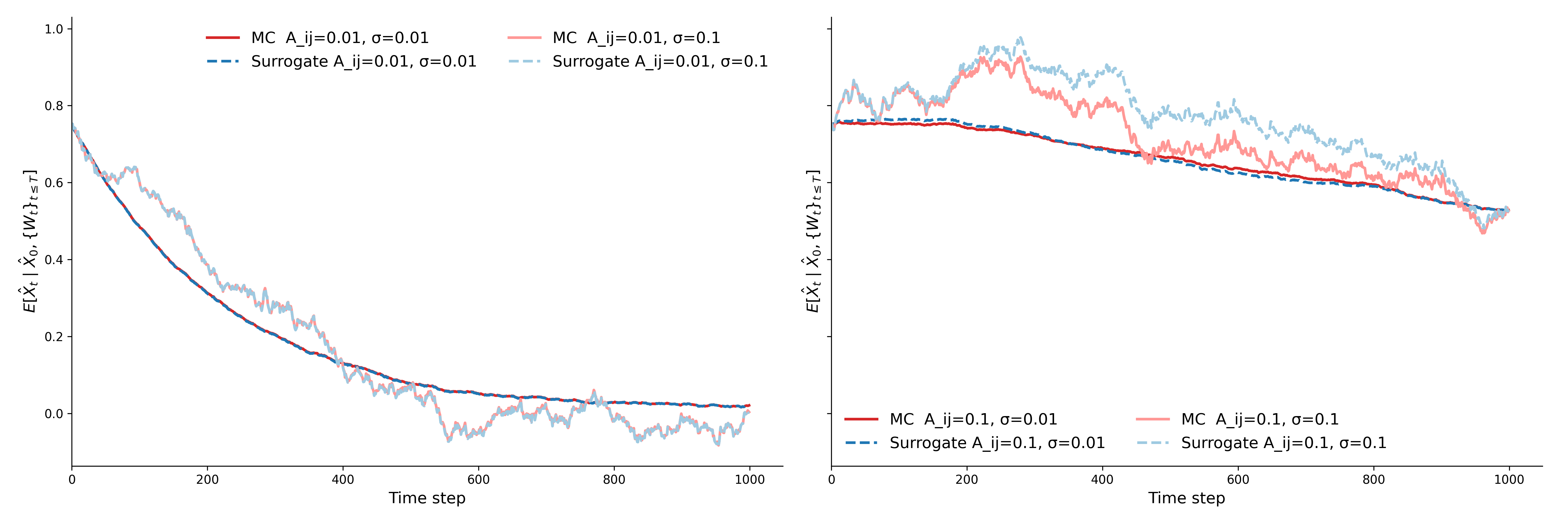}
         \caption{}
    \end{subfigure}
    \begin{subfigure}[b]{0.44\textwidth}
        \includegraphics[width=\textwidth]{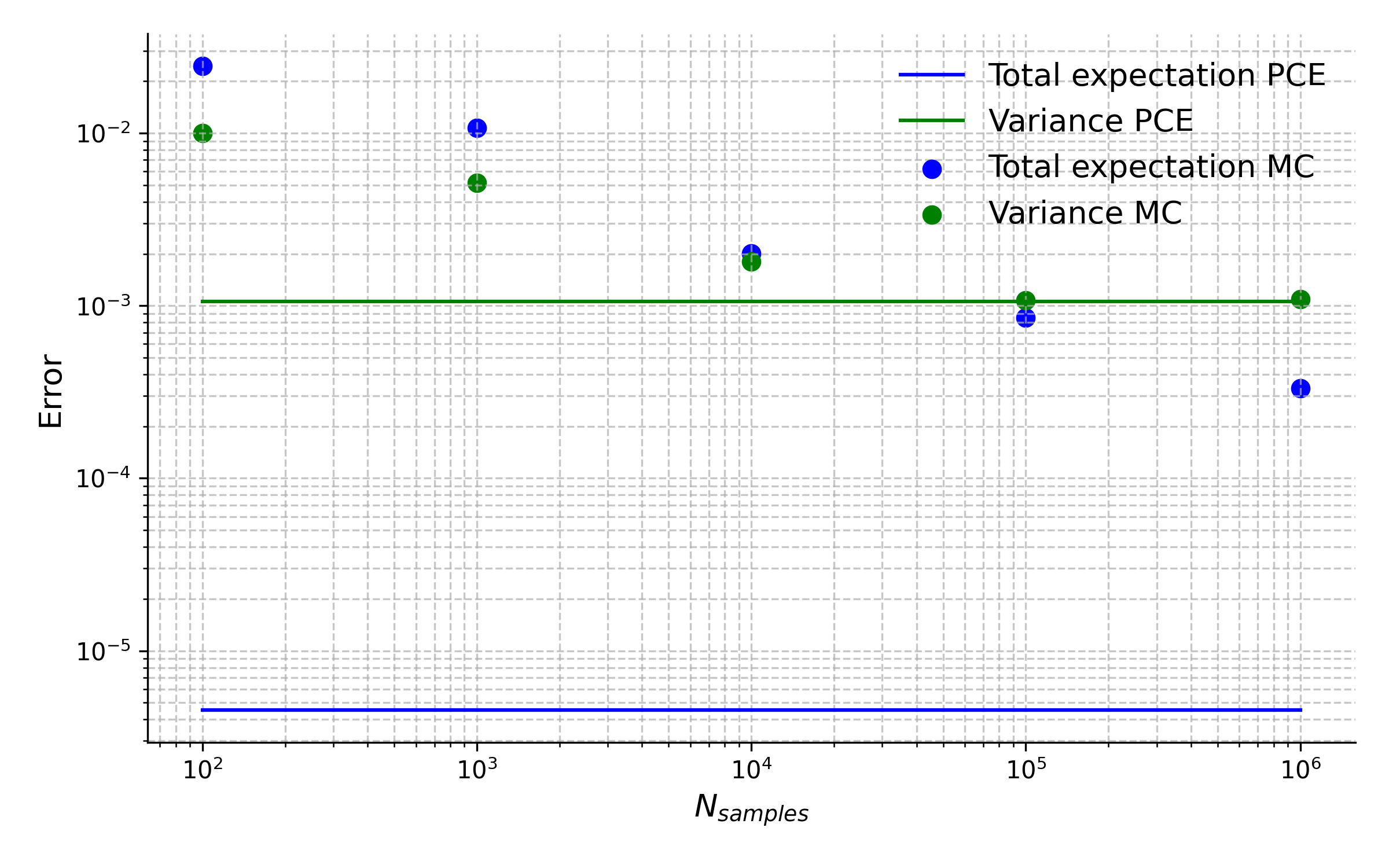}
         \caption{}
    \end{subfigure}
    \hspace{0.01cm}
    \begin{subfigure}[b]{0.44\textwidth}
        \includegraphics[width=\textwidth]{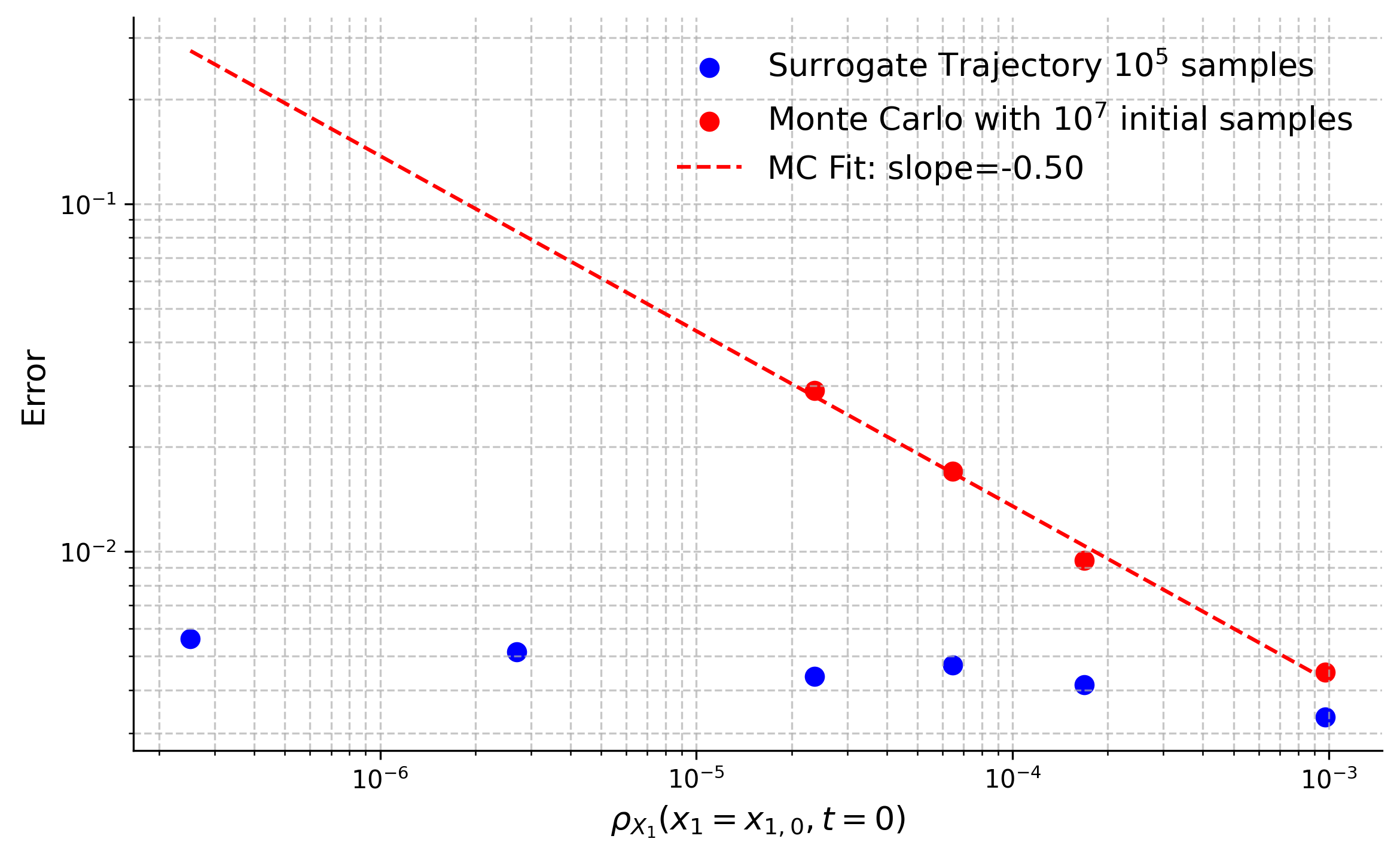}
         \caption{}
    \end{subfigure}
    \begin{subfigure}[b]{0.75\textwidth}
        \includegraphics[width=\textwidth]{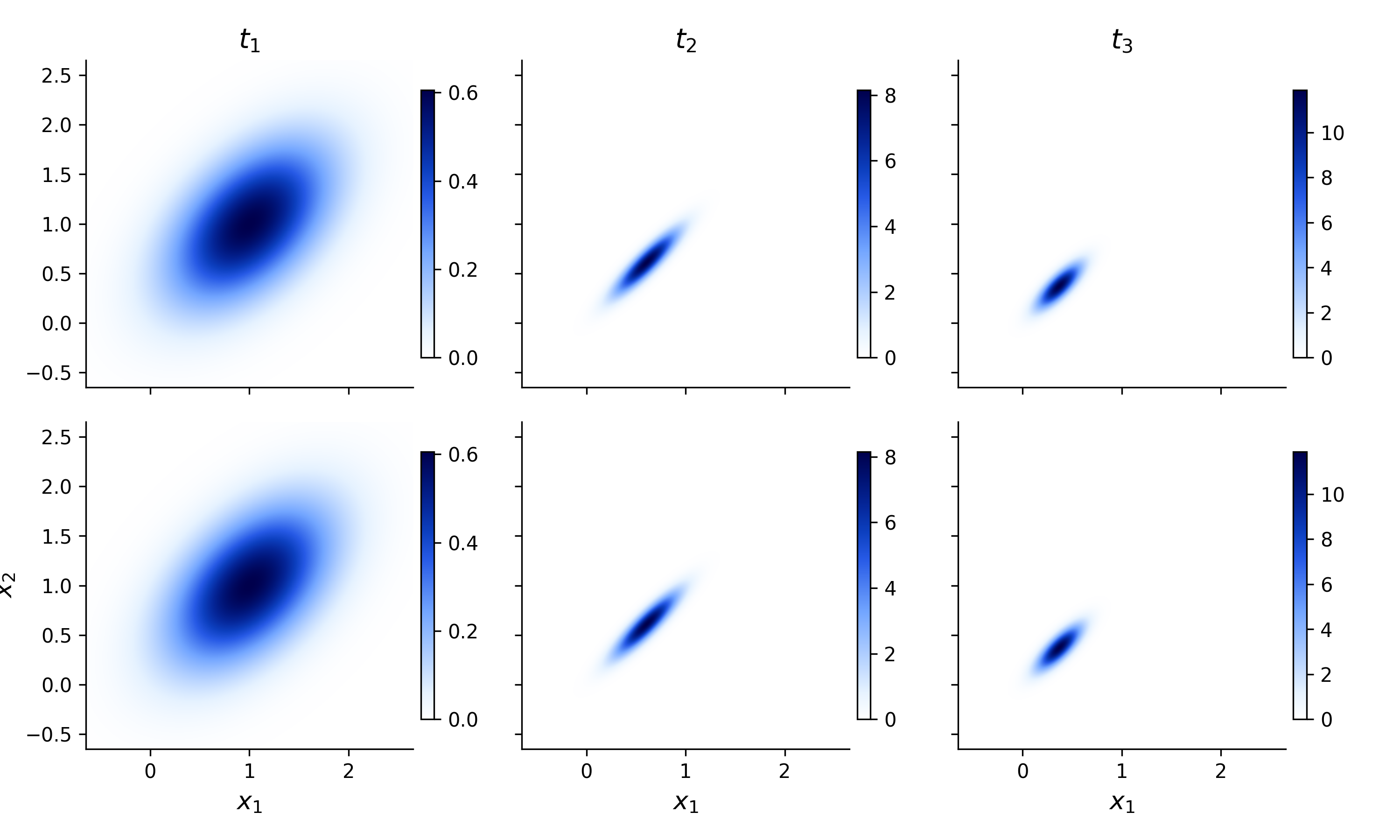}
         \caption{}
    \end{subfigure}
    
    \caption{\justifying 10D linear SDE. (a) Trajectories: weak (left) and stronger (right) interactions settings. (b) Error in total statistics $\mathbb{E}[\hat{X}_t]$ and $\mathrm{Var}(\hat{X}_t)$. (c) Error in $\mathbb{E}[\hat{X}_t\mid\hat{X}_0]$ with respect to the initial probability density of $\hat{X}_0$. (d) Evolution of the marginal joint density $\rho(x_1,x_2)$: PC (top), exact (bottom).
    }
    \label{fig:10D linear sde}
\end{figure*}

\section{Conclusions}\label{sec:conclusions}
\noindent We presented a reduction strategy that generates surrogate trajectories for selected observables while exactly preserving their marginal probability evolution. The approach relies on time-dependent optimal projections and an efficient representation of the probability flow, obtained via PC expansions combined with regression-based conditional expectations. This construction yields an explicit, trajectory-level approximation of high-dimensional dynamics that remains accurate even in low-probability regions, where sampling-based methods rapidly deteriorate. \\ \ \\
At the theoretical level, the reduced dynamics is derived from a pseudo-Markovian decomposition in which the drift and, when relevant, the diffusion coefficients are optimally projected with respect to the instantaneous probability measure. This ensures orthogonality of the unresolved fluctuations and eliminates memory terms. Under general regularity assumptions, the reduced system preserves the marginal law of the full dynamics, both for deterministic flows and Itô diffusions, and provides the best $L^2(\mu_t)$ approximation of the vector field at each time. \\ \ \\
The framework can be viewed as a probability-adapted, time-dependent reformulation of the MZ formalism. By continuously updating the projection according to the evolving law, the method suppresses the memory kernel that otherwise renders MZ reduction intractable. This perspective clarifies the role of conditional expectations in first-order optimal prediction and demonstrates how our construction extends their validity to longer time horizon.
\\ \ \\
Because it yields trajectory-level reduced models that remain statistically consistent with the full system, the method is naturally suited for nonlinear and stochastic dynamics in which only a small set of observables is physically relevant. Potential applications include multiscale systems, molecular and network dynamics, and rare-event regimes where direct Monte-Carlo methods could become ineffective. While the use of polynomial bases introduces practical limitations, these constraints can be relaxed by adopting alternative bases or data-driven approximations of the probability flow. Overall, the framework offers a principled reduction strategy grounded in statistical physics, with the potential to support analysis and computation in a broad class of complex dynamical systems.
\section*{Acknowledgments}
    \noindent The authors would like to acknowledge Prof. Fabio Nobile and Ramzi Dakhmouche for the fruitful discussions on probability flows and measure theoretic aspects.
    This project is supported by the Swiss National Science Foundation grant 212876.
\section*{Author Declarations}
\noindent The authors have no conflicts to disclose.

\section*{Data Availability Statement}
\noindent The data that support the findings of this study are openly available in Zenodo at \url{https://doi.org/10.5281/zenodo.17601996}, reference number 10.5281/zenodo.17601996.

\appendix

\section{Theoretical results}\label{app:proofs}
\noindent The working assumptions are listed in Section~\ref{sec:assumptions}.

\subsection{Conditional expectation}\label{app:condexp}
\begin{proof}[Proof of Proposition~\ref{prop:lipschitz mckean-vlasov} (Lipschitzness of Conditional Expectation)]
    Assume $f:\Gamma\to\mathbb{R}$ is globally Lipschitz. Then, for any $y,z\in\hat{\Gamma}$, we have,
    \begin{eqnarray}
        \big| \mathbb{E}_{\mu_t}[f(X)\mid \hat{X}=y] -  \mathbb{E}_{\mu_t}[f(X)\mid \hat{X}=z]\big| &=& \big| \mathbb{E}_{\mu_{t|\hat{X}=y}}[f(X)] -  \mathbb{E}_{\mu_{t|\hat{X}=z}}[f(X)]\big| \nonumber\\
        &\leq& C_{1,t}\mathcal{W}_1(\mu_{t|\hat{X}=y},\mu_{t|\hat{X}=z})\nonumber\\
        &\leq &C_t|y-z|,
    \end{eqnarray}
    where $C_{1,t},C_t>0, \forall t\geq 0$. The first inequality follows from Kantorovich-Rubinstein duality theorem \cite{Villani2009}.
\end{proof}

\subsection{Pseudo-Markovian dynamics}
\subsubsection{Exact trajectory}
\begin{proof}[Proof of Theorem~\ref{thm:pseudo-mark} (Trajectory Consistency for ODE)]\label{proof:pseudomark}
Fix $\zeta^n$ as the orthogonal complement to $\mathbb{E}_{\mu_t}[b(X)\mid \hat{X}^n]$, i.e. $\zeta^n:= \hat{b}(\hat{X}^n) - \mathbb{E}_{\mu_t}[\hat{b}(X)\mid \hat{X}^n]$, where the resolved variables $\hat{X}^n$ follow
the Pseudo-Markovian scheme,
    \begin{eqnarray}
        d\hat{X}^n=\mathbb{E}_{\mu_t}[\hat{b}(X)\mid \hat{X}_n]\Delta t + \zeta^n\Delta t, \quad \hat{X}^0 = X_{i,0}. 
    \end{eqnarray}
\begin{enumerate}
    \item Let $X(t_n)$ be the solution of
    \begin{eqnarray}
        \frac{d}{dt}X_t=b(X_t), \quad X_0\sim\mu_0(X)\in\mathcal{P}_2,
    \end{eqnarray}
    at time $t=t_n$ and let $X_i(t_n)$ be the corresponding solution for the resolved variable. Consider its approximation $X_i^n$ resulting from the first-order Euler scheme,
    \begin{eqnarray}
        dX^n=b(X^n)\Delta t, \quad X^0=X_0.
    \end{eqnarray}
    Then, from the definition of $\zeta^n$ and the Lipschitzness of $b$, we have,
    \begin{eqnarray}
        \mathbb{E}[|\hat{X}^{n+1}-X_i^{n+1}|^2] &\leq& 3\mathbb{E}[|\hat{X}^{n}-X_i^{n}|^2] + 3\mathbb{E}[|\hat{b}(\hat{X}^n)-\hat{b}(X_i^n)|^2] \Delta t^2 \nonumber\\
        &\leq& (3 + C\Delta t^2)\mathbb{E}[|\hat{X}^{n}-X_i^{n}|^2] \nonumber\\
        &\leq&  (3 + C\Delta t^2)^{n+1} \mathbb{E}[|\hat{X}^{0}-X_i^{0}|^2] = 0,
    \end{eqnarray}
    which implies $\hat{X}^n=X^n_i \quad a.s. \quad \forall n$. Then,
    \begin{eqnarray}
        \mathbb{E}[|\hat{X}^{n}-X_i(t_n)|^2] = \mathbb{E}[|{X}_i^{n}-X_i(t_n)|^2] = \mathbb{E}[|{\phi}_i^{n}(X_0)-\phi_i(t_n,X_0)|^2]
    \end{eqnarray}
    where we defined the (flow) maps $\phi^n(X_0):=X^n \ s.t. \ X^0=X_0$ and $\phi(t_n,X_0):=X(t_n) \ s.t. \ X(t_0)=X_0$. 
    Finally, since the first-order Euler scheme converges with rate $O(\Delta t)$ \cite{Süli_Mayers_2003}, we have,
    \begin{eqnarray}
        |\phi^n(X_0)-\phi(t_n,X_0)|^2 = O(\Delta t^2),
    \end{eqnarray}
    for any initial condition $X_0$. Noting that taking expectation preserves the rate, we have,
    \begin{eqnarray}
        \mathbb{E}[|\hat{X}^n-X_i(t_n)|^2]=\mathbb{E}[|X^n-X(t_n)|^2] = \mathbb{E}_{\mu_0}[|{\phi}_i^{n}(X_0)-\phi_i(t_n,X_0)|^2] = O(\Delta t^2).
    \end{eqnarray}
    \item The first equality is proved using the law of total expectation. The second equality follows from the definition of $\zeta^n$ and the last equality is shown again using the law of total expectation.
\end{enumerate}
\end{proof}

\begin{proof}[Proof of Corollary~\ref{corol:pseudo-mark-sde} (Trajectory Consistency for SDE)]\label{proof:pseudomark sde}
Fix $\zeta^n$ as the orthogonal complement to $\mathbb{E}_{\mu_t}[b(X)\mid \hat{X}^n]$, i.e. $\zeta^n:= \hat{b}(\hat{X}^n) - \mathbb{E}_{\mu_t}[\hat{b}(X)\mid \hat{X}^n]$, and $\eta^n$ as the orthogonal complement to $\mathbb{E}_{\mu_t}[\hat{\sigma}(X)\mid \hat{X}^n]\xi^n$, i.e. $\eta^n:= \hat{\sigma}(\hat{X}^n)\xi^n - \mathbb{E}_{\mu_t}[\hat{\sigma}(X)\mid \hat{X}^n]\xi^n$, 
where $\xi^n$ is a standard normal random variable and the resolved variables $\hat{X^n}$ follow
the Pseudo-Markovian scheme,
    \begin{eqnarray}
        d\hat{X}^n=\mathbb{E}_{\mu_t}[\hat{b}(X)\mid \hat{X}_n]\Delta t + \mathbb{E}_{\mu_t}[\hat{\sigma}(X)\mid \hat{X}_n]\xi^n\sqrt{\Delta t} +(\zeta^n\Delta t + \eta^n\sqrt{\Delta t}), \quad \hat{X}^0 = X_{i,0}. 
    \end{eqnarray}
\begin{enumerate}
    \item Let $X(t_n)=:F_0^{t_n}(\int_0^{t_n}dW_t,X_0)$ be the solution of
    \begin{eqnarray}
        dX_t=b(X_t)dt+\sigma(X_t) dW_t, \quad X_0\sim\mu_0(X)\in\mathcal{P}_2,
    \end{eqnarray}
    at time $t=t_n$ with fixed realization of the Wiener process $\{W_t\}_{t\leq t_n}$ and let $X_i(t_n)=:F_{i,0}^{t_n}(\int_0^{t_n}d W_t,X_0)$ be the corresponding solution for the resolved variable. Consider its approximation $X_i^n=:F_{i}^n(\{\xi_k\}_{k\leq n},X_0)$ resulting from the first-order Euler-Maruyama scheme,
    \begin{eqnarray}
        dX^n=b(X^n)\Delta t + \sigma(X^n)\xi^n\sqrt{\Delta t}, \quad X^0=X_0.
    \end{eqnarray}
    Let $\hat{X}^n=:\hat{F}^n(n,\{\xi_k\}_{k\leq n},\hat{X_0})$ and fix the realization of the random process $\xi^n$.
    Then, from the definition of $\zeta^n$ and $\eta^n$, and the Lipschitzness of $b$ and $\sigma$, we have,
    \begin{eqnarray}
        \mathbb{E}[|\hat{X}^{n+1}-X_i^{n+1}|^2] &\leq& 3\mathbb{E}[|\hat{X}^{n}-X_i^{n}|^2] + 3\mathbb{E}[|\hat{b}(\hat{X}^n)-\hat{b}(X_i^n)|^2] \Delta t^2 +3\mathbb{E}_{\mu_t}[|(\hat{\sigma}(\hat{X}^n)-\hat{\sigma}(X_i^n)){\xi^n}|^2]\Delta t\nonumber\\
        &\leq&3\mathbb{E}[|\hat{X}^{n}-X_i^{n}|^2] + 3\mathbb{E}[|\hat{b}(\hat{X}^n)-\hat{b}(X_i^n)|^2] \Delta t^2 +3\mathbb{E}_{\mu_t}[|\hat{\sigma}(\hat{X}^n)-\hat{\sigma}(X_i^n)|^2]\Delta t\nonumber\\
        &\leq& (3 + K\Delta t+ C\Delta t^2)\mathbb{E}[|\hat{X}^{n}-X_i^{n}|^2] \nonumber\\
        &\leq&  (3 + K\Delta t + C\Delta t^2)^{n+1} \mathbb{E}[|\hat{X}^{0}-X_i^{0}|^2] = 0,
    \end{eqnarray}
    which implies $\hat{X}^n=X^n_i \quad a.s. \quad \forall n$. Then,
    \begin{eqnarray}
        \mathbb{E}[|\hat{X}^{n}-X_i(t_n)|^2] = \mathbb{E}[|{X}_i^{n}-X_i(t_n)|^2] = \mathbb{E}[|\hat{F}_i^{n}(\{\xi^k\}_{k\leq n},X_0)-F_0^{t_n}(\int_0^{t_n}dW_t,X_0)|^2].
    \end{eqnarray}
    Finally, since the first-order Euler-Maruyama scheme converges with rate $O({\Delta t})$ in $L^2$, we have,
    \begin{eqnarray}
        \mathbb{E}[|\hat{X}^n-X_i(t_n)|^2]&=&\mathbb{E}[|X^n-X(t_n)|^2] \nonumber\\
        &=& \mathbb{E}_{\mu_0}[|\hat{F}_i^{n}(\{\xi^k\}_{k\leq n},X_0)-F_0^{t_n}(\int_0^{t_n}dW_t,X_0)|^2]\nonumber \\
        &=& O(\Delta t^2).
    \end{eqnarray}
    \item The first equality is proved using the law of total expectation. The second equality follows from the definition of $\zeta^n$ and $\eta^n$, and the last equality is shown again using the law of total expectation.
\end{enumerate}

\end{proof}
\subsubsection{Surrogate trajectory}
\noindent\textbf{Deterministic dynamics}\label{app:proof x ode}

\begin{proof}[Proof of Proposition~\ref{prop:deterministic} (Marginal Preservation: Deterministic Case)]
    Let the full-order state vector $X\in\Gamma \subseteq \mathbb{R}^N$. Consider the dynamical system 
    \begin{eqnarray}
        \frac{dX}{dt} = b(X), \quad X(t=0) = X_0 \sim \mu_0,
    \end{eqnarray}
    where $b:\mathbb{R}^N\to\mathbb{R}^N$. Let $\hat{X}=(X_1,\dots,X_m), \ m<N$.
    Then, for any $g(\hat{X})\in C_0^\infty(\hat{\Gamma})$, we have,
    \begin{eqnarray}
        \frac{d}{dt}g = \sum_{i=1}^N b_i\partial_{x_i}g.
    \end{eqnarray}
    Therefore,
    \begin{eqnarray}
        \int_\Gamma \frac{d}{dt}g \ d\mu = \int_\Gamma \sum_{i=1}^N b_i        \frac{d}{dt}g = \sum_{i=1}^N b_i\partial_{x_i}g \ d\mu.
    \end{eqnarray}
    Since it is assumed that $\mu(X)$ admits a density $\rho(x)$, we get,
    \begin{eqnarray}
        \int_\Gamma g(\hat{x}) \ \partial_t\rho \ dx =  \int_{\hat{\Gamma}} g(\hat{x}) \ \partial_t\hat{\rho} \ d\hat{x} &=& \int_\Gamma \sum_{i=1}^N b_i\partial_{x_i}g(\hat{x}) \rho \ dx\nonumber\\
        &=&\int_{\hat{\Gamma}} \sum_{i=1}^m\mathbb{E}_{\mu_t}[b_i\mid \hat{x}] (\partial_{x_i}g) \hat{\rho}\ d\hat{x}\nonumber\\
        &=& -\int_{\hat{\Gamma}} g \sum_{i=1}^m\partial_{x_i}(\mathbb{E}_{\mu_t}[b_i\mid \hat{x}] \hat{\rho})\ d\hat{x},
    \end{eqnarray}
    where $\hat{\rho}(\hat{x}):=\int_{\hat{\Gamma}^\perp}\rho\ d\hat{x}^\perp$. This relation holds for any $g(\hat{X})\in C_0^\infty(\hat{\Gamma})$. Therefore, $\hat{\rho}$ is a weak solution of, 
    \begin{eqnarray}\label{eq:proba flow reference}
        \partial_tp+ \sum_{i=1}^m\partial_{x_i}(\mathbb{E}_{\mu_t}[b_i\mid \hat{x}] p)=0.
    \end{eqnarray}
    Similarly, suppose $Y\in\hat{\Gamma}\subseteq\mathbb{R}^m$ satisfies the reduced dynamics,
    \begin{eqnarray}
        \frac{d}{dt}Y=\mathbb{E}_{\mu_t}[\hat{b}\mid  Y], \quad Y_0\sim \hat{\nu}_0 = \hat{\mu}_0
    \end{eqnarray}
    where $\hat{b}:=(b_1,\dots,b_m)$ and the law of $Y$ is $\hat{\nu}$, which is assumed to admit a density $f(y)$. Then, following the same reasoning, for any $h(Y)\in C_0^\infty(\hat{\Gamma})$, we have,
    \begin{eqnarray}
        \frac{d}{dt}h = \sum_{i=1}^m \mathbb{E}_{\mu_t}[b_i\mid Y]\partial_{x_i}h,
    \end{eqnarray}
    and,
    \begin{eqnarray}
         \int_{\hat{\Gamma}} h \ \partial_tf \ dy = -\int_{\hat{\Gamma}} h \sum_{i=1}^m\partial_{x_i}(\mathbb{E}_{\mu_t}[b_i\mid \hat{y}] f)\ dy.
    \end{eqnarray}
    Hence, $f$ is also a weak solution of Eq.~(\ref{eq:proba flow reference}).
    Therefore, $\hat{X}$ and $Y$ are equal in law.
    Finally, by the orthogonal projection theorem in Hilbert spaces, the time-dependent conditional expectation $\mathbb{E}_{\mu_t}[(\cdot)\mid\hat{X}=\hat{X}_t]$ provides the best $L^2$-approximation in $\hat{\mathcal{H}}_t$ of any function in $\mathcal{H}_t$, $\forall t \geq 0$. Hence, the vector field $b$ is optimally projected in a time-dependent fashion.
\end{proof}

\noindent\textbf{Stochastic dynamics}\label{app:surrogate sde}
\begin{proof}[Proof of Proposition~\ref{prop:stochastic} (Marginal Preservation: Stochastic Case)]
    Let the full-order state vector $X\in\Gamma \subseteq \mathbb{R}^N$. Consider the Itô SDE, 
    \begin{eqnarray}
        dX_t = b(X_t)dt+\sigma(X_t)dW_t, \quad X(t=0) = X_0 \sim \mu_0,
    \end{eqnarray}
    where $b:\mathbb{R}^N\to\mathbb{R}^N$ and $\sigma:\Gamma\to\mathbb{R}^{N\times N}$. Let $\hat{X}=(X_1,\dots,X_m), \ m<N$.
    Then, for any $g(\hat{X})\in C_0^\infty(\hat{\Gamma})$, we have by Itô's lemma,
    \begin{eqnarray}
        dg(X_t) = Lg(X_t)dt + \sum_{i,j=1}^N\partial_{x_i}g(\hat{X}_t)\sigma_{ij}({X}_t)dW_{j,t},
    \end{eqnarray}
    where $ L=\sum_i b_i \,\partial_{x_i} + \sum_{i,j} D_{ij}\,\partial_{x_i}\partial_{x_j}$ is the generator is of the stochastic process and $D=\tfrac12 \sigma\sigma^\top$. Taking the expectation, we obtain,
    \begin{eqnarray}
        \frac{d}{dt}\mathbb{E}[g(X_t)]=\mathbb{E}[Lg(X_t)].
    \end{eqnarray}
    Assuming the law of $X_t$ admits a density $\rho(x)$, we can write,
    \begin{eqnarray}
        \int_{\hat{\Gamma}} g(\hat{x}) \ \partial_t\hat{\rho} \ d\hat{x} 
        &=&\int_\Gamma g(\hat{x}) \ \partial_t\rho \ dx  \nonumber\\
        &=& \int_\Gamma (Lg(\hat{x})) \rho \ dx\nonumber\\
        &=& \int_\Gamma (\sum_{i=1}^N b_i\partial_{x_i}g(\hat{x}) +\sum_{i,j=1}^N D_{ij}\partial_{x_i}\partial_{x_j} g(\hat{x}))\rho \ dx \nonumber\\
        &=&\int_{\hat{\Gamma}} \left(\sum_{i=1}^m\mathbb{E}_{\mu_t}[b_i\mid \hat{x}] \partial_{x_i}g \ + \ \sum_{i,j=1}^m \mathbb{E}_{\mu_t}[D_{ij}\mid \hat{x}]\partial_{x_i}\partial_{x_j} g \right)\hat{\rho} \ d\hat{x} \nonumber\\
        &=& \int_{\hat{\Gamma}} g \left(-\sum_{i=1}^m\partial_{x_i}(\mathbb{E}_{\mu_t}[b_i\mid \hat{x}]\hat{\rho})\ + \sum_{i,j=1}^m \partial_{x_i}\partial_{x_j}(\mathbb{E}_{\mu_t}[D_{ij}\mid \hat{x}] \hat{\rho}) \right)\ d\hat{x},
    \end{eqnarray}
    where $\hat{\rho}(\hat{x}):=\int_{\hat{\Gamma}^\perp}\rho\ d\hat{x}^\perp$. This relation holds for any $g(\hat{X})\in C_0^\infty(\hat{\Gamma})$. Therefore, $\hat{\rho}$ is a weak solution of, 
    \begin{eqnarray}\label{eq:proba flow sde reference}
        \partial_tp=- \sum_{i=1}^m\partial_{x_i}(\mathbb{E}_{\mu_t}[b_i\mid \hat{x}] p)+\sum_{i,j=1}^m \partial_{x_i}\partial_{x_j}(\mathbb{E}_{\mu_t}[D_{ij}\mid \hat{x}]p).
    \end{eqnarray}
    Similarly, suppose $Y\in\hat{\Gamma}\subseteq\mathbb{R}^m$ satisfies the reduced stochastic dynamics,
    \begin{eqnarray}
        dY_t=\mathbb{E}_{\mu_t}[\hat{b}\mid  Y_t]dt + \mathbb{E}_{\mu_t}[\hat{\sigma}\mid Y_t]d\hat{W}_t, \quad Y_0\sim \hat{\nu}_0 = \hat{\mu}_0,
    \end{eqnarray}
    where $\hat{b}:=(b_1,\dots,b_m)$, $\hat{\sigma}:=\sigma_{ij}, \ i,j=\{1,\dots,m\}$ and the law of $Y_t$ is $\hat{\nu}_t$, which is assumed to admit a density $f(y)$. Then, following the same reasoning, for any $h(Y)\in C_0^\infty(\hat{\Gamma})$, we have,
    \begin{eqnarray}
        \frac{d}{dt}\mathbb{E}[h] = \mathbb{E}\bigg[\sum_{i=1}^m \mathbb{E}_{\mu_t}[b_i\mid Y]\partial_{x_i}h + \sum_{i,j=1}^m\mathbb{E}_{\mu_t}[D_{ij}\mid Y]\partial_{x_i}\partial_{x_j} h  \bigg],
    \end{eqnarray}
    and,
    \begin{eqnarray}
         \int_{\hat{\Gamma}} h \ \partial_tf \ dy = \int_{\hat{\Gamma}} h \left(-\sum_{i=1}^m\partial_{x_i}(\mathbb{E}_{\mu_t}[b_i\mid \hat{y}] f) + \sum_{i,j=1}^m\partial_{x_i}\partial_{x_j}(\mathbb{E}_{\mu_t}[D_{ij}\mid Y] f) \right)\ dy.
    \end{eqnarray}
    Hence, $f$ is also a weak solution of Eq.~(\ref{eq:proba flow sde reference}).
    Therefore, $\hat{X}$ and $Y$ are equal in law.
    Finally, by the orthogonal projection theorem in Hilbert spaces, the time-dependent conditional expectation $\mathbb{E}_{\mu_t}[(\cdot)\mid\hat{X}=\hat{X}_t]$ provides the best $L^2$-approximation in $\hat{\mathcal{H}}_t$ of any function in $\mathcal{H}_t$, $\forall t \geq 0$. Hence, the generator is optimally projected in a time-dependent fashion.
\end{proof}
\section{Error estimate}\label{app:error}
\subsection{Short-time limit}
\noindent To derive the short-time limit of Eq.~(\ref{eq:error analytical}), we write its second-order Taylor expansion about $t=0$. We first consider, 
\begin{eqnarray}
    \overline{F}(t) = \overline{F}(0) + t\overline{F}'(0) + O(t^2),\\
    Q(t) = Q(0) + tQ'(0) + O(t^2).
\end{eqnarray}
Since the propagator \(\Phi(t, \tau)\) satisfies the differential equation,
\[
\frac{d}{dt} \Phi(t, \tau) = Q(t) \Phi(t, \tau), \quad \Phi(\tau, \tau) = I,
\]
we can expand \(\Phi(t, \tau)\) for \(t\) close to \(\tau = 0\). We write the time-ordered exponential as,
\[
\Phi(t, \tau) = I + \int_{\tau}^{t} Q(s) \, ds + \frac{1}{2} \int_{\tau}^{t} \int_{\tau}^{s_1} Q(s_1) Q(s_2) \, ds_2 \, ds_1 + O(t^3).
\]
Using,
\[
Q(s) = Q(0) + s Q'(0) + \cdots,
\]
we obtain,
\[
\int_{\tau}^{t} Q(s) \, ds = Q(0)(t - \tau) + \frac{1}{2} Q'(0)(t^2 - \tau^2) + O(t^3, \tau^3)
\]
\[
\frac{1}{2} \int_{\tau}^{t} \int_{\tau}^{s_1} Q(s_1) Q(s_2) \, ds_2 \, ds_1 = \frac{1}{2} Q(0)^2(t - \tau)^2 + O((t-\tau)^3).
\]
Therefore, up to second order, we have,
\[
\Phi(t, \tau) \approx I + (t - \tau) Q(0) + \frac{1}{2} (t - \tau)^2 Q(0)^2 + \frac{1}{2}Q'(0)(t^2-\tau^2) + O(t^3).
\]
We can now use this expression along with the Taylor expansion of $\overline{F}(t)$ in Eq.~(\ref{eq:error analytical}). We get the first and second order contributions,
\[
-\int_0^t I \overline{F}(0) = -t\overline{F}(0),
\]
\[
-\int_0^t (t - \tau) Q(0) \overline{F}(0) \, d\tau = -Q(0) \overline{F}(0) \int_0^t (t - \tau) d\tau = -Q(0) \overline{F}(0) \frac{t^2}{2},
\]
\[
-\int_0^t \tau \, \overline{F}'(0) \, d\tau = -\frac{t^2}{2} \, \overline{F}'(0),
\]
We get the final expression,
\begin{eqnarray}
    e(t) = -t\overline{F}(0) - \frac{t^2}{2}[Q(0)\overline{F}(0)+\overline{F}'(0)] + O(t^3).
\end{eqnarray}
However, from the definition of $\overline{F}(t)$, we get $\overline{F}(0)=0$, and hence,
\begin{eqnarray}
    e(t) =  -\frac{t^2}{2}\overline{F}'(0) + O(t^3).
\end{eqnarray}
In the special case where the drift is linear and the initial probability measure Gaussian, we can derive an explicit expression for the time derivative $\overline{F}'(0)$. By definition, we have
\[
F(t)=B_{res}(\mathbb{E}[ \ X_t \ | \ \hat{X}_0 \ ] - \mathbb{E}[ \ \mathbb{E}[\ X_t \ | \ \hat{X}_t\ ]\ | \hat{X}_0 \ ]),
\]
where $B_{res} = [B_{res,res}, B_{res,unres}]$ is the matrix formed out of the first $m$ rows of $B$, where we recall that $m$ is the number of resolved variables. We have,
\[
\frac{d}{dt}\mathbb{E}[ \ X_t \ | \ \hat{X}_0 \ ] = B\mathbb{E}[ \ X_t \ | \ \hat{X}_0 \ ],
\]
where we used $X_t = e^{Bt}X_0$. Furthermore,
\[
\frac{d}{dt} \mathbb{E}\left[ \mathbb{E}[X_t \mid \hat{X}_t] \,\big|\, \hat{X}_0 \right]
= \mathbb{E}\left[ \frac{d}{dt} \mathbb{E}[X_t \mid \hat{X}_t] \,\big|\, \hat{X}_0 \right].
\]
Using,
\[
\mathbb{E}[X_t \mid \hat{X}_t] = \mu_t + C(t)(\hat{X}_t - \hat{\mu}_t),
\]
\[
C(t) = \Sigma_{all, res}(t)\Sigma^{-1}_{res,res}(t) = Cov(X_t,\hat{X}_t)Cov(\hat{X}_t)^{-1},
\]
where the subscript $(\cdot)_{all,res}$ indicates the sub-matrix obtained by taking all the rows but only the columns corresponding to the resolved components of the original matrix, we obtain,
\[
\frac{d}{dt} \mathbb{E}[X_t \mid \hat{X}_t] = \dot{\mu}_t + \dot{C}(t)(\hat{X}_t - \hat{\mu}_t) + C(t) (B_{res}\hat{X}_t-\dot{\hat{\mu}}_t).
\]
Let us then compute,
\[
\overline{F}'(0) = B_{\text{res}} \left( \left. \frac{d}{dt} \mathbb{E}[X_t \mid \hat{X}_0] \right|_{t=0} 
- \left. \mathbb{E}\left[ \frac{d}{dt} \mathbb{E}[X_t \mid \hat{X}_t] \,\big|\, \hat{X}_0 \right] \right|_{t=0} \right).
\]
At $t=0$, we have,
\[
\mu_0 = m_0, \quad \dot{\mu}_0 = B m_0, \quad \hat{X}_0 \text{ fixed}.
\]
The second term becomes,
\[
\mathbb{E}\left[ \dot{\mu}_0 + \dot{C}(0)(\hat{X}_0 - \hat{\mu}_0) + C(0)(B_{res}\hat{X_0} -\dot{\hat{\mu}}_0) \,\big|\, \hat{X}_0 \right]
= B m_0 + \dot{C}(0)(\hat{X}_0 - \hat{m}_0) + C(0)B_{res}(\hat{X}_0 - \hat{m}_0).
\]
Putting everything together,
\begin{eqnarray}
\overline{F}'(0) &=& B_{\text{res}} \left( Bm_0 + C(0)B_{res}(\hat{X_0}-\hat{m}_0) - \left( B m_0 + \dot{C}(0)(\hat{X}_0 - \hat{m}_0) + C(0) B_{res} (\hat{X}_0 - \hat{m}_0) \right) \right)\nonumber\\
 &=& -B_{res} \dot{C}(0)(\hat{X}_0 - \hat{m}_0).   
\end{eqnarray}
\subsection{Stochastic dynamics}\label{app:error sde}
\noindent We generalize the analysis of the modeling error, arising from the approximation of the evolution of the quantity $\mathbb{E}[\hat{X}_t\mid \hat{X}_0]$ with our pseudo-Markovian reduced dynamics Eq.~\eqref{eq:optimal prob consistent sde} , when the latter is built on the stochastic process, Eq.~(\ref{eq:sde assumption}), at the micro-scale. We derive the result for a constant noise coefficient matrix $\sigma\in \mathbb{R}^{N\times q}$. In this case $\hat{\sigma}=\sigma_{res}$, where $\hat{\sigma}:\frac{1}{2}\sigma\sigma^T=D_{res}$ and $\sigma_{res}$ and $D_{res}$ are respectively the sub-matrices of $\sigma$ and $D$ formed by selecting the rows and columns corresponding to the resolved components. 
To compute the modeling error, we start by writing the exact evolution of the resolved components $\hat{X}=(X_1,\dots,X_m)\in\mathbb{R}^m$,
\begin{eqnarray}\label{eq:exact resolved sde}
    \frac{d}{dt}\hat{X}^{exact}_t = \mathbb{E}_{\mu_t}[\hat{b}(X^{exact})\mid\hat{X}^{exact}_t] + \hat{\sigma}dW_t  + F_t,
\end{eqnarray}
with the initial condition $\hat{X}^{exact}_0 = \hat{X}_0$ and
where the $\mathbb{E}_{\mu_t}[\:(\cdot)\mid\hat{X}_t\:]$-orthogonal random force $F_t \in \mathbb{R}^m$ is simply the orthogonal complement of our optimal projection of the resolved components $\hat{b}\in\mathbb{R}^m$ of the drift vector $b \in \mathbb{R}^N$. 
The unresolved variables $\tilde{X}_0^{exact} \in\mathbb{R}^{N-m}$ are sampled from $\mu_{0|\hat{X}_0}$, the initial probability measure conditioned on $\hat{X}_0$.  
A surrogate trajectory $\hat{X}^S_t$ is a solution of Eq.~(\ref{eq:optimal prob consistent sde}), with initial condition $\hat{X}^{S}_0 = \hat{X}_0$.
It is an approximation of the quantity $\mathbb{E}[\hat{X_t}\mid\hat{X}_0]$, whose exact evolution is derived by taking the conditional expectation of Eq.~(\ref{eq:exact resolved sde}), given $\hat{X}_0$,
\begin{eqnarray}
    \frac{d}{dt}\mathbb{E}[\hat{X}^{exact}_t\mid\hat{X}_0] = \mathbb{E}\left[\mathbb{E}_{\mu_t}[\hat{b}(X^{exact})\mid\hat{X}^{exact}_t] \:\bigg|\: \hat{X}_0 \right] + \mathbb{E}[F_t\mid\hat{X}_0],\\
    \mathbb{E}[\hat{X}_0^{exact}\mid\hat{X}_0] = \hat{X}_0.
\end{eqnarray}
Defining the error as $e(t)=\hat{X}^S_t-\mathbb{E}[\hat{X}^{exact}_t\mid\hat{X}_0]$, its evolution is governed by,
\begin{eqnarray}
    \frac{d}{dt}e = \mathbb{E}_{\mu_t}[\hat{b}(X^S)\mid\hat{X}^S_t] - \mathbb{E}\left[\mathbb{E}_{\mu_t}[\hat{b}(X^{exact})\mid\hat{X}^{exact}_t] \:\bigg|\: \hat{X}_0 \right] - \mathbb{E}[F_t\mid\hat{X}_0],\\
    e(0)=0.
\end{eqnarray}
For a linear system governed by $\frac{d}{dt}X=BX+\sigma dW_t$, $B \in\mathbb{R}^{N\times N}$, with initial Gaussian probability measure $\mathcal{N}_X(m_0,\Sigma_0)$, we obtain,
\begin{align}
    \frac{\mathrm{d}}{\mathrm{d}t} e(t) 
    &= Q(t)\, e(t) - \overline{F}(t), \label{eq:evolution_error sde} \\
    \overline{F}(t) 
    &:= \mathbb{E} \left[ F_t \,\mid \hat{X}_0 \right], \label{eq:mean_force_stoch} \\
    Q(t) 
    &:= B_{\mathrm{res},\,\mathrm{res}} 
    + B_{\mathrm{res},\,\mathrm{unres}} 
    \Sigma(t)_{\mathrm{unres},\,\mathrm{res}} 
    \Sigma(t)^{-1}_{\mathrm{res},\,\mathrm{res}}, \label{eq:Q_def_sde}
\end{align}
where, $m_0 \in \mathbb{R}^N$ and $\Sigma_0 \in \mathbb{R}^{N\times N}$ are the initial mean and positive semi-definite covariance matrix, respectively.
We defined $B_{\mathrm{res},\,\mathrm{res}} \in \mathbb{R}^{m \times m}$ as the $m\times m$ sub-matrix of $B$, obtained by selecting the rows and columns corresponding to the resolved variables. Similarly, $B_{\mathrm{res},\,\mathrm{unres}} \in\mathbb{R}^{m\times (N-m)}$ is the $m\times (N-m)$ sub-matrix of $B$, obtained by selecting the rows of $B$ corresponding to the resolved variables and the columns of $B$ corresponding to the unresolved variables. The same logic applies to the other sub-matrices appearing in Eq.~(\ref{eq:Q_def}). Moreover, $e^{Bt}\in \mathbb{R}^{N \times N}$ is the matrix exponential, $e^{Bt}=\sum_{k=0}^\infty \frac{t^kB^k}{k!}$.
The covariance matrix $\Sigma(t)$ is governed by the Lyapunov equation,
\begin{eqnarray}
    \frac{d}{dt}\Sigma = B\Sigma^T + \Sigma B^T + \sigma\sigma^T,
\end{eqnarray}
whose explicit solution is given by,
\begin{eqnarray}
    \Sigma(t)=e^{Bt}\Sigma(0)e^{B^Tt} + \int_0^te^{B(t-s)}\sigma\sigma^Te^{B^T(t-s)}ds.
\end{eqnarray}
Finally, the solution of Eq.~(\ref{eq:evolution_error sde}) is given by,
\begin{eqnarray}\label{eq:error analytical sde}
    e(t) = -\int_0^t \Phi(t,\tau) \overline{F}(\tau)  d\tau,
\end{eqnarray}
where we defined the time-ordered exponential,
\begin{eqnarray}
    \Phi(t,\tau) = \mathcal{T}_{exp}\left(  \int_\tau^t Q(s)ds \right).
\end{eqnarray}
Equation~(\ref{eq:error short}) is therefore still valid in this stochastic setting provided Eq.~(\ref{eq:Q_def_sde}) is used for the definition of $Q(t)$.

\section{Numerical details}\label{app:numerical details}
\subsection{Initial covariance matrix}
\noindent The initial covariance matrix used in the numerical experiments presented in Sec \ref{sec:linear ode gaussian} and \ref{sec:linear sde} is given by,
\[
\Sigma_0 =
\begin{pmatrix}
0.3025 & 0.1500 & 0.1200 & 0.0550 & 0.0200 & 0.0000 & 0.0200 & 0.0550 & 0.1200 & 0.2050 \\
0.1500 & 0.3025 & 0.1500 & 0.1200 & 0.0550 & 0.0200 & 0.0000 & 0.0200 & 0.0550 & 0.1200 \\
0.1200 & 0.1500 & 0.3025 & 0.1500 & 0.1200 & 0.0550 & 0.0200 & 0.0000 & 0.0200 & 0.0550 \\
0.0550 & 0.1200 & 0.1500 & 0.3025 & 0.1500 & 0.1200 & 0.0550 & 0.0200 & 0.0000 & 0.0200 \\
0.0200 & 0.0550 & 0.1200 & 0.1500 & 0.3025 & 0.1500 & 0.1200 & 0.0550 & 0.0200 & 0.0000 \\
0.0000 & 0.0200 & 0.0550 & 0.1200 & 0.1500 & 0.3025 & 0.1500 & 0.1200 & 0.0550 & 0.0200 \\
0.0200 & 0.0000 & 0.0200 & 0.0550 & 0.1200 & 0.1500 & 0.3025 & 0.1500 & 0.1200 & 0.0550 \\
0.0550 & 0.0200 & 0.0000 & 0.0200 & 0.0550 & 0.1200 & 0.1500 & 0.3025 & 0.1500 & 0.1200 \\
0.1200 & 0.0550 & 0.0200 & 0.0000 & 0.0200 & 0.0550 & 0.1200 & 0.1500 & 0.3025 & 0.1500 \\
0.2050 & 0.1200 & 0.0550 & 0.0200 & 0.0000 & 0.0200 & 0.0550 & 0.1200 & 0.1500 & 0.3025
\end{pmatrix}
\]

\subsection{Initial bimodal probability measure and its PC expansion}\label{app:bimodal}
\noindent Given an arbitrary initial probability measure, computing the corresponding PC coefficients can be a complicated task. 
To approximate the PC coefficients of the probability measure $\nu$ in Eq.~(\ref{eq:bimodal measure}), we employ a data-driven approach. 
First, we train a residual neural network with a Sinkhorn loss to learn a transport map between a Gaussian distribution and our target mixture of Gaussian distributions \cite{feydy2019interpolating,feydy2019geomloss}. 
Next, this enables us to fit this transport map with a truncated PC expansion (Hermite polynomials up to total order 3). 
The PC expansion coefficients are solution of 
the following optimization problem,  
\sloppy
\begin{eqnarray}
    c^* = \operatorname*{arg\,min}_{c} \mathcal{L}(c),
\end{eqnarray}
with $c=(c_\alpha), c_\alpha\in\mathbb{R}^2, \alpha\in\mathcal{I}_3^2$
and the loss function,
\begin{eqnarray}
    \mathcal{L} = \lambda\, \hat{d}_{L^2}\big(X_{\text{NN}}(\xi),\, M_0(\xi; c)\big)
     + \lambda_b\, \hat{d}_{L^2}\big(b(X_{\text{NN}}(\xi)),\, b(M_0(\xi; c)\big), \\
    \hat{d}_{L^2}\big((X_i, Y_i)\big) := \left( \frac{1}{N_s} \sum_{i=1}^{N_s} \| X(\xi_i) - Y(\xi_i) \|^2 \right)^{1/2}, \nonumber  \\
    \xi_i \sim \mathcal{N}(0, I_N), \quad i = 1, \dots, N_s, \nonumber 
\end{eqnarray}
where $X(\xi), Y(\xi): \mathbb{R}^2\rightarrow\mathbb{R}^2$.
The map $X_{NN}(\xi)$ is the neural network trained in the first step. $M_0(\xi; c)$ is the truncated PC expansion whose coefficients are the arguments of our optimization problem. 
The optimization problem hence amounts to fit the distribution of $X\in\mathbb{R}^2$ with the truncated PC expansion while penalizing configuration incompatible with the distribution of the drift $b\in\mathbb{R}^2$. The relative importance of both distributions is controlled through the weights $\lambda,\lambda_b\geq0$. Accurate statistics of $b$ are indeed primordial for the reduced dynamics.

\subsection{Solution algorithm}
\begin{algorithm}[H]
\caption{Surrogate trajectories along probability flows}
\label{alg:solution algo}
\begin{algorithmic}[1]
\Require System dimension $N\in\mathbb{N}^+$, number of resolved variables $m<N$, polynomial drift coefficient $b\in\mathbb{R}^N$, constant noise coefficient matrix $\sigma\in\mathbb{R}^{N\times N}$, initial full-order probability measure $\mu_0$, initial condition for the resolved variables $\hat{X}_0\in \mathbb{R}^m$, time step $\mathrm{dt}>0$, number of samples $N_s\in\mathbb{N}^+$, PC expansion truncation total order $p\in\mathbb{N}^+$, conditional expectation polynomial expansion truncation total order $r\in\mathbb{N}^+$, quadrature points $\{q_k\}_{k\in K}$

\Statex 

\Statex \textbf{Offline Phase: probability flow}
\State Fit initial PC coefficients $c_j(0), \forall j\in \mathcal{I}_p^d$ to $\mu_0$

\For{$\tau = 0$ to $n_\tau$}
    \State Gaussian integrals: $I^{(1)}_{j,i}\leftarrow\mathbb{E}[b_iH_j]$ with analytical expressions
    \State Inverse of Jacobian matrix $(\frac{\partial M}{\partial \xi})^{-1}$ of the truncated PC expansion at quadrature points $\{q_k\}_{k\in K}$
    \State Gaussian integrals: $I^{D}_{j,i}\leftarrow\frac{1}{2} \mathbb{E} [ 
    \sigma^2 
    \sum_{l=1}^N\frac{\partial M_{t,l}^{-1}}{\partial \xi_i} 
    \frac{\partial H_j}{\partial \xi_l} 
    ]$ with Gauss-Hermite quadrature
    \State Euler increment: $c_{j,i}((\tau+1)\mathrm{dt}) \leftarrow c_{j,i}(\tau\mathrm{dt}) + I^{(1)}_{j,i} \mathrm{d}t + I^{D}_{j,i}\mathrm{d}t$
\EndFor
\vspace{0.2cm}
\Statex \textbf{Online Phase: surrogate trajectories}
\State $\hat{\sigma}_{i,j}=\sigma_{i,j}, \: i,j=\{1,\dots,m\}$
\For{$s=1$ to $N_s$}
    \State Initial condition: $\hat{X}\leftarrow\hat{X}_0$
    \For{$T = 0$ to  $n_\tau$}
        \State Gaussian integrals: $I^{(2)}$ and $I^{(3)}$ with analytical expressions
        \State Solve linear system: $\sum_{l\in\mathcal{J}_r^m}I^{(2)}_{k,l}\ \Bar{c}_{\alpha,j}(t) = I^{(3)}_{l,j}, \ \forall  j=\{1,..m\}, \ k\in \mathcal{J}_r^m$
        \State Polynomial expansion: $\mathbb{E}[\hat{b} \mid \hat{X}]\leftarrow\sum_{\beta\in\mathcal{J}_r^m}\Bar{c}_\alpha\Bar{H}_{\alpha}(\hat{X})$ 
        \State Wiener increment: $dW \sim \sqrt{dt}\ \mathcal{N}(0,I_m)$
        \State Euler-Maruyama increment: $\mathrm{d}\hat{X} \leftarrow \mathbb{E}[b \mid \hat{X}]\,\mathrm{dt} + \hat{\sigma}\,\mathrm{d}\hat{W}$
        \State $\hat{X}^{(s)} \leftarrow \hat{X}^{(s)} + \mathrm{d}\hat{X}$
    \EndFor
\EndFor
\State Averages: $\mathbb{E}[\hat{X}_{t=Tdt}\mid\hat{X}_0]\leftarrow \frac{1}{N_s}\sum_{s=1}^{N_s}X^{(s)}_{Tdt}, \ T=\{0,\dots,n_\tau\}$ 
\end{algorithmic}
\end{algorithm}

\newpage

\bibliographystyle{siam}
\bibliography{ref.bib}

\end{document}